\definecolor{myurlcolor}{rgb}{0,0,0.7}
 \definecolor{BLACK}{gray}{0}
 \definecolor{WHITE}{gray}{1}
 \definecolor{RED}{rgb}{1,0,0}
 \definecolor{GREEN}{rgb}{0,1,0}
 \definecolor{darkgreen}{rgb}{0,0.8,0}
 \definecolor{BLUE}{rgb}{0,0,1}
 \definecolor{CYAN}{cmyk}{1,0,0,0}
 \definecolor{MAGENTA}{cmyk}{0,1,0,0}
 \definecolor{YELLOW}{cmyk}{0,0,1,0}
\theoremstyle{plain}
\newtheorem{thm}{\protect\theoremname}
\theoremstyle{plain}
\newtheorem{prop}[thm]{\protect\propositionname}
\newenvironment{proof}[1][\protect\proofname]{\par
\normalfont\topsep6\p@\@plus6\p@\relax
\trivlist
\itemindent\parindent
\item[\hskip\labelsep
\scshape
#1]\ignorespaces
}{%
\endtrivlist\@endpefalse
}
\providecommand{\proofname}{Proof}
\theoremstyle{plain}
\newtheorem{lem}[thm]{\protect\lemmaname}
\newtheorem{theorem}[thm]{\protect\theoremname}
\newtheorem{defi}[thm]{\protect\definitionname}
\newtheorem{corollary}[thm]{\protect\corollaryname}
\newtheorem*{theorem*}{(Theorem}
\newtheorem*{prop*}{(Proposition}
\newtheorem*{lemma*}{(Lemma}
\newtheorem*{defi*}{(Definition}
\newtheorem*{corollary*}{Corollary}
\providecommand{\lemmaname}{Lemma}
\providecommand{\theoremname}{Theorem}
\providecommand{\definitionname}{Definition}
\providecommand{\propositionname}{Proposition}
\providecommand{\corollaryname}{Corollary}
\newcommand{\tr}{{\operatorname{Tr\,}}}
\newcommand{\nc}{\newcommand}
\nc{\rnc}{\renewcommand}
\def\ket#1{| #1 \rangle}
\def\bra#1{\langle  #1 |}
\def\proj#1{| #1 \rangle\!\langle #1 |}
\nc{\avg}[1]{\langle#1\rangle}
\newcommand{\scp}[2]{\langle  #1 |#2 \rangle}
\newcommand{\ketbra}[2]{|#1\rangle \! \langle #2|}
\begin{document}

\title{Quantum Operations in an Information Theory for Fermions}
 
\author{Nicetu Tibau Vidal}
\email{anicet.tibauvidal@physics.ox.ac.uk}
\affiliation{University of Oxford, Clarendon Laboratory, Atomic and Laser Physics Sub-Department. Oxford, Oxfordshire, United Kingdom  }

\author{Mohit Lal Bera}
\affiliation{ICFO -- Institut de Ciencies Fotoniques, The Barcelona Institute of Science and Technology, ES-08860 Castelldefels, Spain}

\author{Arnau Riera}
\affiliation{Institut el Sui, Carrer Sant Ramon de Penyafort, s/n, 08440 Cardedeu (Barcelona), Spain}
\affiliation{ICFO -- Institut de Ciencies Fotoniques, The Barcelona Institute of Science and Technology, ES-08860 Castelldefels, Spain}
\affiliation{Max-Planck-Institut f\"ur Quantenoptik, D-85748 Garching, Germany}

\author{Maciej Lewenstein} 
\affiliation{ICFO -- Institut de Ciencies Fotoniques, The Barcelona Institute of Science and Technology, ES-08860 Castelldefels, Spain}
\affiliation{ICREA, Pg.~Lluis Companys 23, ES-08010 Barcelona, Spain} 

\author{Manabendra Nath Bera}
\email{mnbera@gmail.com}
\affiliation{Department of Physical Sciences, Indian Institute of Science Education and Research (IISER), Mohali, Punjab 140306, India}

\begin{abstract}
A reasonable quantum information theory for fermions must respect the parity super-selection rule to comply with the special theory of relativity and the no-signaling principle. This rule restricts the possibility of any quantum state to have a superposition between even and odd parity fermionic states. It thereby characterizes the set of physically allowed fermionic quantum states. Here we introduce the physically allowed quantum operations, in congruence with the parity super-selection rule, that map the set of allowed fermionic states onto itself. We first introduce unitary and projective measurement operations of the fermionic states. We further extend the formalism to general quantum operations in the forms of Stinespring dilation, operator-sum representation, and axiomatic completely-positive-trace-preserving maps. We explicitly show the equivalence between these three representations of fermionic quantum operations. We discuss the possible implications of our results in characterization of correlations in fermionic systems.
\end{abstract}
\maketitle

\section{Introduction}
Information theory lays one of the foundations of modern science and technologies. In particular, its classical counterpart has revolutionized the realm of information technology, since the time Shannon introduced the classical framework to treat information \cite{Shannon48a, Shannon48b}. The recent developments in quantum technology and understanding of the quantum nature of information have sparked the possibility of information technology that could outperform the classical one \cite{Nielsen00} in presence of quantum resources, such as entanglement \cite{Horodecki09}. The understanding and quantification of quantum information and resources in distinguishable systems, e.g., qudits, rely on the underlying Hilbert space's separability. The global space of composite systems becomes a tensor-product of local Hilbert spaces. However, this assumption falls short when one considers information theory beyond distinguishable systems, as in indistinguishable particle systems, e.g., fermions. Physical systems that resemble qudit-like structures are very limited in nature and generally adhere to quantum mechanics' first quantization. On the other hand, much of the physical systems are understood in terms of indistinguishable particles and quantum fields, using the framework of second quantization. Therefore it is essential to extend information theory to the indistinguishable particles and the quantum fields.  

The inadequacy of the existing framework of quantum information theory (QIT) results from the fact that, for indistinguishable particles, the underlying global Hilbert spaces are either symmetrized (for bosons) or anti-symmetrized (for fermions) part of local (particle or mode) Hilbert spaces. Therefore, Hilbert spaces do not satisfy separability as for qudit cases. To lay a consistent formulation of QIT for indistinguishable particles is essential to understand quantum operations and states, be it local or global, in these ``non-separable'' Hilbert spaces. The latter is associated with entanglement, which has been studied quite extensively in the last few decades. 

For fermions, the literature considers two different approaches to characterize quantum information and resources. Authors differ on whether they can impose separability on the Hilbert spaces from particles or modes' perspectives. In characterizing entanglement,  in \cite{Schliemann01, Schliemann01a, Eckert02, Wiseman03, Ghirardi04, Kraus09, Plastino09, Iemini13, Iemini14, Iemini15, Debarba17, Oszmaniec13, Oszmaniec14, Oszmaniec14a, Sarosi14, Sarosi14a, Majtey16, Kruppa:2020rfa} they assume the \emph{particle} approach, where they consider the indistinguishable particles as subsystems, and the states are the elements of the corresponding local Hilbert spaces. Another approach is based on  \emph{mode} \cite{Zanardi02, Shi03, Friis13, Benatti14, Puspus14, Tosta:2018iqh, Shapourian, Shapourian:2018ozl, Li:2018xon, Szalay, Debarba_2020}, where subsystems are considered to be single-particle modes. One can crudely say that while the notion of particle-based entanglement relies on and exploits quantum mechanics' first-quantization, which strictly conserves particle numbers, the mode-based entanglement is exclusively based on second-quantization of quantum mechanics (using creation and annihilation operators) where particle number may or may not be conserved. It is now clear that the mode-based characterization is more ``reasonable'' in characterizing fermionic entanglement. In this view, one has to impose anti-symmetrization due to the creation and annihilation operators' anti-commutation relations. Moreover, one also has to ascertain that physically allowed fermionic states have to satisfy \emph{parity super-selection rules} \cite{Friis13}. 

In this work, we adhere to the reasonable mode-based approach to fermionic information theory and characterize the physically allowed quantum operations that map the set of fermionic states onto itself. One of our main results proves that all physically allowed unitary operations applicable to fermionic systems must respect the parity super-selection rule to comply with the no-signaling principle. We introduce the general form of reasonable projective operations concerning quantum measurements. The operations are then further extended to the general quantum operations in terms of three equivalent formalisms based on Stinespring dilation, operator-sum-representation (or Kraus representation), and completely-positive-trace-preserving (CPTP) maps. We also discuss the implications of our finding in characterization of correlations present in fermionic systems.

The structure of the paper is as follows. In section \ref{sec:febo} we present a mathematical formalization of the fermionic state space. We precisely define the $\wedge$ product (analogous to $\otimes$), the fermionic Hilbert space, and the parity super-selection rule (SSR). We also give explicit forms to the physically allowed states and observables respecting the parity SSR. The section \ref{sec:quanop} is devoted to the characterization of allowed fermionic quantum operations. In particular, we describe the partial tracing process, the linear operators that preserve the states' parity SSR form, and the unitary and projective operators. We further present the general quantum operations in terms of Stinespring dilation, operator-sum representation, and axiomatic completely-positive-trace-preserving maps.  In section \ref{sec:discussion}, we make a brief discussion on the implications of our findings in fermionic entanglement theory. In particular, we characterize uncorrelated and correlated states, where we also consider the Schmidt decomposition and the purification of states in the context of the parity SSR. A conclusion is made in section \ref{sec:conclusions}.

\section{Fermionic state space}
\label{sec:febo}
Let us consider a set of fermionic modes $I_f$ and denote their corresponding creation and annihilation operators by $f_i^\dagger$ and $f_i$ respectively for any mode $i\in I_f$. These creation and annihilation operators satisfy the standard anti-commutation relations
\begin{align}
\{f_i,f_j\}=0; \ \ \ \{f_i^{\dagger},f_j^{\dagger}\}=0; \ \ \ \{ f_i,f_j^{\dagger}\}=\delta_{ij} \mathbb{I}, \ \ \ \forall i,j\in I_f, \nonumber
\end{align}
with $\{ \hat{A},\hat{B}\}=\hat{A}\hat{B}+\hat{B}\hat{A}$, implying that any multi-mode wave function to be anti-symmetric respect to particle exchange in two different modes. In the case where $|I_f|<+\infty$, there are a finite number of modes and the dimension of the Hilbert space ($\mathcal{H}$)  is also finite. Let us assume that $\mathcal{H}$ is a space for $N$ fermionic modes, i.e. $|I_f|=N$. Then, an order can be chosen in $I_f$, and each mode is referred in respect to its position in $I_f$. The number operator is defined as $\hat{n} = \sum_i f_i^\dagger f_i$. The Hilbert space becomes a direct sum of the subspaces corresponding to different particle number sectors, $ \mathcal{H}=\bigoplus_{n=0}^{|I_f|} \mathcal{H}_{n-p},  \ n=0, \ldots, N$, where $\mathcal{H}_{n-p}$ is the $n$-particle subspace, i.e., any $\ket \psi \in \mathcal{H}_{n-p}$ satisfies $\hat{n} \ket \psi = n \ket \psi$. For a Hilbert (sub-)space $\mathcal{H}_{n-p}$, where $n$-modes are excited, the basis can be reduced to the element $\{f_{i_1}^{\dagger}f_{i_2}^{\dagger}\ldots f_{i_n}^{\dagger} \ket{\Omega}\}$ where $i_1\leq i_2 \leq \ldots \leq i_n$ and $\forall i_j \ \in I_f$. Note that the number of fermionic creation operators applied on the vacuum is $n$. Due to the fact that $f_{i}^{\dagger}f_{j}^{\dagger}=-f_{j}^{\dagger}f_{i}^{\dagger}$, any unordered elements will be equivalent to the ordered element up to a factor $-1$. Further, $f_{j}^{\dagger}f_{j}^{\dagger}=0$ guarantees that the spaces $\mathcal{H}_{n-p}$ for $n>N$ is $\{0\}$. By decomposing the $N$-mode Hilbert space in terms of $ \mathcal{H}_{n-p}$ and ordering the basis in each $\mathcal{H}_{n-p}$, we can fully characterize the space.

\subsection{Wedge product Hilbert space}
\label{subsec:wedge}
At this point, we note that the Hilbert space for indistinguishable particles or subsystems is very different from the distinguishable one. For distinguishable systems, the global Hilbert spaces are the tensor product of the corresponding subsystems' local Hilbert spaces. For example, for two distinguishable systems $A$ and $B$, with the local Hilbert spaces $\mathcal{H}_A$ and $\mathcal{H}_B$ respectively,  the global Hilbert space of $AB$ becomes $\mathcal{H}_{AB}=\mathcal{H}_{A} \otimes \mathcal{H}_{B}$. This also implies that we could form a complete set of basis as $\ket{\psi_A}\otimes \ket{\psi_B} \in \mathcal{H}_{AB}$, in terms of the local basis $\ket{\psi_A}\in \mathcal{H}_A$ and $\ket{\psi_B}\in \mathcal{H}_B$.

In contrast, for a system composed of fermions (or fermionic subsystems), the Hilbert space is anti-symmetrized. The algebraic structure of such a space is ensured by introducing a wedge product ($\wedge$), in place of the tensor product ($\otimes$). Then, for two fermionic systems $A$ and $B$ of $N$ and $M$ modes respectively, the global Hilbert space would be  $\mathcal{H}_{AB}=\mathcal{H}_A \wedge \mathcal{H}_B$. In Appendix \ref{sec:properties}, we expose some essential properties of $\wedge$ product which we use in the following discussions. These properties of the wedge product Hilbert space, defined for fermions, are also compared to those of the tensor product structure for distinguishable particles.
 
By choosing the vacuum state $\ket{\Omega}$ as the identity element of the $\wedge$ product, we find that the following definition of the product is also natural
\begin{align}
\ket{i}\wedge\ket{j}=f_i^{\dagger}\ket{\Omega} \wedge f_j^{\dagger} \ket{\Omega}\equiv f_i^{\dagger}f_j^{\dagger}\ket{\Omega}. \nonumber 
\end{align}

With the anti-commutation relation in mode spaces ($i \in I_{f_A}$ and $j\in I_{f_B}$) it is easy to check that the $\wedge$ product is anti-symmetric, as one would expect for fermions. The extension of the product to the dual spaces is given in the following natural terms
\begin{align}
\left(\ket{i}\wedge\ket{j}\right)^{\dagger}= \left(f_i^{\dagger}f_j^{\dagger}\ket{\Omega}\right)^{\dagger}= \bra{\Omega}f_j f_i \equiv \bra{i}\wedge\bra{j}. \nonumber
\end{align}
This notion can be naturally extended to arbitrary number of particles as the wedge product is associative, i.e., 
\begin{align}
\left(\ket{i_1}\wedge\ldots\wedge\ket{i_N}\right) & \wedge \left(\ket{j_1}\wedge\ldots\wedge\ket{j_M}\right) \nonumber \\
                         & =f_{i_1}^{\dagger}\ldots f_{i_N}^{\dagger} f_{j_1}^{\dagger}\ldots f_{j_M}^{\dagger}\ket{\Omega} \nonumber \\ 
                                                 & =\ket{i_1}\wedge\ldots\wedge\ket{i_N} \wedge\ket{j_1}\wedge\ldots\wedge\ket{j_M} 
\end{align}
This well-defined notion of state vector elements, based on wedge product, form the basis of the Hilbert space. For ease of discussion, we impose an ordering to fix the basis and, then, the Hilbert space of $N$ fermions is expressed with the following proposition. 

\begin{prop}\label{Prop:basis1}
For an $N$-mode fermion system, the set $\mathcal{B}=\{\ket{\Omega},\ket{1},\ket{2},\ket{1}\wedge \ket{2},\ket{3},\ket{1}\wedge\ket{3}, \ket{2}\wedge \ket{3}, \ket{1}\wedge\ket{2}\wedge\ket{3},\ldots,  \ket{1}\wedge\ket{2}\wedge \ldots \wedge \ket{N} \}$ is an orthonormal basis of the Hilbert space $\mathcal{H}$, with dimension $2^N$. 
\end{prop}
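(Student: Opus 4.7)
The plan is to prove the proposition by verifying three things in turn: that every element of $\mathcal{H}$ can be written as a linear combination of vectors in $\mathcal{B}$ (spanning), that the vectors in $\mathcal{B}$ are mutually orthonormal (hence linearly independent), and that the cardinality of $\mathcal{B}$ is $2^N$.

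For spanning, I would invoke the direct-sum decomposition $\mathcal{H}=\bigoplus_{n=0}^{N}\mathcal{H}_{n-p}$ already stated in the text. In each sector $\mathcal{H}_{n-p}$, the vectors $f_{i_1}^{\dagger}\cdots f_{i_n}^{\dagger}\ket{\Omega}$ with a fixed ordering $i_1<i_2<\cdots<i_n$ (strict, since $(f_j^{\dagger})^{2}=0$) form a generating set, because the anti-commutation relation $f_i^{\dagger}f_j^{\dagger}=-f_j^{\dagger}f_i^{\dagger}$ lets us reorder any raw product up to a sign. Using the definition $\ket{i_1}\wedge\cdots\wedge\ket{i_n}\equiv f_{i_1}^{\dagger}\cdots f_{i_n}^{\dagger}\ket{\Omega}$ established just before the proposition, these generators are precisely the wedge-product vectors appearing in $\mathcal{B}$ restricted to the $n$-particle sector.

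For orthonormality, I would compute inner products of the form $\bra{\Omega}f_{i_n}\cdots f_{i_1}f_{j_1}^{\dagger}\cdots f_{j_m}^{\dagger}\ket{\Omega}$. If $n\neq m$, particle-number conservation (equivalently, an unpaired annihilation operator can be anti-commuted past all creation operators to annihilate the vacuum) forces the inner product to vanish. If $n=m$ but the ordered index tuples differ, at least one index in the bra is absent from the ket; anti-commuting that annihilation operator to the right produces a term that kills $\ket{\Omega}$, again yielding zero. If the tuples coincide, repeated use of $\{f_i,f_j^{\dagger}\}=\delta_{ij}\mathbb{I}$ together with $f_i\ket{\Omega}=0$ collapses the expression to $1$. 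This step is where bookkeeping of signs is slightly tedious but completely mechanical given the ordering convention; I expect this to be the most error-prone part of the proof, but not conceptually difficult.

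For the dimension count, I would note that the vectors in $\mathcal{B}$ are in bijection with subsets of $I_f=\{1,\ldots,N\}$: the empty set corresponds to $\ket{\Omega}$, and a subset $\{i_1<\cdots<i_n\}$ corresponds to $\ket{i_1}\wedge\cdots\wedge\ket{i_n}$. Thus $|\mathcal{B}|=\sum_{n=0}^{N}\binom{N}{n}=2^{N}$. The particular ordering displayed in the statement is just the recursive enumeration in which appending mode $k$ duplicates the previous list (once without $\ket{k}$, once with), which makes the $2^{N}$ count manifest. Combined with orthonormality (linear independence) and spanning, this establishes that $\mathcal{B}$ is an orthonormal basis of $\mathcal{H}$ and that $\dim\mathcal{H}=2^{N}$.
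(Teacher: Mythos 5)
Your proposal is correct and follows essentially the same route as the paper's proof: span via the particle-number decomposition plus reordering by anti-commutation, a $2^N$ count from the subset/recursive enumeration, and orthonormality from the vacuum expectation values of mixed strings of creation and annihilation operators. The only cosmetic difference is that you compute the inner products directly case by case, whereas the paper delegates that computation to its Lemma~\ref{lema:scprod} (the determinant-of-deltas formula), which yields the same conclusion.
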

The Hilbert space $\mathcal{H}$ can be re-expressed, alternatively, in terms of subspaces $\mathcal{H}_i$ spanned by the basis elements $\{\ket{\Omega},\ket{i}\}$, where $\ket{i}$ is the state when only the $i$th mode is excited. Any arbitrary element $\ket{\psi_i}$, in the Hilbert space belonging to the $i$th mode, can be formed using this basis. Then, in this new representation of an $N$-fermion Hilbert space, every element of $\mathcal{B}$ can be written as $\ket{\psi_1}\wedge\ket{\psi_2}\wedge \ldots \ket{\psi_N}$, where $\ket{\psi_i}\in \mathcal{H}_i$. This enables us to express the global Hilbert space as the wedge product of $\mathcal{H}_i$, and that is
\begin{align}
\mathcal{H}=\bigwedge_{i=1 }^{N} \mathcal{H}_{i}=\mathcal{H}_1 \wedge \ldots \wedge \mathcal{H}_N.
\end{align}
Given the above representation of the Hilbert space $\mathcal{H}$ and its elements, we can study its operator space. A linear operator space $\Gamma (\mathcal{H})$ can be defined with the help of the Cartesian outer product between $\mathcal{H}$ and its dual $\mathcal{H}^*$. More details are given in Appendix \ref{sec:properties}.

Unlike the Hilbert spaces of distinguishable particles and their elements, every element in the fermionic Hilbert space may not represent a physical fermionic state. In fact, every element that represents a physically allowed state has to satisfy a rule, the \emph{parity super-selection} rule, which we discuss next.

\subsection{Parity super-selection rule}
\label{subsec:SSR}
The Hilbert space of fermionic particles is restricted by the algebraic properties due to the anti-commutation relations of creation and annihilation operators and constrained by the \emph{parity super-selection rule} \cite{Wick52}. The super-selection rules (SSRs) are, in general, a set of physical restrictions that are imposed on the states and operations to discard the non-physical cases produced in a physical theory. For the fermionic case, the parity SSR restricts the physical states based on whether a state has an even or odd number of fermions. The parity operator is 
\begin{align}
 \Pi=e^{i \pi \hat{n}},
\end{align}
where $\hat{n}=\sum_i f_i^\dag f_i$ is the number operator. A fermionic state $\ket{\psi^e}$ is even if $ \Pi \ket{\psi^e}=\ket{\psi^e}$, and a state $\ket{\psi^o}$ is said to be odd if  $ \Pi \ket{\psi^o}=-\ket{\psi^o}$. Now the parity SSR can be stated as in the following.  \\

\textbf{Parity SSR} \cite{Wick52}:
\textit{Nature {\bf does not} allow a fermionic quantum state, which is in a coherent superposition between states with even and odd (particle number) parities, i.e. the states $\ket{\phi} = c_e \ket{\psi^e} + c_o \ket{\psi^o}$, where $\{ c_e, \ c_o \} \in \mathbb{C}\backslash\{0\}$, are not physically allowed.}\\

The parity SSR was initially introduced to describe elementary particles in the framework of quantum field theory \cite{Wick52}. It is understood as a consequence of the underlying deep correspondence between quantum mechanics and the (special) theory of relativity \cite{Pauli40, Schwinger51, Peskin95}, through the spin-statistics connection. Beyond its utility in understanding elementary particles, the parity SSR finds important applications to formulate a reasonable quantum information theory for fermions \cite{Friis16}. There is a `partial trace ambiguity' in a fermionic bipartite system if one only relies on fermionic creation and annihilation operators' algebra. For example, it can be seen that there are pure states of a bipartite fermionic system $\ket{\psi}_{AB}$ of fermionic subsystems $A$ and $B$ for which the marginal states $\rho_{A} = \tr_B \left[ \proj{\psi}_{AB} \right]$ and $\rho_{B} = \tr_A \left[ \proj{\psi}_{AB} \right]$ do not share the same spectra. However, a `reasonable' quantum information theory demands that the marginals should possess the same spectra as they share equal information. This serious flaw is resolved by the imposition of parity SSR \cite{Friis16}. It has been shown that any state $\ket{\psi}_{AB}$ which satisfies $\Pi \ket{\psi}_{AB} = \pm \ket{\psi}_{AB} $ results in marginal states with the same spectra. Moreover, in \cite{Johansson16}, it is argued that one does not need to invoke relativity to justify parity SSR for fermions. Rather, the parity SSR can be understood as the consequence of the no-signaling principle, due to the micro-causality constraint \cite{Bogolubov90, Greiner96} on the separable fermionic Hilbert space, in conjugation with the algebra of creation and annihilation operators. It thereby implies that, even in a non-relativistic scenario, fermionic states are constrained by parity SSR. Hence, one cannot ignore SSR while constructing a framework for information theory.        

\subsection{Physical states and observables of fermionic systems}
\label{subsec:physstates}
The physical states of an arbitrary fermionic system must satisfy the parity SSR. Therefore the corresponding Hilbert space $\mathcal{H}_S$, with all physical states satisfying parity SSR, forms a subset, $\mathcal{H}_S \subset \mathcal{H} $. To better represent the set of states that respect parity SSR, we reorder the basis $\mathcal{B}$. The basis $\mathcal{B}$ in an $N$-mode fermionic Hilbert space has $2^N$ elements, and it is easy to see that there are $2^{N-1}$ even elements and $2^{N-1}$ odd elements in it. The new ordering of the elements in the basis set, denoted by $\mathcal{B}'$, is considered in the proposition below.
\begin{defi}\label{Prop:basis2}
 For an $N$-mode fermionic system, the basis set $\mathcal{B}' \equiv \{\mathcal{B}_e,  \mathcal{B}_o \}$ is formed by reordering the elements in $\mathcal{B}$, where the first $2^{N-1}$ elements $\mathcal{B}_e $ are even and last $2^{N-1}$ elements $\mathcal{B}_o $ are odd.
\end{defi}

For example, for a $3$-mode fermionic system, the $ \mathcal{B}_e = \{\ket{\Omega}, \ket{1}\wedge \ket{2}, \ket{1}\wedge \ket{3}, \ket{2}\wedge \ket{3} \}$ and $ \mathcal{B}_o = \{\ket{1}, \ket{2}, \ket{3}, \ket{1}\wedge \ket{2} \wedge \ket{3} \}$. Any arbitrary fermionic pure state $\ket{\psi} \in \mathcal{H}_S$ is either an even or an odd state, and it is expressed as a coherent superposition among the elements either from $\mathcal{B}_e$ or from $\mathcal{B}_o$ respectively. Therefore, an $N$-mode fermionic pure state satisfying parity SSR can be cast in the following form:
\begin{equation}
 \ket{\psi}= \left\{\begin{array}{lr}
   \ket{\psi^e} = \sum_{\vec{s}} \alpha_{\vec{s}} \left(f^{\dagger}_1\right)^{s_1} \ldots \left(f^{\dagger}_N\right)^{s_N} \ket{\Omega}, \ \forall (S \mod 2)=0, \\
   \ket{\psi^o} = \sum_{\vec{s}} \alpha_{\vec{s}} \left(f^{\dagger}_1\right)^{s_1} \ldots \left(f^{\dagger}_N\right)^{s_N}\ket{\Omega}, \ \forall (S \mod 2)=1, 
\end{array}  \right. \nonumber
\end{equation}
where $\vec{s}=\{s_i \}_1^N$,  $s_i=\{0,1\}$, $S=\sum_i s_i$, and $\sum_{\vec{s}} |\alpha_{\vec{s}}|^2=1$. The $\sum_{\vec{s}}$ denotes the sum over all possible values of the vector. The even states $\ket{\psi^e}$ (odd states $\ket{\psi^o}$) are the results of coherent superposition between the elements in $\mathcal{B}_e$ ($\mathcal{B}_o$). By construction, the  $\ket{\psi^e}$ and $\ket{\psi^o}$ are the fermionic states that correspond to even and odd parities, i.e. $\Pi \ket{\psi^e}=\ket{\psi^e}$ and  $\Pi \ket{\psi^o}=-\ket{\psi^o}$ respectively. 

Note, the fermionic operator space $\Gamma_S$ can also be constructed in terms of the elements $\ketbra{\psi}{\varphi}$ where both $\ket{\psi}, \ket{\varphi} \in \mathcal{H}_S$. Using fermionic operators, more general (or mixed) states can be expressed in terms of density matrices. The mixed states are defined as ensembles of pure states. Therefore, the parity SSR imposes restrictions on the matrices that could represent a physical state. A density matrix $\rho$ representing a physically allowed fermionic state, in the basis $\mathcal{B}'$, takes the form
\begin{align}
\rho= \rho_e \oplus \rho_o =\begin{pmatrix}
\rho_e && 0 \\ 0 && \rho_o
\end{pmatrix},
\end{align}
where $\rho_e$ and $\rho_o$  are $2^{N-1}\times 2^{N-1}$ positive semi-definite matrices, given by
\begin{align}
 \rho_e=\sum_i p^e_i \ketbra{\psi^e_i}{\psi^e_i} \ \ \mbox{and} \ \ \rho_e=\sum_i p^o_j \ketbra{\psi^o_j}{\psi^o_j}, 
 \label{eq:rho}
\end{align}
with $0 \leqslant p^e_i, p^o_j \leqslant 1$ and $\sum_i p^e_i + \sum_j p^o_j=1$. By construction, the density matrices are symmetric under parity, i.e. $\Pi \rho \Pi^\dag = \rho$. This also implies that $\Pi \rho_e \Pi^\dag = \rho_e$ and $\Pi \rho_o \Pi^\dag = \rho_o$. The properties of fermionic density matrices are discussed in Appendix \ref{sec:proofs} in detail. The density matrices are positive semi-definite matrices and all positive semi-definite operators form a subset $\mathcal{R}_S \subset \Gamma_S$, where $\rho \in \mathcal{R}_S$.

Unlike distinguishable systems, the physically allowed fermionic observables cannot be represented by any Hermitian operators. Rather, the observables have to respect the parity SSR. It constrains a physically allowed observable $\hat{A} \in \Gamma_S$ to have the form
\begin{align}
 \hat{A}=\sum_i a_i \ketbra{\psi_i}{\psi_i} = \hat{A}_e \oplus \hat{A}_o = \begin{pmatrix}
\hat{A}_e && 0 \\ 0 && \hat{A}_o
\end{pmatrix},
\end{align}
where $a_i \in \mathbb{R}$ and $\ket{\psi_i} \in \mathcal{H}_S$. 

\section{Fermionic Quantum Operations}
\label{sec:quanop}
So far, we have seen how parity SSR imposes conditions on the states and the observables for the fermionic systems. Now we explore how the wedge product structure of Hilbert space and the parity SSR results in a restricted class of allowed quantum operations on the fermionic space that is physical. We start by re-considering the partial tracing operation, initially introduced in \cite{Friis16}. We re-establish the same procedure using a consistency condition (discussed below) and show how the restrictions of fermionic observables and states due to the parity SSR play no role in the notion nor the form of fermionic partial tracing procedure. Then, we move on to classify arbitrary unitary and projection operations that respect the parity SSR structure. Finally, we extend such characterization to general quantum operations.

\subsection{Partial tracing}
\label{subsec:pt}
For the re-derivation of partial tracing operation, let us give a precise definition of a local operator.
\begin{defi}[Local operators]
Consider a global Hilbert space $\mathcal{H}^{AB}_S = \mathcal{H}^A_S \wedge \mathcal{H}^B_S$ of finite fermionic systems $A$ and $B$. An operator $\hat{O}\neq 0$ that acts on $\mathcal{H}_S^{AB}$ is said to be local on $\mathcal{H}^A_S$ if, and only if, it has the form $\hat{O}=\hat{O}_A\wedge \mathbb{I}_B$ with $\hat{O}_A\neq 0$.
\end{defi}

This definition is equivalent to say that all operators local on $\mathcal{H}^A_S$ can be formed by combining creation and annihilation operators of the modes of $\mathcal{H}^A_S$. For more details about this correspondence, see Appendix \ref{sec:properties}. With this notion of local operators, we go on to define partial tracing in a fermionic system with the help of following consistency conditions in order to be able to interpret the reduced density states physically. 

\begin{defi}[Consistency conditions]
Given a global Hilbert space $\mathcal{H}_S^{AB} =\mathcal{H}_S^A \wedge \mathcal{H}_S^B$ of finite-mode fermionic systems $A$ and $B$, the consistency conditions for a partial tracing procedure over $\mathcal{H}_S^B$ of a density matrix $\rho_{AB} \in \mathcal{R}^{AB}_S$ are: that $\tr_{B}(\rho_{AB})=\rho_{A}\in \mathcal{R}^A_S$ is a density matrix and that the following equations are satisfied:
\begin{align}
\tr(\hat{O}_{A} \rho_{AB})=\tr (\hat{O}_{A}\rho_{A}), 
\end{align}
for all $\hat{O}_{A}$ being a local physical observable, thus being an Hermitian operator in $\Gamma^A_S$.
\end{defi}

The consistency conditions give us the physical definition of the reduced density matrix, imposing that the expectation value for local observables in $A$ has to be the same for $\rho_{AB}$ and $\rho_A$. Now, with the consistency conditions above, we put forward Proposition \ref{prop:partialtrace} below that recovers the usual partial tracing procedure prescribed in \cite{Friis16} defined mathematically as $\tr_B(\hat{O}_{AB})=\sum_i \bra{i_B} \hat{O}_{AB} \ket{i_B}$. Thus, we can recover the usual mathematical definition of partial tracing directly from its physical meaning. The proof of Proposition \ref{prop:partialtrace}, along with its equivalence with the procedure considered in \cite{Friis16}, is given in Appendix \ref{sec:ptappendix}.

\begin{prop}[Partial trace]
\label{prop:partialtrace}
For a density operator $\rho \in \mathcal{R}^N_S$, of an $N$-mode fermionic system, the partial tracing over the set of modes $M=\{m_1,\ldots,m_{|M|}\}\subset\{1,\ldots,N\}$ must result in a reduced density operator $\sigma \in \mathcal{R}^{N\setminus M}_S$, given by
\begin{align}
\sigma=\tr_{M}(\rho)=\tr_{m_1}\circ \tr_{m_2} \circ \ldots \circ \tr_{m_{|M|}}(\rho),
\end{align}
 There is a unique partial tracing procedure that satisfies the physically imposed consistency conditions. The operation of partial tracing one mode $m_i$, of an element of $\rho$, is given then by:
\begin{align}
\tr_{m_i} & \left(\left(f_1^{\dagger}\right)^{s_1} \ldots \left(f_{m_i}^{\dagger}\right)^{s_{m_i}}\ldots \left(f_N^{\dagger}\right)^{s_N}\proj{\Omega}f_N^{r_N}\ldots f_{m_i}^{r_{m_i}}\ldots f_1^{r_1} \right)\nonumber \\ 
&= \delta_{s_{m_i} r_{m_i}} (-1)^{k}\left(f_1^{\dagger}\right)^{s_1} \ldots \left(f_N^{\dagger}\right)^{s_N}\proj{\Omega}f_N^{r_N}\ldots f_1^{r_1}, 
\end{align}
with $k=s_{m_i} \sum_{j=m_i}^{N-1} s_{j+1} +r_{m_i} \sum_{k=m_i}^{N-1} r_{k+1}$ and $s_i,r_j \in \{0,1 \}$.
\end{prop}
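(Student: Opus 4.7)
The plan is to handle uniqueness and the explicit formula together by exploiting linearity and then applying the consistency condition against a complete family of local observables. Since both the full trace and any candidate partial trace are linear in $\rho$, and since the operators $\left(f_1^\dagger\right)^{s_1}\cdots\left(f_N^\dagger\right)^{s_N}\proj{\Omega}f_N^{r_N}\cdots f_1^{r_1}$ with $\vec{s},\vec{r}\in\{0,1\}^N$ span $\Gamma_S^{AB}$, it suffices to prescribe the partial trace on each such basis element. The consistency condition $\tr(\hat{O}_A \rho_{AB}) = \tr(\hat{O}_A \rho_A)$, imposed for every local observable $\hat{O}_A\in\Gamma_S^A$, provides enough independent linear equations to pin down every matrix element of $\tr_B(\rho_{AB})$, which gives uniqueness.

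For the single-mode trace, I would pair a basis operator inside $\tr(\hat{O}_{N\setminus m_i}\,\cdot\,)$ with a local observable built from creation and annihilation operators of the retained modes. Using the anti-commutation relations I would move $(f_{m_i}^\dagger)^{s_{m_i}}$ leftward past $(f_{m_i+1}^\dagger)^{s_{m_i+1}}\cdots(f_N^\dagger)^{s_N}$ until it acts directly on $\ket{\Omega}$; this reshuffle generates the sign $(-1)^{s_{m_i}\sum_{j>m_i} s_j}$. The mirror manipulation on the annihilation side contributes $(-1)^{r_{m_i}\sum_{j>m_i} r_j}$, and the factor from mode $m_i$ collapses via $\bra{\Omega}f_{m_i}^{r_{m_i}}(f_{m_i}^\dagger)^{s_{m_i}}\ket{\Omega}=\delta_{s_{m_i}r_{m_i}}$. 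Combining the two signs reproduces $(-1)^k$ with $k$ exactly as stated. The multi-mode claim then follows by checking that single-mode traces over distinct modes commute on basis elements — a direct consequence of the Kronecker deltas and sign contributions factoring independently over disjoint modes — so the iterated composition $\tr_M = \tr_{m_1}\circ\cdots\circ\tr_{m_{|M|}}$ is well-defined and order-independent.

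Finally I would verify $\sigma\in\mathcal{R}_S^{N\setminus M}$: the Kronecker delta forces $s_{m_i}=r_{m_i}$, so the even/odd block-diagonal structure of equation (\ref{eq:rho}) on the remaining modes is preserved; positivity and unit trace descend from the corresponding properties of $\rho$ by tracing the diagonal blocks. Equivalence with the prescription of \cite{Friis16}, namely $\tr_B(\hat{O}_{AB})=\sum_i\bra{i_B}\hat{O}_{AB}\ket{i_B}$ with basis states $\ket{i_B}$ ordered after those of $A$, would be confirmed by evaluating that formula on the same basis operators and recovering the identical sign factor.

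The main obstacle I anticipate is the sign bookkeeping — specifically, arguing that $(-1)^k$ is \emph{forced} by the consistency condition and not merely compatible with it. The hard part will be choosing a sufficiently discriminating family of local observables $\hat{O}_{N\setminus m_i}$ that couples different occupation patterns on the retained modes, so that any alternative sign convention would violate consistency for at least one such observable. Once this rigidity is established, the uniqueness half of the proposition collapses to the computation already carried out for existence.
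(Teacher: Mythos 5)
Your overall route --- linearity, evaluation on the monomial basis of the operator space, anticommutation bookkeeping for the sign $(-1)^k$, and commutativity of single-mode traces to justify the composition $\tr_M=\tr_{m_1}\circ\cdots\circ\tr_{m_{|M|}}$ --- matches the paper's strategy: the paper reorders the traced mode to the last position (its Lemma \ref{lema:nth}), works with the resulting $2\times 2$ block matrix representation $\left(\begin{smallmatrix} a & b \\ c & d \end{smallmatrix}\right)$ and sets the reduced operator equal to the sum of the diagonal blocks, which is the same sign computation in different clothing. Your derivation of the explicit formula, of the SSR form of $\sigma$, and of the order-independence is sound.

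The genuine gap is the one you flag yourself and then defer: the uniqueness half. Asserting that the consistency conditions ``provide enough independent linear equations to pin down every matrix element'' is not yet an argument, and it is in fact \emph{false} for arbitrary matrix elements: the admissible test observables $\hat{O}_A$ are themselves constrained to be parity-respecting, hence block-diagonal in $\mathcal{B}'$, so the numbers $\tr(\hat{O}_A\,\sigma)$ carry no information whatsoever about the parity-off-diagonal blocks of a candidate $\sigma$. The consistency condition alone therefore cannot force uniqueness; one must invoke the other half of the definition, namely that $\sigma$ is required a priori to lie in $\mathcal{R}_S^{N\setminus M}$ and hence to be block-diagonal and Hermitian. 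The difference $C$ of two candidate reduced states is then a block-diagonal Hermitian operator satisfying $\tr(\hat{O}_A C)=0$ for all block-diagonal Hermitian $\hat{O}_A$, and the paper's Lemma \ref{lema:tracezero} shows that this forces $C=0$: diagonalize $C$ by a block-diagonal unitary $U$ and test against the observables $U^{\dagger} M_i U$, where $M_i$ are the diagonal matrix units, which are themselves legitimate SSR observables. This is exactly the ``sufficiently discriminating family'' you were looking for; without it, and without the observation that discrimination is only needed (and only available) inside the block-diagonal sector, the uniqueness claim in your write-up remains unproven.
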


The partial tracing procedure can be further simplified. Assume that $\ket{a},\ket{b},\ket{c},\ket{d}$ are pure states in $\mathcal{H}_{S}$ of $N$ modes. Say $M\subset \mathcal{N}=\{1,\ldots, N\}$, where $\ket{a}$ and $\ket{b}$ belong to the modes $M$, and $\ket{c}$ and $\ket{d}$ belong to the modes $M^c\equiv N\setminus M$. Then we have
\begin{align}
&\tr_{M}\left(\ketbra{a}{b}\wedge \ketbra{c}{d}\right)=\scp{b}{a}\ketbra{c}{d}, \nonumber \\ 
 &\tr_{M^c}\left(\ketbra{a}{b}\wedge \ketbra{c}{d}\right)=\scp{d}{c}\ketbra{a}{b}. \nonumber 
\end{align}
Note, this now becomes analogous to the partial tracing operation for the distinguishable systems, and the mathematics behind the derivation are outlined in Appendix \ref{sec:proofs}.

\subsection{Linear operators, projectors, and unitaries}
\label{subsec:linear}
Here we consider operators that act on the fermionic Hilbert space $\mathcal{H}_S$. We start with linear operators that respect parity SSR. Any linear operator $\hat{O}$ on the fermionic space assumes the form
\begin{align}
\hat{O}=\sum_{i,j=1}^{2^N} O_{ij} \ketbra{e_i}{e_j}, 
\end{align}
where $\ket{e_i}$ are the ordered elements of the basis set $\mathcal{B}'$. More precisely, the $\ket{e_i}$ and $\ket{e_j}$ have definite parity. The full restrictions on physically allowed linear operators are provided in the following theorem (see Appendix \ref{sec:proofs} for the proof).

\begin{theorem}[Linear operators]
 \label{thm:blockform}
If $\hat{O}$ is a linear operator $\hat{O}: \mathcal{H}^N \rightarrow \mathcal{H}^M $ such that for all $\hat{A}\in \Gamma_S^N$ and $\hat{B}\in \Gamma_S^M$,  $\hat{O} \hat{A}  \hat{O}^\dag \in \Gamma_S^M$ and $\hat{O}^\dagger \hat{B} \hat{O} \in \Gamma_S^N$ then the matrix representation of the operator $\hat{O}$ in the basis $\{{\mathcal{B}'}^N, \ {\mathcal{B}'}^M \}$ assumes one of these two diagonal and anti-diagonal forms:

\begin{gather}
\hat{O}|_{{\mathcal{B}'}^M,{\mathcal{B}'}^N}=\left(\begin{array}{c|c}
O_{++}  & \hat{0} \\ \hline \hat{0}  & O_{--}  
\end{array}\right)  \medspace \medspace \text{or} \medspace \medspace \hat{O}|_{{\mathcal{B}'}^M,{\mathcal{B}'}^N}=\left(\begin{array}{c|c}
\hat{0} & O_{+-}  \\ \hline  O_{-+}  &  \hat{0}
\end{array}\right), \nonumber
\end{gather}
where $O_{++},O_{--},O_{-+} ,O_{+-}  \in M_{2^{M-1}\times 2^{N-1}}(\mathbb{C})$  and $\hat{0}$ is the zero element of $M_{2^{M-1}\times 2^{N-1}}(\mathbb{C})$. Such operators are linear operators that preserve the SSR form of the operators in $\Gamma_S$. 
\end{theorem}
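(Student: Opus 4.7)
The plan is to write $\hat{O}$ in block form with respect to the reordered bases $\mathcal{B}'^N$ and $\mathcal{B}'^M$ as
\[
\hat{O} = \begin{pmatrix} O_{++} & O_{+-} \\ O_{-+} & O_{--} \end{pmatrix},
\]
where the block labels indicate the parity (even $+$, odd $-$) of output and input sectors, and then to show that the two hypotheses force either the off-diagonal blocks to vanish (block-diagonal form) or the diagonal blocks to vanish (anti-diagonal form), with possibly further degenerate vanishings.

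The first step tests the condition $\hat{O}\hat{A}\hat{O}^\dagger \in \Gamma_S^M$ on rank-one inputs. For any $\ket{v} \in \mathcal{H}_S^N$ of definite parity, the operator $\ketbra{v}{v}$ is parity-diagonal and hence lies in $\Gamma_S^N$. Then $\hat{O}\ketbra{v}{v}\hat{O}^\dagger = \ket{\hat{O}v}\!\bra{\hat{O}v}$, and a rank-one operator $\ketbra{\phi}{\phi}$ is parity-diagonal if and only if $\ket{\phi}$ itself has definite parity (or vanishes): a mixed-parity $\ket{\phi} = \ket{\phi_e} + \ket{\phi_o}$ with both components nonzero would generate the off-diagonal cross-terms $\ketbra{\phi_e}{\phi_o}$ and $\ketbra{\phi_o}{\phi_e}$. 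Hence $\hat{O}\ket{v}$ has definite parity or is zero for every definite-parity $\ket{v}$.

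The second step upgrades this to a uniform parity assignment via a superposition argument. Given two even vectors $\ket{v_1}, \ket{v_2}$, the sum $\ket{v_1}+\ket{v_2}$ is again even, so $\hat{O}(\ket{v_1}+\ket{v_2})$ has definite parity by step one; if $\hat{O}\ket{v_1}$ were even and $\hat{O}\ket{v_2}$ odd (both nonzero), the sum would be mixed-parity, a contradiction. Hence there is a fixed parity $p_e \in \{+,-\}$ such that all nonzero images $\hat{O}\ket{v}$ with $\ket{v}$ even share parity $p_e$, and similarly a fixed $p_o$ for odd inputs. The four combinations exhaust the cases: $(p_e,p_o)=(+,-)$ gives the block-diagonal form ($O_{+-}=O_{-+}=0$); $(p_e,p_o)=(-,+)$ gives the anti-diagonal form ($O_{++}=O_{--}=0$); the remaining cases $(+,+)$ and $(-,-)$ force $\hat{O}$ to map everything into a single parity sector of $\mathcal{H}^M$.

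The degenerate cases $(p_e,p_o)=(+,+)$ and $(-,-)$ will be eliminated using the second hypothesis $\hat{O}^\dagger \hat{B}\hat{O} \in \Gamma_S^N$: rerunning steps one and two on $\hat{O}^\dagger$ (with $\hat{B}=\ketbra{w}{w}$, $\ket{w} \in \mathcal{H}_S^M$ of definite parity) yields an analogous uniform parity $q_e$ for the image of $\hat{O}^\dagger$ on even inputs. In the case $(+,+)$ we already have $O_{-+}=O_{--}=0$, so for any even $\ket{w}$ the image $\hat{O}^\dagger \ket{w}$ decomposes in $\mathcal{H}^N$ as an even component $O_{++}^\dagger \ket{w}$ plus an odd component $O_{+-}^\dagger \ket{w}$; the definite-parity requirement uniformly across all $\ket{w}$ then forces either $O_{+-}=0$ (degenerate block-diagonal) or $O_{++}=0$ (degenerate anti-diagonal). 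The case $(-,-)$ is symmetric. The main obstacle I anticipate is the bookkeeping of the four combinations of $(p_e,p_o)$ paired with the analogous $(q_e,q_o)$ extracted from $\hat{O}^\dagger$; once it is verified that the superposition argument applies verbatim to $\hat{O}^\dagger$ under the second hypothesis, the case analysis closes and the two allowed matrix forms follow.
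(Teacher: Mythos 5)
Your proof is correct, and it reaches the conclusion by a genuinely different mechanism than the paper's. The paper also starts from the general $2\times 2$ block form, but it then feeds in arbitrary block-diagonal $\hat{A}$ and $\hat{B}$ with one block set to zero, reads off the four off-diagonal constraints $O_{-+}AO_{++}^{\dagger}=O_{+-}AO_{--}^{\dagger}=\hat{0}$ (and their $\hat{B}$ counterparts) valid for all $A$, and closes the argument with an algebraic lemma: if $D\cdot E\cdot F=0$ for every matrix $E$, then $D=0$ or $F=0$ (proved by testing against matrix units). You instead test the hypothesis only on rank-one projectors $\ketbra{v}{v}$ of definite-parity vectors, observe that $\ket{\hat{O}v}\!\bra{\hat{O}v}$ is parity-diagonal only if $\hat{O}\ket{v}$ has definite parity, and then use a superposition argument to force a uniform output parity on each input sector; the dual hypothesis on $\hat{O}^{\dagger}$ kills the two degenerate assignments $(p_e,p_o)=(+,+),(-,-)$. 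Your route is more elementary (no matrix-unit lemma) and actually establishes the theorem under a strictly weaker hypothesis, namely that $\hat{O}$ and $\hat{O}^{\dagger}$ preserve only the rank-one definite-parity projectors rather than all of $\Gamma_S$; the paper's route is more symmetric in $\hat{A}$ and $\hat{B}$ and extracts all four vanishing conditions in one uniform computation. The only points to tidy up in a full write-up are the harmless edge cases where an entire parity sector is annihilated (so that $p_e$ or $p_o$ is undefined, which is consistent with either allowed form) and an explicit remark that the superposition argument applies verbatim to $\hat{O}^{\dagger}$; neither affects the validity of the argument.
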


Using this, below we are able to characterize the sharp quantum measurements in terms of projection operations on space $\mathcal{H}_{S}$. $\hat{P} \in \Gamma_S$ is an SSR-respecting projector if it preserves the SSR structure, $\hat{P}^2\ket{\psi}=\hat{P}\ket{\psi}$ and $\hat{P}\ket{\psi}=\hat{P}^{\dagger}\ket{\psi}$, for all $\ket{\psi}\in \mathcal{H}_S$. It can be seen that such conditions imply the following form for the projector $\hat{P}$ when written in the basis $\mathcal{B}'$:
\begin{align}
\hat{P}=P_{ee} \oplus P_{oo} = \left( \begin{array}{c|c}
P_{ee} & \hat{0} \\ \hline 
\hat{0} & P_{oo}
\end{array}\right),
\end{align}
where $P_{ee},P_{oo}\in M_{2^{N-1}\times 2^{N-1}}(\mathbb{C})$ are projectors acting in the even and odd parity sub-spaces respectively. This directly implies that $\hat{P}$ is an SSR projector in an $N$-mode fermionic space $\mathcal{H}$ if, and only if, there exists $\ket{\psi_i} \in \mathcal{H}_{S}$ with $\scp{\psi_i}{\psi_j}=\delta_{ij}$, for $i=1,\ldots,n\leq 2^{N}$, such that
\begin{align}
\hat{P}=\sum_{i=1}^{n} \proj{\psi_i}.
\end{align}

We now turn to characterizing physically allowed unitary operators. A unitary operator $\hat{U}: \mathcal{H}_{S} \rightarrow \mathcal{H}_{S}$ is a linear operator that satisfies $\hat{U} \hat{U}^{\dagger}=\hat{U}^{\dagger} \hat{U}=\hat{\mathbb{I}}_{\mathcal{H}}$. By using Theorem \ref{thm:blockform}, we are able to characterize them.

\begin{theorem}[Unitary]\label{thm:Unitary}
$\hat{U}$ is an SSR-respecting unitary operator acting on $\mathcal{H}_S$ if and only if, the matrix representation of the operator $\hat{U}$ in the basis $\mathcal{B}'$ takes the following form:
\begin{align}
\hat{U}=U_{ee} \oplus U_{oo} = \left(\begin{array}{c|c}
U_{ee}  & \hat{0} \\ \hline \hat{0}  & U_{oo}  
\end{array}\right),
\end{align}
where $U_{ee}$ and $U_{oo}$ are unitary matrices, each in $M_{2^{N-1}\times 2^{N-1}}(\mathbb{C})$, acting on the even and odd sub-spaces respectively, and $\hat{0}$ is the zero element of $M_{2^{N-1}\times 2^{N-1}}(\mathbb{C})$.
\end{theorem}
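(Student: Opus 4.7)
The plan is to reduce the proof to Theorem \ref{thm:blockform} and extract the unitarity of each block from the relation $\hat{U}\hat{U}^{\dagger}=\hat{U}^{\dagger}\hat{U}=\mathbb{I}_{\mathcal{H}_S}$. Since an SSR-respecting unitary is invertible, $\hat{U}\hat{A}\hat{U}^{\dagger}\in\Gamma_S^N$ for every $\hat{A}\in\Gamma_S^N$, and the same holds for $\hat{U}^{\dagger}$. Specialising Theorem \ref{thm:blockform} to $M=N$ therefore constrains the matrix of $\hat{U}$ in the basis $\mathcal{B}'$ to be either block-diagonal, $\hat{U}=U_{ee}\oplus U_{oo}$, or block anti-diagonal, $\hat{U}=\left(\begin{smallmatrix}\hat{0} & U_{+-}\\ U_{-+} & \hat{0}\end{smallmatrix}\right)$.

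The anti-diagonal option must be discarded because it describes a unitary that interchanges the even and odd parity sectors, equivalently one for which $\{\hat{U},\Pi\}=0$ rather than $[\hat{U},\Pi]=0$. As emphasised in the earlier discussion of the parity SSR and the no-signaling argument of \cite{Johansson16}, a physically admissible SSR-respecting unitary must commute with the parity operator so that, when applied locally, it cannot coherently exchange the superselection sectors and thereby leak information through the parity charge. Adopting this standard sharper reading of ``SSR-respecting'' leaves only the block-diagonal form.

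Given $\hat{U}=U_{ee}\oplus U_{oo}$, the unitarity relations decouple into $U_{ee}U_{ee}^{\dagger}=U_{ee}^{\dagger}U_{ee}=\mathbb{I}_{2^{N-1}}$ and $U_{oo}U_{oo}^{\dagger}=U_{oo}^{\dagger}U_{oo}=\mathbb{I}_{2^{N-1}}$, establishing that each block is itself a unitary on its parity subspace. The converse direction is immediate: any block-diagonal operator with unitary diagonal blocks is globally unitary by direct multiplication, and preserves the block structure of every density matrix and every physical observable by inspection, hence is SSR-respecting.

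The principal subtlety lies in justifying the exclusion of the anti-diagonal case, since Theorem \ref{thm:blockform} admits both forms as SSR-preserving linear maps on $\Gamma_S$. The resolution hinges on interpreting an SSR-respecting unitary as one that preserves each superselection sector (i.e., commutes with $\Pi$), not merely one that maps the set of parity-definite operators to itself; this stronger reading is the physically relevant one in view of the no-signaling argument invoked earlier. Once this is granted, the theorem follows in a few lines, with no further obstacle beyond the routine verification of the converse.
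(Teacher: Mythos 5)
Your reduction to Theorem \ref{thm:blockform} and the verification that a block-diagonal $\hat{U}$ is unitary iff $U_{ee}$ and $U_{oo}$ are unitary both match the paper. The gap is in how you dispose of the anti-diagonal case, which is the entire substance of the theorem. You exclude it by \emph{stipulating} that ``SSR-respecting'' means $[\hat{U},\Pi]=0$, i.e.\ you build the conclusion into the hypothesis. But Theorem \ref{thm:blockform} explicitly admits anti-diagonal operators as maps preserving the parity-block structure of $\Gamma_S$ (and indeed the paper needs such anti-diagonal operators later as legitimate Kraus operators), so under the only definition available from the preceding text the anti-diagonal unitaries are not excluded by definition --- they must be excluded by an argument. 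Declaring a ``sharper reading'' of the hypothesis is circular with respect to what the theorem is meant to establish, namely that physically admissible unitaries necessarily commute with parity.

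The paper closes this gap with a genuine physical argument (sketched in the main text and carried out in Appendix \ref{sec:proofs}): it first proves that two anti-block-diagonal operators local on disjoint sets of modes \emph{anticommute}, because each is a sum of monomials containing an odd number of creation/annihilation operators. It then runs a protocol with a qubit $Q$, Alice's modes $A$ and Bob's modes $B$: starting from $\proj{+}_Q\otimes\rho_{AB}$, Alice applies a controlled-$U_A$, Bob optionally applies $U_B$ and $U_B^{\dagger}$ at appropriate times, and Alice applies the controlled-$U_A^{\dagger}$. Using $\hat{U}_A^{\dagger}\hat{U}_B^{\dagger}\hat{U}_A\hat{U}_B=-\mathbb{I}$ for anti-diagonal local unitaries, the qubit ends in $\ket{-}_Q$ exactly when Bob acted, so Bob signals Alice with no interaction, violating no-signaling. (The paper also mentions an alternative route: $\hat{U}=e^{i\hat{H}}$ is anti-diagonal only if $\hat{H}$ is, and an anti-diagonal $\hat{H}$ is not a physical observable by Proposition \ref{prop:observables}.) Your proof needs one of these arguments --- or an equivalent one --- in place of the definitional move; as written, the central claim is assumed rather than proven.
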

Theorem  \ref{thm:Unitary} can be proven following two different approaches. The first one relies on the argument that any unitary can be expressed in the form $U=e^{iH}$ where $H$ is a Hermitian operator. Then, the $U$ can only have a non-block-diagonal structure if, and only if, the $H$ also has a non-block-diagonal structure. However, as we have shown earlier, a non-block-diagonal $H$ cannot be a physical fermionic observable (say a Hamiltonian). Therefore, the unitary can never be generated, as such an observable would not be physically allowed. 

In the second approach, the proof is derived by contradiction, and we show that the fermionic anti-diagonal unitaries violate the no-signaling principle. To demonstrate that, let us consider two parties, Alice and Bob. Alice has one qubit $Q$ and a set of $m$ fermionic modes $A$ in her possession. Bob has a set of $n$ fermionic modes $B$ in his possession. The global system is given by the Hilbert space $\mathcal{H}_{Q}\otimes \left(\mathcal{H}_{S}^A \wedge \mathcal{H}_S^B\right)$ corresponding to the qubit and fermionic modes. To start, the initial state is:
\begin{align}
 \rho_i=\ketbra{+}{+}_Q \otimes \rho_{AB},
\end{align}
where the initial qubit state is given by $\ket{+}_Q=\frac{1}{\sqrt{2}}(\ket{0}_Q + \ket{1}_Q)$ and $\rho_{AB}$ is an arbitrary state of the fermionic mode sets $A$ and $B$. Next, we consider the following steps where we process the system with evolutions that are governed by the anti-diagonal unitary operations on the fermionic spaces (see Figure~\ref{fig:circuit}). \\

\begin{figure}
	\centering
	\includegraphics[width=0.950\columnwidth]{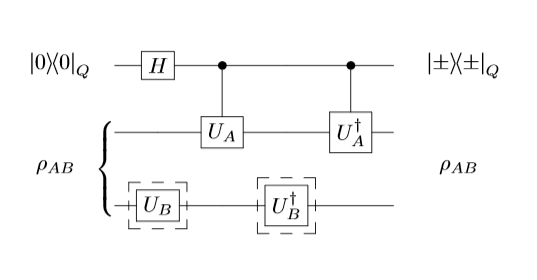}
	\caption{\label{fig:circuit} A scheme that shows the violation of non-signaling by anti-diagonal unitaries. A tripartite system $QAB$ is considered where $Q$ is a qubit system and $A$ and $B$ are fermionic systems. The $H$ represents the Hadamard gate applied on $Q$. Depending on the application of an anti-diagonal fermionic unitaries $U_{B}$ and $U_{B}^\dag$ on $B$, while $QA$ undergoes and anti-diagonal controlled-unitaries $U_{QA}=\ketbra{0}{0}_Q\otimes \mathbb{I}_A + \ketbra{1}{1}_Q \otimes U_A$ and $U_{QA}^\dag$, the local state of $Q$ changes. It implies $B$ can signal $Q$ without having an interaction with it, which violates the no-signaling principle. See text for more details.}
\end{figure}    

\noindent
Step-1: Bob applies a unitary $U_B$ on the modes $B$, where the unitary is anti-diagonal and given by 
\begin{align}
 U_B= \left(\begin{smallmatrix}0 & U_b^{oe} \\ U_b^{eo} & 0 \end{smallmatrix}\right).
\end{align}

\noindent
Step-2: Alice implements a global (control) unitary evolution on both $Q$ and the femionic modes $A$ driven by $U_{QA}=\ketbra{0}{0}_Q\otimes \mathbb{I}_A + \ketbra{1}{1}_Q \otimes U_A$, where 
\begin{align}
 U_A= \left(\begin{smallmatrix}0 & U_a^{oe} \\ U_a^{eo} & 0 \end{smallmatrix}\right),
\end{align}
is an anti-diagonal unitary applied on the modes of $A$.

\noindent
Step-3: Bob applies the unitary $U_B^\dag$ on the modes $B$.

\noindent
Step-4: Alice performs $U_{QA}^\dag$ jointly on the qubit $Q$ and the modes $A$ in her possession. \\

\noindent
After the Step-4, the resultant final state of $QAB$ becomes
\begin{align}
 \rho_f=\ketbra{-}{-}_Q \otimes \rho_{AB},
\end{align}
where the final qubit state is given by $\ket{-}_Q=\frac{1}{\sqrt{2}}(\ket{0}_Q - \ket{1}_Q)$ and the (global) state of the modes $A$ and $B$ does not change. It is clear that if Bob does not apply $U_B$ and $U_B^\dag$ in Step-1 and Step-3 respectively, the overall initial state does not change and the qubit state remains in the state $\ket{+}_Q$. However, Bob's applications of unitaries updates it to $\ket{-}_Q$. Since these qubit states are orthogonal, i.e., $\scp{+}{-}_Q=0$, just by measuring the state of $Q$, Alice can determine whether Bob has applied the unitaries $U_B$ and $U_B^\dag$ or not without having any communication with him. This communication violates the no-signaling principle, and it is exclusively due to the anti-diagonal unitaries. Therefore, the only physically allowed fermionic unitaries are the ones that are block-diagonal. We outline a more detailed version in Appendix \ref{sec:proofs}. 

Therefore, as constrained by the parity SSR, both the fermionic projectors and unitaries must be block-diagonal when the space divides into odd and even subspaces. With this, we move on to characterize general quantum operations in the following result.

\subsection{General quantum operations}
\label{subsec:general}
Here we introduce the general formalism to characterize quantum operations for finite fermionic systems. The formalism considers the algebraic structure of anti-commutation relations of creation and annihilation operators of a fermionic Hilbert space and the parity SSR. There are three different ways to describe general quantum operations or channels for distinguishable quantum systems, and all three approaches are equivalent \cite{Nielsen00}. The first one is physically motivated and describes a general quantum operation as an effect on a system while interacting with an environment. This approach is also known as the \emph{Stinespring dilation} of quantum operations. The second approach, more abstract than the first, is the \emph{operator-sum representation} (also known as \emph{Kraus representation}) of quantum operations. The third method is \emph{axiomatic}, and it is based on the complete positivity and trace preservation of operations, also known as \emph{completely-positive-trace-preserving (CPTP)} operations.

A map often describes an arbitrary quantum operation. Say $\varphi$ maps an arbitrary quantum state of $N$-modes to another state of $M$-modes. Here $N$ and $M$ are not necessarily equal. To characterize general quantum maps, let us define the complete-positivity (CP).

\begin{defi}[CP maps]
A map $\varphi$ is said to be completely positive (CP) if, $\forall \hat{A} \in  \mathcal{R}_{S}^{(N)}$, i.e., $\hat{A} \geq 0$, 
\begin{align}
(\varphi\wedge \mathbb{I}_K)(\hat{A})\geq 0, \ \ \  \forall K\in\mathbb{N}, 
\end{align}
where $\mathbb{I}_K$ is the identity operation acting on the fermionic environment with the Hilbert space $\mathcal{H}_E^K$ of $K$-modes.
\end{defi}
This definition of complete-positivity of a map enables us to present the crucial theorem regarding general quantum channels, below.

\begin{theorem}[General quantum channels] \label{thm:Sinespring-SupO-CPTP}
For an SSR fermionic quantum operation represented by a map $\varphi$, the following statements are equivalent.

\begin{enumerate}
\item (Stinespring dilation.)\label{ax:STPD} There exist a fermionic $K$-mode environment with Hilbert space $\mathcal{H}_{E}^{K}$, a parity SSR-respecting state $\omega \in \mathcal{R}_E^{K}$, and a parity SSR-respecting unitary operator $\hat{U}$ that acts on $\mathcal{H}_{S}^{N}\wedge \mathcal{H}_{E}^{K}$ with $K \geq N$, such that:
\begin{equation}
\varphi(\rho)=\tr_{E}(\hat{U}(\rho\wedge \omega)\hat{U}^{\dagger}), \quad \forall \rho\in \mathcal{R}_{S}^{N}.
\end{equation}

\item (Operator-sum representation.) \label{ax:OSR2} There exists a set of parity SSR-respecting linear operators $E_k$, where $\sum_k E_k^{\dagger}E_k = \mathbb{I}_{N}$, such that:
\begin{equation}
\varphi(\rho)=\sum_k E_k  \rho E_k^{\dagger}, \quad \forall \rho\in \mathcal{R}_{S}^{N}.
\end{equation}

\item (Axiomatic formalism.)\label{ax:CPTP2} $\varphi$ fulfills the following properties:
\begin{itemize}
\item It is trace preserving, i.e. $\tr(\varphi(\rho))= \tr(\rho)$,  $\forall \rho\in \mathcal{R}_{S}^{N}$.
\item Convex-linear, i.e. $\varphi\left(\sum_i p_i\rho_i \right)=\sum_i p_i \varphi(\rho_i)$, $\forall \rho_i \in \mathcal{R}_{S}^{N}$, where $p_i$s are the probabilities, i.e., $0\leq p_i \leq 1$ and  $\sum_i p_i=1$.
\item $\varphi$ is CP.  
\end{itemize}
\end{enumerate}
\end{theorem}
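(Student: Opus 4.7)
The plan is to prove the equivalence by establishing the cycle $\ref{ax:STPD} \Rightarrow \ref{ax:OSR2} \Rightarrow \ref{ax:CPTP2} \Rightarrow \ref{ax:STPD}$. In each step the standard qudit argument provides a template, but the work is in verifying that the parity SSR is preserved throughout, which will be a systematic consequence of Theorems \ref{thm:blockform} and \ref{thm:Unitary} together with the block structure of fermionic states from Section \ref{subsec:physstates}.

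For $\ref{ax:STPD}\Rightarrow\ref{ax:OSR2}$, I would spectrally decompose the environment state as $\omega=\sum_k q_k\proj{\chi_k}$, where by the SSR block form of $\omega$ each $\ket{\chi_k}$ can be chosen of definite parity. Fixing an orthonormal basis $\{\ket{\phi_i}\}$ of $\mathcal{H}_E^K$ also of definite parity and setting $E_{ik}=\sqrt{q_k}\,\bra{\phi_i}_E \hat{U}\ket{\chi_k}_E$, understood through the wedge/contraction rules of Proposition \ref{prop:partialtrace}, substitution into the Stinespring formula and expansion of $\tr_E$ in this basis yields $\varphi(\rho)=\sum_{i,k}E_{ik}\rho E_{ik}^\dagger$. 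The completeness relation $\sum_{i,k}E_{ik}^\dagger E_{ik}=\id_N$ follows from $\hat{U}^\dagger\hat{U}=\id$ combined with $\tr\omega=1$, while SSR-preservation of each $E_{ik}$ in the sense of Theorem \ref{thm:blockform} is exactly the statement that contracting the block-diagonal $\hat{U}$ of Theorem \ref{thm:Unitary} between two definite-parity environment vectors yields an operator that is either parity-preserving or parity-flipping in block form.

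For $\ref{ax:OSR2}\Rightarrow\ref{ax:CPTP2}$, trace preservation and convex linearity are immediate from the Kraus form. Complete positivity reduces to observing that for any $\hat{A}\ge 0$ in $\mathcal{R}_S^{N+L}$,
\begin{align*}
(\varphi\wedge\id_L)(\hat{A}) \;=\; \sum_k (E_k\wedge\id_L)\,\hat{A}\,(E_k\wedge\id_L)^\dagger,
\end{align*}
which is a sum of positive operators and hence positive; the only fermionic subtlety is verifying that $E_k\wedge\id_L$ is a well-defined SSR-respecting map on the wedge space, which follows from the block form of $E_k$ together with the definition of the wedge product on operators in Appendix \ref{sec:properties}.

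The hard part will be $\ref{ax:CPTP2}\Rightarrow\ref{ax:STPD}$, which I plan to handle by first producing a Kraus representation via a fermionic Choi--Jamio\l kowski construction and then dilating to a unitary. Concretely, I would introduce an auxiliary $N$-mode reference system $R$ and build an SSR-respecting analogue of the maximally entangled state by treating the even and odd sectors of $\mathcal{H}_S^N$ separately, since the naive vector $\sum_i \ket{e_i}_S\wedge\ket{e_i}_R$ is not parity-definite under the wedge product in general; this is precisely where care is required. Applying $\varphi\wedge\id$ to the two sector-restricted reference states produces a Choi operator that is positive by CP, and its spectral decomposition yields Kraus operators of the correct SSR block form of Theorem \ref{thm:blockform}. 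Assembling them into an isometry $V\ket{\psi}=\sum_k (E_k\ket{\psi})\wedge\ket{k}_E$ into a suitable environment and then extending $V$ to a unitary $\hat{U}$ on $\mathcal{H}_S^N\wedge\mathcal{H}_E^K$ completes the dilation; this extension must be performed separately on the even and odd subspaces so that the resulting $\hat{U}$ remains block-diagonal as demanded by Theorem \ref{thm:Unitary}, and keeping the construction compatible with the SSR at every step is the chief technical obstacle.
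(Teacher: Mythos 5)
Your proposal follows essentially the same route as the paper: the implication \ref{ax:STPD}$\Rightarrow$\ref{ax:OSR2} by contracting the block-diagonal $\hat{U}$ between definite-parity environment vectors, \ref{ax:OSR2}$\Rightarrow$\ref{ax:CPTP2} by positivity of $\sum_k (E_k\wedge\id)\hat{A}(E_k^{\dagger}\wedge\id)$, and \ref{ax:CPTP2}$\Rightarrow$\ref{ax:STPD} by a fermionic Choi--Jamio\l kowski construction followed by an isometric dilation $V\ket{\psi}=\sum_k E_k\ket{\psi}\wedge\ket{k}_E$ extended sector-wise to a block-diagonal unitary; the paper packages the same ingredients as two bi-implications (its Theorem \ref{thm:Kraus-CPTP} plus the Kraus--Stinespring equivalence) rather than a cycle, and your handling of a mixed environment state $\omega=\sum_k q_k\proj{\chi_k}$ is a small improvement over the appendix, which restates the dilation with a pure $\omega$.

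One concrete correction to your step \ref{ax:CPTP2}$\Rightarrow$\ref{ax:STPD}: the ``naive'' reference vector is in fact parity-definite. With the bases of $\mathcal{H}_S^N$ and of the reference $R$ ordered identically into even and odd sectors, each term $\ket{e_i}_S\wedge\ket{e_i}_R$ is even (even$\,\wedge\,$even and odd$\,\wedge\,$odd are both even), so $\ket{\alpha}=\tfrac{1}{2^N}\sum_i\ket{e_i}_S\wedge\ket{e_i}_R$ is an even SSR state --- this is exactly what the paper exploits to form a single Choi operator $\sigma=(\varphi\wedge\id)(\proj{\alpha})$, whose SSR-respecting eigenvectors then yield Kraus operators of the block form of Theorem \ref{thm:blockform} in one stroke. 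Your proposed workaround of building two sector-restricted reference states is therefore unnecessary, and it would actually cost you something: you would obtain two families of Kraus operators defined only on the even and odd subspaces respectively, and recombining them into a single set satisfying one completeness relation $\sum_k E_k^{\dagger}E_k=\id_N$ with each $E_k$ of definite (diagonal or anti-diagonal) block type requires an extra pairing-and-normalization argument that your outline does not supply. Using the single even state $\ket{\alpha}$ removes this obstacle entirely; with that adjustment the rest of your plan goes through as in the paper.
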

The complete proof of the theorem is outlined in Appendix \ref{sec:prooftheorem}.  The \emph{Stinespring dilation} based formalism of quantum map is physically motivated. There, a fermionic system with state $\rho \in \mathcal{R}_{S}^{N}$,  interacts with an environment in a state $\omega \in \mathcal{R}_{E}^{K}$ through a global unitary operation $U \in \Gamma_{S}^{N+K}$. The quantum operation acting on the system $\mathcal{R}_{S}^{N}$ is then given by the reduced effect on it, which is obtained by partial tracing over environment degrees of freedom. 

The linear operators $E_k$ are also known as \emph{Kraus operators}. Since the $E_k$ operators map SSR-respecting fermionic states onto themselves, they have to fulfill the restrictions imposed by Theorem \ref{thm:blockform}. The condition $ \sum_k E_k^{\dagger}E_k = \mathbb{I}_{2^N}$ guarantees that the operation $\varphi$ is trace preserving, i.e. $\tr(\varphi(\rho))= 1, \ \forall \rho$ with $\tr(\rho)=1$. The case with $ \sum_k E_k^{\dagger}E_k < \mathbb{I}_{2^N}$  corresponds to  incomplete (or selective) quantum operations, where $\tr(\varphi(\rho))< 1$ for $\tr[\rho]=1$. Note, although physical states, observables, projections and unitaries cannot take the anti-diagonal block form of preserving the parity SSR, the Kraus operators assume both block-diagonal and anti-block-diagonal forms. 

\section{Correlations in fermionic systems}
\label{sec:discussion}
With the physically meaningful notions of quantum states and operations, we now discuss some of our results' implications in the reasonable entanglement theory for finite-dimensional fermionic systems. The mode perspective has appeared to be more reasonable to study the correlations present in finite-dimensional fermionic systems. We recover the usual concepts in entanglement theory by exploiting the wedge product's analogies with the tensor product.

\subsection{Uncorrelated states}
\label{subsec:uncorrelated}
One of the main ingredients for various tasks in quantum information theory is inter-system correlations. In the case of composite systems $AB$ with distinguishable subsystems $A$ and $B$, the global Hilbert space is $\mathcal{H}_{AB}=\mathcal{H}_A \otimes \mathcal{H}_B$. The fully uncorrelated states, across the partition $A$ and $B$, are the product states of the form $\rho_A\otimes \rho_B$. Following the discussion presented in \cite{Banuls09} we see that there are three possible ways to define the uncorrelated fermionic states for a bipartite system of fermionic modes with Hilbert space $\mathcal{H}=\mathcal{H}_A \wedge \mathcal{H}_B$.

(i) The first option is to define them as the SSR states $\rho_{AB}$ that satisfy $\tr(\rho_{AB} (\hat{O}_A\wedge \hat{O}_B))=\tr(\rho_A \hat{O}_A)\tr(\rho_B \hat{O}_B)$ for all $\hat{O}_A, \hat{O}_B$ that are local Hermitian operators. Note here that  $\hat{O}_A$, and $\hat{O}_B$ do not need to respect the parity SSR. (ii) The second option is to directly define the uncorrelated SSR states $\rho_{AB}$ as the ones that are SSR product states, so that $\rho_{AB}=\rho_A\wedge \rho_B$. (iii) The third possibility is to define them as the SSR states that fulfil $\tr(\rho_{AB} (\hat{O}_A\wedge \hat{O}_B))=\tr(\rho_A \hat{O}_A)\tr(\rho_B \hat{O}_B)$ for all $\hat{O}_A, \hat{O}_B$ that are local observables in $A$ and $B$ respectively.

Notice that the difference between the definitions (i) and (iii) is the imposition of the SSR condition on the local Hermitian operators $\hat{O}_A, \hat{O}_B$, since we know from Proposition \ref{prop:observables} that Hermitian operators that respect parity SSR represent the fermionic observables. As shown in Appendix \ref{sec:proofs}, the definitions (i) and (ii) are in fact equivalent. However, the third definition is different from the other two, and this is apparent, particularly in the case of mixed states \cite{Banuls09}. The definition (iii) is the physically reasonable one. The property of a composite fermionic system being uncorrelated or correlated should be defined in physical terms, which can be experimentally tested. For this reason, we recast the definition of uncorrelated states for bipartite systems as: 

\begin{defi}[Uncorrelated states \cite{Banuls09}]\label{defi:UnCorr}
Given a bipartite fermionic state $\rho_{AB}$, it is uncorrelated across the partition $A$ and $B$, $\forall \hat{O}_A\in \Gamma_S^A$ and $\forall \hat{O}_B\in \Gamma^B_S $, if
\begin{gather}
\tr(\rho_{AB} (\hat{O}_A\wedge \hat{O}_B))=\tr(\rho_A \hat{O}_A)\tr(\rho_B \hat{O}_B), 
\end{gather}
where $\mathcal{O}_X$ are the local observables in the subspace spanned by the modes on $X=A, B$. 
\end{defi}

There are several distinctions and justifications required to arrive at this result, and that can be traced back to a fundamental difference between fermionic and distinguishable systems: the violation of local tomography principle for fermions \cite{Ariano}. In distinguishable systems, it is known that the local tomography principle is satisfied: given any bipartite state $\rho_{AB}$, the state can be fully reconstructed only by performing local measurements on $A$ and $B$. Nevertheless, as pointed out in \cite{Ariano}, the local tomography principle is not fulfilled in fermionic systems. In fact, there exist states $\rho_{AB} \neq \rho_A \wedge \rho_B$ that have the same expectation values as $\rho_A \wedge \rho_B$ for local observables on $A$ and $B$. 

\subsection{Separable and entangled states}
\label{subsec:entangled}
In quantum entanglement theory, the states created using local-operation-and-classical-communication (LOCC) on uncorrelated states are separable states. With the definition of uncorrelated and physically allowed quantum operations introduced in previous sections, the fermionic separable states shared by parties $A$ and $B$, are given by
\begin{align}
 \rho_{AB}= \sum_i p_i \ \rho_i^{AB} ,
\end{align}
where $0 \leqslant p_i \leqslant 1$, $\sum_i p_i=1$, and the states $\rho_i^{A B} \in \mathcal{R}_S^{AB}$ are uncorrelated states for the bipartition of the sets of modes $A$ and $B$. Obviously the $ \rho_{AB} \in \mathcal{R}_S^{AB}$ is an allowed fermionic state. Note that this definition is in agreement with the one introduced in \cite{Banuls09}. By definition, any bipartite state has non-vanishing correlations across the partition if it is not uncorrelated. The correlations exhibited in separable states can be quantified in terms of classical correlations and quantum discords, such as classical-quantum, quantum-classical, and quantum-quantum correlations \cite{Modi12}.

By definition, an entangled state $\rho_{AB} \in \mathcal{R}_S^{AB}$  shared by two parties $A$ and $B$, is a state that is not separable. Now, one may go on to quantify the amount of entanglement present in a state. In general, it is challenging to characterize and quantify entanglement in a state. However, entanglement in pure states can be characterized easily by using  Schmidt decomposition, a notion that we formalize for fermionic systems in the following lines. 

The Schmidt decomposition for pure bipartite fermionic states in the mode picture is similar to bipartite systems with distinguishable parties, given the theorem below.
 
\begin{theorem}[Schmidt decomposition]\label{thm:schmidt}
Given any bipartite system, any pure SSR fermionic state $\ket{\psi}_{AB} \in \mathcal{H}_{S}^{AB}$, there exist orthonormal basis $\{\ket{i}_A\} \in \mathcal{H}_{S}^A$ and $\{\ket{i}_B\} \in \mathcal{H}_{S}^B$, such that 
\begin{align}
\ket{\psi}_{AB}=\sum_i \sqrt{p_i} \ket{i}_A\wedge \ket{i}_B, 
\end{align}
where $\{p_i\}$ are probabilities.
\end{theorem}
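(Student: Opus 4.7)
The plan is to reduce the statement to the standard singular value decomposition (SVD) argument used in the distinguishable case, while exploiting the block structure forced by the parity SSR so that the local basis changes are themselves SSR-respecting unitaries, and therefore send definite-parity basis elements to definite-parity basis elements.

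First, since $\ket{\psi}_{AB}\in\mathcal{H}_S^{AB}$, it must have a definite total parity $\pi\in\{+,-\}$. I choose the reordered bases $\mathcal{B}'^{A}=\mathcal{B}_e^A\cup \mathcal{B}_o^A$ and $\mathcal{B}'^{B}=\mathcal{B}_e^B\cup \mathcal{B}_o^B$ of $\mathcal{H}^A$ and $\mathcal{H}^B$ from Definition~\ref{Prop:basis2}, and expand
\begin{align}
\ket{\psi}_{AB}=\sum_{a,b} C_{ab}\,\ket{a}_A\wedge\ket{b}_B .
\end{align}
Because $\ket{a}_A\wedge\ket{b}_B$ has parity equal to the product of the parities of $\ket{a}_A$ and $\ket{b}_B$, the matrix $C$ is forced to have entries only in the parity-compatible blocks: if $\pi=+$, only the $(ee)$ and $(oo)$ blocks are non-zero and $C$ is block-diagonal; if $\pi=-$, only the $(eo)$ and $(oe)$ blocks are non-zero and $C$ is anti-block-diagonal.

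Second, I apply SVD blockwise. In the even case I write $C_{ee}=U_e D_e V_e^\dagger$ and $C_{oo}=U_o D_o V_o^\dagger$; in the odd case, $C_{eo}=U_e D_{eo} V_o^\dagger$ and $C_{oe}=U_o D_{oe} V_e^\dagger$. In both cases the direct sums $U_A=U_e\oplus U_o$ and $V_B=V_e\oplus V_o$ are block-diagonal in the parity grading, hence by Theorem~\ref{thm:Unitary} they are legitimate SSR-respecting unitaries on $A$ and $B$. Using them to rotate the local bases produces new basis elements $\ket{i}_A$ and $\ket{i}_B$ that are linear combinations within a single parity sector, so they lie in $\mathcal{H}_S^A$ and $\mathcal{H}_S^B$ respectively. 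Substituting the SVDs back in diagonalizes the coefficient matrix in the new basis, yielding
\begin{align}
\ket{\psi}_{AB}=\sum_i \sqrt{p_i}\,\ket{i}_A\wedge\ket{i}_B ,
\end{align}
with $p_i$ equal to the squared singular values; the normalization of $\ket{\psi}_{AB}$ gives $p_i\ge 0$ and $\sum_i p_i=1$.

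The main obstacle is guaranteeing that each Schmidt vector individually has definite parity, which is precisely what is required for $\ket{i}_A\in\mathcal{H}_S^A$ and $\ket{i}_B\in\mathcal{H}_S^B$. Performing a naive SVD on an unstructured $C$ would generically mix parity sectors on either side and would not produce SSR-valid Schmidt vectors. The key observation that circumvents this is that the definite global parity of $\ket{\psi}_{AB}$ forces $C$ into the block or anti-block form above, so that SVD can be applied separately inside each block and the resulting local unitaries are automatically parity-block-diagonal. A minor additional point to check is that the pairing $\ket{i}_A\leftrightarrow\ket{i}_B$ coming from the two SVDs of the two non-trivial blocks already matches compatible parities (even-even or odd-odd when $\pi=+$, even-odd or odd-even when $\pi=-$), so each term $\ket{i}_A\wedge\ket{i}_B$ carries the total parity of $\ket{\psi}_{AB}$ as required.
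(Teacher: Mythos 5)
Your proof is correct, but it takes a genuinely different route from the one in the paper. You prove the theorem by expanding $\ket{\psi}_{AB}$ in the parity-ordered product basis, observing that the definite global parity forces the coefficient matrix $C$ into a block-diagonal (even) or anti-block-diagonal (odd) form, and then performing a singular value decomposition inside each parity-compatible block, so that the local basis rotations are automatically of the block-diagonal form of Theorem~\ref{thm:Unitary} and the resulting Schmidt vectors have definite parity. The paper instead groups the expansion as $\ket{\psi}=\sum_i\ket{f_i}\wedge\ket{h_i}$, chooses $\{\ket{f_i}\}$ to be the eigenbasis of the reduced state $\rho_A$, and uses the partial-trace formula of Proposition~\ref{prop:pureptssr} to deduce $\scp{h_j}{h_i}=p_i\delta_{ij}$, i.e.\ orthogonality of the $B$-side vectors. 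Your SVD route is more explicit about the one point that distinguishes the fermionic from the distinguishable case --- namely \emph{why} the Schmidt vectors can be taken to be SSR states on each side --- whereas the paper's argument leaves the definite parity of the $\ket{h_i}$ somewhat implicit (it follows because $\ket{\psi}$ and the eigenvectors of the block-diagonal $\rho_A$ have definite parity). The price you pay is that your argument silently relies on the bilinearity of $\wedge$ over the local bases and on the orthonormality of the product basis $\mathcal{B}'^A\wedge\mathcal{B}'^B$; both facts are established in Appendix~\ref{sec:properties} (Lemmas~\ref{lema:scprod} and \ref{lema:distribu}), so citing them would close the argument completely.
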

The proof of Theorem \ref{thm:schmidt} is outlined in Appendix \ref{sec:proofs}. The $\{p_i\}$ are called Schmidt coefficients and the number of elements in the set $\{p_i\}$ is called Schmidt number. One may directly see that, a pure fermionic state $\ket{\psi}_{AB}\in \mathcal{H}_{S}^{AB}$ is an uncorrelated state if and only if $\ket{\psi}_{AB}$ possesses Schmidt number $1$.

We can also prove one of the important results in information theory, also analogue to the distinguishable case: the purification of any fermionic SSR quantum state. This result is a corollary of the Schmidt decomposition Theorem in the mode picture \ref{thm:schmidt}. We provide the proof in Appendix \ref{sec:proofs}.
\begin{corollary}[Purification]
	\label{cor:purif}
	If $\rho \in \mathcal{R}_{S}^M$ is an $M$-mode fermionic state, then there exists a fermionic space $\mathcal{H}_E^M$ of $M$ modes and a pure state $\omega \in \mathcal{H}_{S}^M  \wedge \mathcal{H}_E^M$, such that $\tr_E(\omega)=\rho$.
\end{corollary}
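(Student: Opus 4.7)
The plan is to mirror the standard purification recipe while being careful about the parity SSR. The key input is that any fermionic density matrix $\rho \in \mathcal{R}_S^M$ is block-diagonal, so its eigenvectors can be chosen to have definite parity. First I would spectrally decompose
\[
\rho = \sum_i p_i^e \proj{\psi_i^e} + \sum_j p_j^o \proj{\psi_j^o},
\]
with orthonormal even eigenstates $\{\ket{\psi_i^e}\}$ and orthonormal odd eigenstates $\{\ket{\psi_j^o}\}$ in $\mathcal{H}_S^M$, and nonnegative eigenvalues $p_i^e, p_j^o$ summing to one.

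Next I would introduce an $M$-mode fermionic environment $\mathcal{H}_E^M$ with modes disjoint from those of $S$. Its even and odd sectors each have dimension $2^{M-1}$, so I can select orthonormal even basis vectors $\{\ket{e_i}_E\}$ and orthonormal odd basis vectors $\{\ket{o_j}_E\}$, one for each nonzero eigenvalue. Then I would define the candidate purification
\[
\ket{\omega}_{SE} = \sum_i \sqrt{p_i^e}\, \ket{\psi_i^e}\wedge \ket{e_i}_E \;+\; \sum_j \sqrt{p_j^o}\, \ket{\psi_j^o}\wedge \ket{o_j}_E.
\]
Every summand is globally even (even $\wedge$ even in the first sum, odd $\wedge$ odd in the second), so $\ket{\omega}_{SE}$ respects the parity SSR, and is in fact already in the Schmidt form of Theorem \ref{thm:schmidt}.

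Finally, I would verify $\tr_E(\proj{\omega}) = \rho$ using the simplified partial-trace identity from Section \ref{subsec:pt}, namely $\tr_E(\ketbra{a}{b}\wedge \ketbra{c}{d}) = \scp{d}{c}\ketbra{a}{b}$, applied term by term. Orthonormality of the same-parity $E$-vectors collapses the $i\neq i'$ (and $j\neq j'$) contributions to Kronecker deltas, while the even/odd orthogonality $\scp{o_j}{e_i}_E = 0$ kills every cross-sector term. What remains reassembles $\sum_i p_i^e \proj{\psi_i^e} + \sum_j p_j^o \proj{\psi_j^o} = \rho$. The only subtle point — really the only ingredient beyond the distinguishable-case argument — is the parity bookkeeping in the construction of $\ket{\omega}_{SE}$. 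It works precisely because $\dim(\mathcal{H}_E^M)_\mathrm{even} = \dim(\mathcal{H}_E^M)_\mathrm{odd} = 2^{M-1}$ exactly matches the maximum possible ranks of $\rho_e$ and $\rho_o$, so an environment of $M$ modes always suffices.
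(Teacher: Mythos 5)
Your construction is correct and is essentially the paper's own proof: both diagonalize $\rho$ into definite-parity eigenstates, attach to each a same-parity environment basis vector of an $M$-mode ancilla so that $\ket{\omega}$ is globally even, and verify $\tr_E(\proj{\omega})=\rho$ via the simplified partial-trace identity. Your explicit remark that $\dim(\mathcal{H}_E^M)_{\mathrm{even}}=\dim(\mathcal{H}_E^M)_{\mathrm{odd}}=2^{M-1}$ matches the maximal ranks of $\rho_e$ and $\rho_o$ makes the parity bookkeeping slightly more careful than the paper's, but the argument is the same.
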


\section{Conclusions}
\label{sec:conclusions}

Although the characterization of reasonable fermionic states has been done previously, the identification of reasonable fermionic quantum operations was missing so far. In this work, we study the physically allowed operations that an arbitrary fermionic system may undergo. We show that the operations must satisfy the parity SSR in order to be physically allowed. 

We have started by introducing a wedge-product structure for the Hilbert space of a composite fermionic system that considers the anti-commutation relations of the creation and annihilation operators. Such product arises from the natural separation of the fermionic system into the subsystems of fermionic modes. In addition to that, by applying the parity SSR, we have identified the physical notions of states and observables. Using the framework, we have proven the uniqueness of the partial tracing procedure for fermionic states and characterized the projection and unitary operations. In particular, we have shown that the un-physical unitary operators may lead to violations of the no-signaling principle. We have then extended our studies to characterize general quantum operations in terms of the Stinespring dilation, the operator-sum-representation, and the axiomatic representation based on completely-positive trace-preserving maps. We have shown the equivalence between these three representations. We have also explored the implications of our findings in studying uncorrelated and correlated fermionic states. In particular, we have introduced Schmidt decomposition to characterize entanglement between modes in a fermionic system. This decomposition, in turn, has enabled us to demonstrate the purification of a general fermionic state. 

Our work has drawn a parallel between the quantum information theories (QITs) for distinguishable systems and fermionic (indistinguishable) systems. In particular, we have shown a close resemblance between the quantum states and operations, except that the QIT for distinguishable cases uses the tensor-product structure of the composite Hilbert space and the fermionic cases exploit the wedge-product structure along with a restriction imposed by parity SSR. With this, we see that much of the QIT developed for distinguishable systems can now be translated to fermionic systems. One may even extend the framework developed here to particles following fractional statistics, such as anyons. \\

%\vspace{2pt}

\section*{Acknowledgements}
We thank Dr. Chiara Marletto for fruitful discussions. NTV acknowledges financial support from 'la Caixa' Foundation (ID 100010434, LCF/BQ/EU18/11650048).  ML, MLB, and AR acknowledge supports from ERC AdG NOQIA, Spanish Ministry of Economy and Competitiveness (``Severo Ochoa'' program for Centres of Excellence in R \& D (CEX2019-000910-S), Plan National FIDEUA PID2019-106901GB-I00/10.13039 / 501100011033, FPI), Fundaci\'o Privada Cellex, Fundaci\'o Mir-Puig, and from Generalitat de Catalunya (AGAUR Grant No. 2017 SGR 1341, CERCA program, QuantumCAT U16-011424, co-funded by ERDF Operational Program of Catalonia 2014-2020), MINECO-EU QUANTERA MAQS (funded by State Research Agency (AEI) PCI2019-111828-2 / 10.13039/501100011033), EU Horizon 2020 FET-OPEN OPTOLogic (Grant No 899794), the National Science Centre, Poland-Symfonia Grant No. 2016/20/W/ST4/00314, and the Spanish Ministry MINECO. MNB gratefully acknowledges financial supports from SERB-DST (CRG/2019/002199), Government of India. 

\bibliography{bib_FermiQIT}

%merlin.mbs apsrev4-1.bst 2010-07-25 4.21a (PWD, AO, DPC) hacked
%Control: key (0)
%Control: author (0) dotless jnrlst
%Control: editor formatted (1) identically to author
%Control: production of article title (0) allowed
%Control: page (1) range
%Control: year (0) verbatim
%Control: production of eprint (0) enabled
\begin{thebibliography}{45}%
\makeatletter
\providecommand \@ifxundefined [1]{%
 \@ifx{#1\undefined}
}%
\providecommand \@ifnum [1]{%
 \ifnum #1\expandafter \@firstoftwo
 \else \expandafter \@secondoftwo
 \fi
}%
\providecommand \@ifx [1]{%
 \ifx #1\expandafter \@firstoftwo
 \else \expandafter \@secondoftwo
 \fi
}%
\providecommand \natexlab [1]{#1}%
\providecommand \enquote  [1]{``#1''}%
\providecommand \bibnamefont  [1]{#1}%
\providecommand \bibfnamefont [1]{#1}%
\providecommand \citenamefont [1]{#1}%
\providecommand \href@noop [0]{\@secondoftwo}%
\providecommand \href [0]{\begingroup \@sanitize@url \@href}%
\providecommand \@href[1]{\@@startlink{#1}\@@href}%
\providecommand \@@href[1]{\endgroup#1\@@endlink}%
\providecommand \@sanitize@url [0]{\catcode `\\12\catcode `\$12\catcode
  `\&12\catcode `\#12\catcode `\^12\catcode `\_12\catcode `\%12\relax}%
\providecommand \@@startlink[1]{}%
\providecommand \@@endlink[0]{}%
\providecommand \url  [0]{\begingroup\@sanitize@url \@url }%
\providecommand \@url [1]{\endgroup\@href {#1}{\urlprefix }}%
\providecommand \urlprefix  [0]{URL }%
\providecommand \Eprint [0]{\href }%
\providecommand \doibase [0]{http://dx.doi.org/}%
\providecommand \selectlanguage [0]{\@gobble}%
\providecommand \bibinfo  [0]{\@secondoftwo}%
\providecommand \bibfield  [0]{\@secondoftwo}%
\providecommand \translation [1]{[#1]}%
\providecommand \BibitemOpen [0]{}%
\providecommand \bibitemStop [0]{}%
\providecommand \bibitemNoStop [0]{.\EOS\space}%
\providecommand \EOS [0]{\spacefactor3000\relax}%
\providecommand \BibitemShut  [1]{\csname bibitem#1\endcsname}%
\let\auto@bib@innerbib\@empty
%</preamble>
\bibitem [{\citenamefont {Shannon}(1948{\natexlab{a}})}]{Shannon48a}%
  \BibitemOpen
  \bibfield  {author} {\bibinfo {author} {\bibfnamefont {C.~E.}\ \bibnamefont
  {Shannon}},\ }\bibfield  {title} {\enquote {\bibinfo {title} {A mathematical
  theory of communication},}\ }\href {\doibase
  10.1002/j.1538-7305.1948.tb01338.x} {\bibfield  {journal} {\bibinfo
  {journal} {The Bell System Technical Journal}\ }\textbf {\bibinfo {volume}
  {27}},\ \bibinfo {pages} {379--423} (\bibinfo {year}
  {1948}{\natexlab{a}})}\BibitemShut {NoStop}%
\bibitem [{\citenamefont {Shannon}(1948{\natexlab{b}})}]{Shannon48b}%
  \BibitemOpen
  \bibfield  {author} {\bibinfo {author} {\bibfnamefont {C.~E.}\ \bibnamefont
  {Shannon}},\ }\bibfield  {title} {\enquote {\bibinfo {title} {A mathematical
  theory of communication},}\ }\href {\doibase
  10.1002/j.1538-7305.1948.tb00917.x} {\bibfield  {journal} {\bibinfo
  {journal} {The Bell System Technical Journal}\ }\textbf {\bibinfo {volume}
  {27}},\ \bibinfo {pages} {623--656} (\bibinfo {year}
  {1948}{\natexlab{b}})}\BibitemShut {NoStop}%
\bibitem [{\citenamefont {Nielsen}\ and\ \citenamefont
  {Chuang}(2000)}]{Nielsen00}%
  \BibitemOpen
  \bibfield  {author} {\bibinfo {author} {\bibfnamefont {M.~L.}\ \bibnamefont
  {Nielsen}}\ and\ \bibinfo {author} {\bibfnamefont {I.~L.}\ \bibnamefont
  {Chuang}},\ }\href@noop {} {\emph {\bibinfo {title} {Quantum Computation and
  Quantum Information}}}\ (\bibinfo  {publisher} {Cambridge: Cambridge
  University Press},\ \bibinfo {year} {2000})\BibitemShut {NoStop}%
\bibitem [{\citenamefont {Horodecki}\ \emph {et~al.}(2009)\citenamefont
  {Horodecki}, \citenamefont {Horodecki}, \citenamefont {Horodecki},\ and\
  \citenamefont {Horodecki}}]{Horodecki09}%
  \BibitemOpen
  \bibfield  {author} {\bibinfo {author} {\bibfnamefont {Ryszard}\ \bibnamefont
  {Horodecki}}, \bibinfo {author} {\bibfnamefont {Pawe\l{}}\ \bibnamefont
  {Horodecki}}, \bibinfo {author} {\bibfnamefont {Micha\l{}}\ \bibnamefont
  {Horodecki}}, \ and\ \bibinfo {author} {\bibfnamefont {Karol}\ \bibnamefont
  {Horodecki}},\ }\bibfield  {title} {\enquote {\bibinfo {title} {Quantum
  entanglement},}\ }\href {\doibase 10.1103/RevModPhys.81.865} {\bibfield
  {journal} {\bibinfo  {journal} {Rev. Mod. Phys.}\ }\textbf {\bibinfo {volume}
  {81}},\ \bibinfo {pages} {865--942} (\bibinfo {year} {2009})}\BibitemShut
  {NoStop}%
\bibitem [{\citenamefont {Schliemann}\ \emph
  {et~al.}(2001{\natexlab{a}})\citenamefont {Schliemann}, \citenamefont
  {Cirac}, \citenamefont {Ku\ifmmode~\acute{s}\else \'{s}\fi{}}, \citenamefont
  {Lewenstein},\ and\ \citenamefont {Loss}}]{Schliemann01}%
  \BibitemOpen
  \bibfield  {author} {\bibinfo {author} {\bibfnamefont {John}\ \bibnamefont
  {Schliemann}}, \bibinfo {author} {\bibfnamefont {J.~Ignacio}\ \bibnamefont
  {Cirac}}, \bibinfo {author} {\bibfnamefont {Marek}\ \bibnamefont
  {Ku\ifmmode~\acute{s}\else \'{s}\fi{}}}, \bibinfo {author} {\bibfnamefont
  {Maciej}\ \bibnamefont {Lewenstein}}, \ and\ \bibinfo {author} {\bibfnamefont
  {Daniel}\ \bibnamefont {Loss}},\ }\bibfield  {title} {\enquote {\bibinfo
  {title} {Quantum correlations in two-fermion systems},}\ }\href {\doibase
  10.1103/PhysRevA.64.022303} {\bibfield  {journal} {\bibinfo  {journal} {Phys.
  Rev. A}\ }\textbf {\bibinfo {volume} {64}},\ \bibinfo {pages} {022303}
  (\bibinfo {year} {2001}{\natexlab{a}})}\BibitemShut {NoStop}%
\bibitem [{\citenamefont {Schliemann}\ \emph
  {et~al.}(2001{\natexlab{b}})\citenamefont {Schliemann}, \citenamefont
  {Loss},\ and\ \citenamefont {MacDonald}}]{Schliemann01a}%
  \BibitemOpen
  \bibfield  {author} {\bibinfo {author} {\bibfnamefont {John}\ \bibnamefont
  {Schliemann}}, \bibinfo {author} {\bibfnamefont {Daniel}\ \bibnamefont
  {Loss}}, \ and\ \bibinfo {author} {\bibfnamefont {A.~H.}\ \bibnamefont
  {MacDonald}},\ }\bibfield  {title} {\enquote {\bibinfo {title}
  {Double-occupancy errors, adiabaticity, and entanglement of spin qubits in
  quantum dots},}\ }\href {\doibase 10.1103/PhysRevB.63.085311} {\bibfield
  {journal} {\bibinfo  {journal} {Phys. Rev. B}\ }\textbf {\bibinfo {volume}
  {63}},\ \bibinfo {pages} {085311} (\bibinfo {year}
  {2001}{\natexlab{b}})}\BibitemShut {NoStop}%
\bibitem [{\citenamefont {Eckert}\ \emph {et~al.}(2002)\citenamefont {Eckert},
  \citenamefont {Schliemann}, \citenamefont {Bruss},\ and\ \citenamefont
  {Lewenstein}}]{Eckert02}%
  \BibitemOpen
  \bibfield  {author} {\bibinfo {author} {\bibfnamefont {K.}~\bibnamefont
  {Eckert}}, \bibinfo {author} {\bibfnamefont {J.}~\bibnamefont {Schliemann}},
  \bibinfo {author} {\bibfnamefont {D.}~\bibnamefont {Bruss}}, \ and\ \bibinfo
  {author} {\bibfnamefont {M.}~\bibnamefont {Lewenstein}},\ }\bibfield  {title}
  {\enquote {\bibinfo {title} {Quantum correlations in systems of
  indistinguishable particles},}\ }\href {\doibase
  https://doi.org/10.1006/aphy.2002.6268} {\bibfield  {journal} {\bibinfo
  {journal} {Annals of Physics}\ }\textbf {\bibinfo {volume} {299}},\ \bibinfo
  {pages} {88 -- 127} (\bibinfo {year} {2002})}\BibitemShut {NoStop}%
\bibitem [{\citenamefont {Wiseman}\ and\ \citenamefont
  {Vaccaro}(2003)}]{Wiseman03}%
  \BibitemOpen
  \bibfield  {author} {\bibinfo {author} {\bibfnamefont {H.~M.}\ \bibnamefont
  {Wiseman}}\ and\ \bibinfo {author} {\bibfnamefont {John~A.}\ \bibnamefont
  {Vaccaro}},\ }\bibfield  {title} {\enquote {\bibinfo {title} {Entanglement of
  indistinguishable particles shared between two parties},}\ }\href {\doibase
  10.1103/PhysRevLett.91.097902} {\bibfield  {journal} {\bibinfo  {journal}
  {Phys. Rev. Lett.}\ }\textbf {\bibinfo {volume} {91}},\ \bibinfo {pages}
  {097902} (\bibinfo {year} {2003})}\BibitemShut {NoStop}%
\bibitem [{\citenamefont {Ghirardi}\ and\ \citenamefont
  {Marinatto}(2004)}]{Ghirardi04}%
  \BibitemOpen
  \bibfield  {author} {\bibinfo {author} {\bibfnamefont {GianCarlo}\
  \bibnamefont {Ghirardi}}\ and\ \bibinfo {author} {\bibfnamefont {Luca}\
  \bibnamefont {Marinatto}},\ }\bibfield  {title} {\enquote {\bibinfo {title}
  {General criterion for the entanglement of two indistinguishable
  particles},}\ }\href {\doibase 10.1103/PhysRevA.70.012109} {\bibfield
  {journal} {\bibinfo  {journal} {Phys. Rev. A}\ }\textbf {\bibinfo {volume}
  {70}},\ \bibinfo {pages} {012109} (\bibinfo {year} {2004})}\BibitemShut
  {NoStop}%
\bibitem [{\citenamefont {Kraus}\ \emph {et~al.}(2009)\citenamefont {Kraus},
  \citenamefont {Wolf}, \citenamefont {Cirac},\ and\ \citenamefont
  {Giedke}}]{Kraus09}%
  \BibitemOpen
  \bibfield  {author} {\bibinfo {author} {\bibfnamefont {Christina~V.}\
  \bibnamefont {Kraus}}, \bibinfo {author} {\bibfnamefont {Michael~M.}\
  \bibnamefont {Wolf}}, \bibinfo {author} {\bibfnamefont {J.~Ignacio}\
  \bibnamefont {Cirac}}, \ and\ \bibinfo {author} {\bibfnamefont {G\'eza}\
  \bibnamefont {Giedke}},\ }\bibfield  {title} {\enquote {\bibinfo {title}
  {Pairing in fermionic systems: A quantum-information perspective},}\ }\href
  {\doibase 10.1103/PhysRevA.79.012306} {\bibfield  {journal} {\bibinfo
  {journal} {Phys. Rev. A}\ }\textbf {\bibinfo {volume} {79}},\ \bibinfo
  {pages} {012306} (\bibinfo {year} {2009})}\BibitemShut {NoStop}%
\bibitem [{\citenamefont {Plastino}\ \emph {et~al.}(2009)\citenamefont
  {Plastino}, \citenamefont {Manzano},\ and\ \citenamefont
  {Dehesa}}]{Plastino09}%
  \BibitemOpen
  \bibfield  {author} {\bibinfo {author} {\bibfnamefont {A.~R.}\ \bibnamefont
  {Plastino}}, \bibinfo {author} {\bibfnamefont {D.}~\bibnamefont {Manzano}}, \
  and\ \bibinfo {author} {\bibfnamefont {J.~S.}\ \bibnamefont {Dehesa}},\
  }\bibfield  {title} {\enquote {\bibinfo {title} {Separability criteria and
  entanglement measures for pure states of n identical fermions},}\ }\href
  {http://stacks.iop.org/0295-5075/86/i=2/a=20005} {\bibfield  {journal}
  {\bibinfo  {journal} {EPL (Europhysics Letters)}\ }\textbf {\bibinfo {volume}
  {86}},\ \bibinfo {pages} {20005} (\bibinfo {year} {2009})}\BibitemShut
  {NoStop}%
\bibitem [{\citenamefont {Iemini}\ and\ \citenamefont
  {Vianna}(2013)}]{Iemini13}%
  \BibitemOpen
  \bibfield  {author} {\bibinfo {author} {\bibfnamefont {Fernando}\
  \bibnamefont {Iemini}}\ and\ \bibinfo {author} {\bibfnamefont {Reinaldo~O.}\
  \bibnamefont {Vianna}},\ }\bibfield  {title} {\enquote {\bibinfo {title}
  {Computable measures for the entanglement of indistinguishable particles},}\
  }\href {\doibase 10.1103/PhysRevA.87.022327} {\bibfield  {journal} {\bibinfo
  {journal} {Phys. Rev. A}\ }\textbf {\bibinfo {volume} {87}},\ \bibinfo
  {pages} {022327} (\bibinfo {year} {2013})}\BibitemShut {NoStop}%
\bibitem [{\citenamefont {Iemini}\ \emph {et~al.}(2014)\citenamefont {Iemini},
  \citenamefont {Debarba},\ and\ \citenamefont {Vianna}}]{Iemini14}%
  \BibitemOpen
  \bibfield  {author} {\bibinfo {author} {\bibfnamefont {Fernando}\
  \bibnamefont {Iemini}}, \bibinfo {author} {\bibfnamefont {Tiago}\
  \bibnamefont {Debarba}}, \ and\ \bibinfo {author} {\bibfnamefont
  {Reinaldo~O.}\ \bibnamefont {Vianna}},\ }\bibfield  {title} {\enquote
  {\bibinfo {title} {Quantumness of correlations in indistinguishable
  particles},}\ }\href {\doibase 10.1103/PhysRevA.89.032324} {\bibfield
  {journal} {\bibinfo  {journal} {Phys. Rev. A}\ }\textbf {\bibinfo {volume}
  {89}},\ \bibinfo {pages} {032324} (\bibinfo {year} {2014})}\BibitemShut
  {NoStop}%
\bibitem [{\citenamefont {Iemini}\ \emph {et~al.}(2015)\citenamefont {Iemini},
  \citenamefont {Maciel},\ and\ \citenamefont {Vianna}}]{Iemini15}%
  \BibitemOpen
  \bibfield  {author} {\bibinfo {author} {\bibfnamefont {Fernando}\
  \bibnamefont {Iemini}}, \bibinfo {author} {\bibfnamefont {Thiago~O.}\
  \bibnamefont {Maciel}}, \ and\ \bibinfo {author} {\bibfnamefont
  {Reinaldo~O.}\ \bibnamefont {Vianna}},\ }\bibfield  {title} {\enquote
  {\bibinfo {title} {Entanglement of indistinguishable particles as a probe for
  quantum phase transitions in the extended hubbard model},}\ }\href {\doibase
  10.1103/PhysRevB.92.075423} {\bibfield  {journal} {\bibinfo  {journal} {Phys.
  Rev. B}\ }\textbf {\bibinfo {volume} {92}},\ \bibinfo {pages} {075423}
  (\bibinfo {year} {2015})}\BibitemShut {NoStop}%
\bibitem [{\citenamefont {Debarba}\ \emph {et~al.}(2017)\citenamefont
  {Debarba}, \citenamefont {Vianna},\ and\ \citenamefont {Iemini}}]{Debarba17}%
  \BibitemOpen
  \bibfield  {author} {\bibinfo {author} {\bibfnamefont {Tiago}\ \bibnamefont
  {Debarba}}, \bibinfo {author} {\bibfnamefont {Reinaldo~O.}\ \bibnamefont
  {Vianna}}, \ and\ \bibinfo {author} {\bibfnamefont {Fernando}\ \bibnamefont
  {Iemini}},\ }\bibfield  {title} {\enquote {\bibinfo {title} {Quantumness of
  correlations in fermionic systems},}\ }\href {\doibase
  10.1103/PhysRevA.95.022325} {\bibfield  {journal} {\bibinfo  {journal} {Phys.
  Rev. A}\ }\textbf {\bibinfo {volume} {95}},\ \bibinfo {pages} {022325}
  (\bibinfo {year} {2017})}\BibitemShut {NoStop}%
\bibitem [{\citenamefont {Oszmaniec}\ and\ \citenamefont
  {Ku\ifmmode~\acute{s}\else \'{s}\fi{}}(2013)}]{Oszmaniec13}%
  \BibitemOpen
  \bibfield  {author} {\bibinfo {author} {\bibfnamefont {Micha\l{}}\
  \bibnamefont {Oszmaniec}}\ and\ \bibinfo {author} {\bibfnamefont {Marek}\
  \bibnamefont {Ku\ifmmode~\acute{s}\else \'{s}\fi{}}},\ }\bibfield  {title}
  {\enquote {\bibinfo {title} {Universal framework for entanglement
  detection},}\ }\href {\doibase 10.1103/PhysRevA.88.052328} {\bibfield
  {journal} {\bibinfo  {journal} {Phys. Rev. A}\ }\textbf {\bibinfo {volume}
  {88}},\ \bibinfo {pages} {052328} (\bibinfo {year} {2013})}\BibitemShut
  {NoStop}%
\bibitem [{\citenamefont {Oszmaniec}\ \emph {et~al.}(2014)\citenamefont
  {Oszmaniec}, \citenamefont {Gutt},\ and\ \citenamefont
  {Ku\ifmmode~\acute{s}\else \'{s}\fi{}}}]{Oszmaniec14}%
  \BibitemOpen
  \bibfield  {author} {\bibinfo {author} {\bibfnamefont {Micha\l{}}\
  \bibnamefont {Oszmaniec}}, \bibinfo {author} {\bibfnamefont {Jan}\
  \bibnamefont {Gutt}}, \ and\ \bibinfo {author} {\bibfnamefont {Marek}\
  \bibnamefont {Ku\ifmmode~\acute{s}\else \'{s}\fi{}}},\ }\bibfield  {title}
  {\enquote {\bibinfo {title} {Classical simulation of fermionic linear optics
  augmented with noisy ancillas},}\ }\href {\doibase
  10.1103/PhysRevA.90.020302} {\bibfield  {journal} {\bibinfo  {journal} {Phys.
  Rev. A}\ }\textbf {\bibinfo {volume} {90}},\ \bibinfo {pages} {020302}
  (\bibinfo {year} {2014})}\BibitemShut {NoStop}%
\bibitem [{\citenamefont {Oszmaniec}\ and\ \citenamefont
  {Ku\ifmmode~\acute{s}\else \'{s}\fi{}}(2014)}]{Oszmaniec14a}%
  \BibitemOpen
  \bibfield  {author} {\bibinfo {author} {\bibfnamefont {Micha\l{}}\
  \bibnamefont {Oszmaniec}}\ and\ \bibinfo {author} {\bibfnamefont {Marek}\
  \bibnamefont {Ku\ifmmode~\acute{s}\else \'{s}\fi{}}},\ }\bibfield  {title}
  {\enquote {\bibinfo {title} {Fraction of isospectral states exhibiting
  quantum correlations},}\ }\href {\doibase 10.1103/PhysRevA.90.010302}
  {\bibfield  {journal} {\bibinfo  {journal} {Phys. Rev. A}\ }\textbf {\bibinfo
  {volume} {90}},\ \bibinfo {pages} {010302} (\bibinfo {year}
  {2014})}\BibitemShut {NoStop}%
\bibitem [{\citenamefont {S\'arosi}\ and\ \citenamefont
  {L\'evay}(2014{\natexlab{a}})}]{Sarosi14}%
  \BibitemOpen
  \bibfield  {author} {\bibinfo {author} {\bibfnamefont {G\'abor}\ \bibnamefont
  {S\'arosi}}\ and\ \bibinfo {author} {\bibfnamefont {P\'eter}\ \bibnamefont
  {L\'evay}},\ }\bibfield  {title} {\enquote {\bibinfo {title}
  {Coffman-kundu-wootters inequality for fermions},}\ }\href {\doibase
  10.1103/PhysRevA.90.052303} {\bibfield  {journal} {\bibinfo  {journal} {Phys.
  Rev. A}\ }\textbf {\bibinfo {volume} {90}},\ \bibinfo {pages} {052303}
  (\bibinfo {year} {2014}{\natexlab{a}})}\BibitemShut {NoStop}%
\bibitem [{\citenamefont {S\'arosi}\ and\ \citenamefont
  {L\'evay}(2014{\natexlab{b}})}]{Sarosi14a}%
  \BibitemOpen
  \bibfield  {author} {\bibinfo {author} {\bibfnamefont {G\'abor}\ \bibnamefont
  {S\'arosi}}\ and\ \bibinfo {author} {\bibfnamefont {P\'eter}\ \bibnamefont
  {L\'evay}},\ }\bibfield  {title} {\enquote {\bibinfo {title} {Entanglement
  classification of three fermions with up to nine single-particle states},}\
  }\href {\doibase 10.1103/PhysRevA.89.042310} {\bibfield  {journal} {\bibinfo
  {journal} {Phys. Rev. A}\ }\textbf {\bibinfo {volume} {89}},\ \bibinfo
  {pages} {042310} (\bibinfo {year} {2014}{\natexlab{b}})}\BibitemShut
  {NoStop}%
\bibitem [{\citenamefont {Majtey}\ \emph {et~al.}(2016)\citenamefont {Majtey},
  \citenamefont {Bouvrie}, \citenamefont {Vald\'es-Hern\'andez},\ and\
  \citenamefont {Plastino}}]{Majtey16}%
  \BibitemOpen
  \bibfield  {author} {\bibinfo {author} {\bibfnamefont {A.~P.}\ \bibnamefont
  {Majtey}}, \bibinfo {author} {\bibfnamefont {P.~A.}\ \bibnamefont {Bouvrie}},
  \bibinfo {author} {\bibfnamefont {A.}~\bibnamefont {Vald\'es-Hern\'andez}}, \
  and\ \bibinfo {author} {\bibfnamefont {A.~R.}\ \bibnamefont {Plastino}},\
  }\bibfield  {title} {\enquote {\bibinfo {title} {Multipartite concurrence for
  identical-fermion systems},}\ }\href {\doibase 10.1103/PhysRevA.93.032335}
  {\bibfield  {journal} {\bibinfo  {journal} {Phys. Rev. A}\ }\textbf {\bibinfo
  {volume} {93}},\ \bibinfo {pages} {032335} (\bibinfo {year}
  {2016})}\BibitemShut {NoStop}%
\bibitem [{\citenamefont {Kruppa}\ \emph {et~al.}(2020)\citenamefont {Kruppa},
  \citenamefont {Kov\'acs}, \citenamefont {Salamon},\ and\ \citenamefont
  {Legeza}}]{Kruppa:2020rfa}%
  \BibitemOpen
  \bibfield  {author} {\bibinfo {author} {\bibfnamefont {A.T.}\ \bibnamefont
  {Kruppa}}, \bibinfo {author} {\bibfnamefont {J.}~\bibnamefont {Kov\'acs}},
  \bibinfo {author} {\bibfnamefont {P.}~\bibnamefont {Salamon}}, \ and\
  \bibinfo {author} {\bibfnamefont {\"O.}\ \bibnamefont {Legeza}},\ }\bibfield
  {title} {\enquote {\bibinfo {title} {{Entanglement and correlation in
  two-nucleon systems}},}\ }\href@noop {} {\  (\bibinfo {year} {2020})},\
  \Eprint {http://arxiv.org/abs/2006.07448} {arXiv:2006.07448 [nucl-th]}
  \BibitemShut {NoStop}%
\bibitem [{\citenamefont {Zanardi}(2002)}]{Zanardi02}%
  \BibitemOpen
  \bibfield  {author} {\bibinfo {author} {\bibfnamefont {Paolo}\ \bibnamefont
  {Zanardi}},\ }\bibfield  {title} {\enquote {\bibinfo {title} {Quantum
  entanglement in fermionic lattices},}\ }\href {\doibase
  10.1103/PhysRevA.65.042101} {\bibfield  {journal} {\bibinfo  {journal} {Phys.
  Rev. A}\ }\textbf {\bibinfo {volume} {65}},\ \bibinfo {pages} {042101}
  (\bibinfo {year} {2002})}\BibitemShut {NoStop}%
\bibitem [{\citenamefont {Shi}(2003)}]{Shi03}%
  \BibitemOpen
  \bibfield  {author} {\bibinfo {author} {\bibfnamefont {Yu}~\bibnamefont
  {Shi}},\ }\bibfield  {title} {\enquote {\bibinfo {title} {Quantum
  entanglement of identical particles},}\ }\href {\doibase
  10.1103/PhysRevA.67.024301} {\bibfield  {journal} {\bibinfo  {journal} {Phys.
  Rev. A}\ }\textbf {\bibinfo {volume} {67}},\ \bibinfo {pages} {024301}
  (\bibinfo {year} {2003})}\BibitemShut {NoStop}%
\bibitem [{\citenamefont {Friis}\ \emph {et~al.}(2013)\citenamefont {Friis},
  \citenamefont {Lee},\ and\ \citenamefont {Bruschi}}]{Friis13}%
  \BibitemOpen
  \bibfield  {author} {\bibinfo {author} {\bibfnamefont {Nicolai}\ \bibnamefont
  {Friis}}, \bibinfo {author} {\bibfnamefont {Antony~R.}\ \bibnamefont {Lee}},
  \ and\ \bibinfo {author} {\bibfnamefont {David~Edward}\ \bibnamefont
  {Bruschi}},\ }\bibfield  {title} {\enquote {\bibinfo {title} {Fermionic-mode
  entanglement in quantum information},}\ }\href {\doibase
  10.1103/PhysRevA.87.022338} {\bibfield  {journal} {\bibinfo  {journal} {Phys.
  Rev. A}\ }\textbf {\bibinfo {volume} {87}},\ \bibinfo {pages} {022338}
  (\bibinfo {year} {2013})}\BibitemShut {NoStop}%
\bibitem [{\citenamefont {Benatti}\ \emph {et~al.}(2014)\citenamefont
  {Benatti}, \citenamefont {Floreanini},\ and\ \citenamefont
  {Marzolino}}]{Benatti14}%
  \BibitemOpen
  \bibfield  {author} {\bibinfo {author} {\bibfnamefont {F.}~\bibnamefont
  {Benatti}}, \bibinfo {author} {\bibfnamefont {R.}~\bibnamefont {Floreanini}},
  \ and\ \bibinfo {author} {\bibfnamefont {U.}~\bibnamefont {Marzolino}},\
  }\bibfield  {title} {\enquote {\bibinfo {title} {Entanglement in fermion
  systems and quantum metrology},}\ }\href {\doibase
  10.1103/PhysRevA.89.032326} {\bibfield  {journal} {\bibinfo  {journal} {Phys.
  Rev. A}\ }\textbf {\bibinfo {volume} {89}},\ \bibinfo {pages} {032326}
  (\bibinfo {year} {2014})}\BibitemShut {NoStop}%
\bibitem [{\citenamefont {Puspus}\ \emph {et~al.}(2014)\citenamefont {Puspus},
  \citenamefont {Villegas},\ and\ \citenamefont {Paraan}}]{Puspus14}%
  \BibitemOpen
  \bibfield  {author} {\bibinfo {author} {\bibfnamefont {Xavier~M.}\
  \bibnamefont {Puspus}}, \bibinfo {author} {\bibfnamefont {Kristian~Hauser}\
  \bibnamefont {Villegas}}, \ and\ \bibinfo {author} {\bibfnamefont {Francis
  N.~C.}\ \bibnamefont {Paraan}},\ }\bibfield  {title} {\enquote {\bibinfo
  {title} {Entanglement spectrum and number fluctuations in the
  spin-partitioned bcs ground state},}\ }\href {\doibase
  10.1103/PhysRevB.90.155123} {\bibfield  {journal} {\bibinfo  {journal} {Phys.
  Rev. B}\ }\textbf {\bibinfo {volume} {90}},\ \bibinfo {pages} {155123}
  (\bibinfo {year} {2014})}\BibitemShut {NoStop}%
\bibitem [{\citenamefont {Tosta}\ \emph {et~al.}(2019)\citenamefont {Tosta},
  \citenamefont {Brod},\ and\ \citenamefont {Galv\~ao}}]{Tosta:2018iqh}%
  \BibitemOpen
  \bibfield  {author} {\bibinfo {author} {\bibfnamefont {Allan~D.C.}\
  \bibnamefont {Tosta}}, \bibinfo {author} {\bibfnamefont {Daniel~J.}\
  \bibnamefont {Brod}}, \ and\ \bibinfo {author} {\bibfnamefont {Ernesto~F.}\
  \bibnamefont {Galv\~ao}},\ }\bibfield  {title} {\enquote {\bibinfo {title}
  {{Quantum computation from fermionic anyons on a one-dimensional lattice}},}\
  }\href {\doibase 10.1103/PhysRevA.99.062335} {\bibfield  {journal} {\bibinfo
  {journal} {Phys. Rev. A}\ }\textbf {\bibinfo {volume} {99}},\ \bibinfo
  {pages} {062335} (\bibinfo {year} {2019})},\ \Eprint
  {http://arxiv.org/abs/1812.06807} {arXiv:1812.06807 [quant-ph]} \BibitemShut
  {NoStop}%
\bibitem [{\citenamefont {Shapourian}\ \emph {et~al.}(2017)\citenamefont
  {Shapourian}, \citenamefont {Shiozaki},\ and\ \citenamefont
  {Ryu}}]{Shapourian}%
  \BibitemOpen
  \bibfield  {author} {\bibinfo {author} {\bibfnamefont {Hassan}\ \bibnamefont
  {Shapourian}}, \bibinfo {author} {\bibfnamefont {Ken}\ \bibnamefont
  {Shiozaki}}, \ and\ \bibinfo {author} {\bibfnamefont {Shinsei}\ \bibnamefont
  {Ryu}},\ }\bibfield  {title} {\enquote {\bibinfo {title} {Partial
  time-reversal transformation and entanglement negativity in fermionic
  systems},}\ }\href {\doibase 10.1103/PhysRevB.95.165101} {\bibfield
  {journal} {\bibinfo  {journal} {Phys. Rev. B}\ }\textbf {\bibinfo {volume}
  {95}},\ \bibinfo {pages} {165101} (\bibinfo {year} {2017})}\BibitemShut
  {NoStop}%
\bibitem [{\citenamefont {Shapourian}\ and\ \citenamefont
  {Ryu}(2019)}]{Shapourian:2018ozl}%
  \BibitemOpen
  \bibfield  {author} {\bibinfo {author} {\bibfnamefont {Hassan}\ \bibnamefont
  {Shapourian}}\ and\ \bibinfo {author} {\bibfnamefont {Shinsei}\ \bibnamefont
  {Ryu}},\ }\bibfield  {title} {\enquote {\bibinfo {title} {{Entanglement
  negativity of fermions: monotonicity, separability criterion, and
  classification of few-mode states}},}\ }\href {\doibase
  10.1103/PhysRevA.99.022310} {\bibfield  {journal} {\bibinfo  {journal} {Phys.
  Rev. A}\ }\textbf {\bibinfo {volume} {99}},\ \bibinfo {pages} {022310}
  (\bibinfo {year} {2019})},\ \Eprint {http://arxiv.org/abs/1804.08637}
  {arXiv:1804.08637 [quant-ph]} \BibitemShut {NoStop}%
\bibitem [{\citenamefont {Li}(2018)}]{Li:2018xon}%
  \BibitemOpen
  \bibfield  {author} {\bibinfo {author} {\bibfnamefont {Ying}\ \bibnamefont
  {Li}},\ }\bibfield  {title} {\enquote {\bibinfo {title} {{Fault-tolerant
  fermionic quantum computation based on color code}},}\ }\href {\doibase
  10.1103/PhysRevA.98.012336} {\bibfield  {journal} {\bibinfo  {journal} {Phys.
  Rev. A}\ }\textbf {\bibinfo {volume} {98}},\ \bibinfo {pages} {012336}
  (\bibinfo {year} {2018})},\ \Eprint {http://arxiv.org/abs/1709.06245}
  {arXiv:1709.06245 [quant-ph]} \BibitemShut {NoStop}%
\bibitem [{\citenamefont {{Szalay}}\ \emph {et~al.}(2020)\citenamefont
  {{Szalay}}, \citenamefont {{Zimbor{\'a}s}}, \citenamefont {{M{\'a}t{\'e}}},
  \citenamefont {{Barcza}}, \citenamefont {{Schilling}},\ and\ \citenamefont
  {{Legeza}}}]{Szalay}%
  \BibitemOpen
  \bibfield  {author} {\bibinfo {author} {\bibfnamefont {Szil{\'a}rd}\
  \bibnamefont {{Szalay}}}, \bibinfo {author} {\bibfnamefont {Zolt{\'a}n}\
  \bibnamefont {{Zimbor{\'a}s}}}, \bibinfo {author} {\bibfnamefont
  {Mih{\'a}ly}\ \bibnamefont {{M{\'a}t{\'e}}}}, \bibinfo {author}
  {\bibfnamefont {Gergely}\ \bibnamefont {{Barcza}}}, \bibinfo {author}
  {\bibfnamefont {Christian}\ \bibnamefont {{Schilling}}}, \ and\ \bibinfo
  {author} {\bibfnamefont {{\"O}rs}\ \bibnamefont {{Legeza}}},\ }\bibfield
  {title} {\enquote {\bibinfo {title} {{Fermionic systems for quantum
  information people}},}\ }\href@noop {} {\bibfield  {journal} {\bibinfo
  {journal} {arXiv e-prints}\ ,\ \bibinfo {eid} {arXiv:2006.03087}} (\bibinfo
  {year} {2020})},\ \Eprint {http://arxiv.org/abs/2006.03087} {arXiv:2006.03087
  [quant-ph]} \BibitemShut {NoStop}%
\bibitem [{\citenamefont {Debarba}\ \emph {et~al.}(2020)\citenamefont
  {Debarba}, \citenamefont {Iemini}, \citenamefont {Giedke},\ and\
  \citenamefont {Friis}}]{Debarba_2020}%
  \BibitemOpen
  \bibfield  {author} {\bibinfo {author} {\bibfnamefont {Tiago}\ \bibnamefont
  {Debarba}}, \bibinfo {author} {\bibfnamefont {Fernando}\ \bibnamefont
  {Iemini}}, \bibinfo {author} {\bibfnamefont {Geza}\ \bibnamefont {Giedke}}, \
  and\ \bibinfo {author} {\bibfnamefont {Nicolai}\ \bibnamefont {Friis}},\
  }\bibfield  {title} {\enquote {\bibinfo {title} {Teleporting quantum
  information encoded in fermionic modes},}\ }\href {\doibase
  10.1103/physreva.101.052326} {\bibfield  {journal} {\bibinfo  {journal}
  {Physical Review A}\ }\textbf {\bibinfo {volume} {101}} (\bibinfo {year}
  {2020}),\ 10.1103/physreva.101.052326}\BibitemShut {NoStop}%
\bibitem [{\citenamefont {Wick}\ \emph {et~al.}(1952)\citenamefont {Wick},
  \citenamefont {Wightman},\ and\ \citenamefont {Wigner}}]{Wick52}%
  \BibitemOpen
  \bibfield  {author} {\bibinfo {author} {\bibfnamefont {G.~C.}\ \bibnamefont
  {Wick}}, \bibinfo {author} {\bibfnamefont {A.~S.}\ \bibnamefont {Wightman}},
  \ and\ \bibinfo {author} {\bibfnamefont {E.~P.}\ \bibnamefont {Wigner}},\
  }\bibfield  {title} {\enquote {\bibinfo {title} {The intrinsic parity of
  elementary particles},}\ }\href {\doibase 10.1103/PhysRev.88.101} {\bibfield
  {journal} {\bibinfo  {journal} {Phys. Rev.}\ }\textbf {\bibinfo {volume}
  {88}},\ \bibinfo {pages} {101--105} (\bibinfo {year} {1952})}\BibitemShut
  {NoStop}%
\bibitem [{\citenamefont {Pauli}(1940)}]{Pauli40}%
  \BibitemOpen
  \bibfield  {author} {\bibinfo {author} {\bibfnamefont {W.}~\bibnamefont
  {Pauli}},\ }\bibfield  {title} {\enquote {\bibinfo {title} {The connection
  between spin and statistics},}\ }\href {\doibase 10.1103/PhysRev.58.716}
  {\bibfield  {journal} {\bibinfo  {journal} {Phys. Rev.}\ }\textbf {\bibinfo
  {volume} {58}},\ \bibinfo {pages} {716--722} (\bibinfo {year}
  {1940})}\BibitemShut {NoStop}%
\bibitem [{\citenamefont {Schwinger}(1951)}]{Schwinger51}%
  \BibitemOpen
  \bibfield  {author} {\bibinfo {author} {\bibfnamefont {Julian}\ \bibnamefont
  {Schwinger}},\ }\bibfield  {title} {\enquote {\bibinfo {title} {The theory of
  quantized fields. i},}\ }\href {\doibase 10.1103/PhysRev.82.914} {\bibfield
  {journal} {\bibinfo  {journal} {Phys. Rev.}\ }\textbf {\bibinfo {volume}
  {82}},\ \bibinfo {pages} {914--927} (\bibinfo {year} {1951})}\BibitemShut
  {NoStop}%
\bibitem [{\citenamefont {Peskin}\ and\ \citenamefont
  {Schroeder}(1995)}]{Peskin95}%
  \BibitemOpen
  \bibfield  {author} {\bibinfo {author} {\bibfnamefont {Michael~E.}\
  \bibnamefont {Peskin}}\ and\ \bibinfo {author} {\bibfnamefont {Daniel~V.}\
  \bibnamefont {Schroeder}},\ }\href
  {http://www.slac.stanford.edu/~mpeskin/QFT.html} {\emph {\bibinfo {title}
  {{An Introduction to quantum field theory}}}}\ (\bibinfo  {publisher}
  {Addison-Wesley},\ \bibinfo {address} {Reading, USA},\ \bibinfo {year}
  {1995})\BibitemShut {NoStop}%
\bibitem [{\citenamefont {Friis}(2016)}]{Friis16}%
  \BibitemOpen
  \bibfield  {author} {\bibinfo {author} {\bibfnamefont {Nicolai}\ \bibnamefont
  {Friis}},\ }\bibfield  {title} {\enquote {\bibinfo {title} {Reasonable
  fermionic quantum information theories require relativity},}\ }\href
  {http://stacks.iop.org/1367-2630/18/i=3/a=033014} {\bibfield  {journal}
  {\bibinfo  {journal} {New Journal of Physics}\ }\textbf {\bibinfo {volume}
  {18}},\ \bibinfo {pages} {033014} (\bibinfo {year} {2016})}\BibitemShut
  {NoStop}%
\bibitem [{\citenamefont {Johansson}(2016)}]{Johansson16}%
  \BibitemOpen
  \bibfield  {author} {\bibinfo {author} {\bibfnamefont {Markus}\ \bibnamefont
  {Johansson}},\ }\bibfield  {title} {\enquote {\bibinfo {title} {Comment on
  'reasonable fermionic quantum information theories require relativity'},}\
  }\href {https://arxiv.org/abs/1610.00539} {\bibfield  {journal} {\bibinfo
  {journal} {arXiv:1610.00539}\ } (\bibinfo {year} {2016})}\BibitemShut
  {NoStop}%
\bibitem [{\citenamefont {Bogolubov}\ \emph {et~al.}(1990)\citenamefont
  {Bogolubov}, \citenamefont {Logunov}, \citenamefont {Oksak},\ and\
  \citenamefont {Todorov}}]{Bogolubov90}%
  \BibitemOpen
  \bibfield  {author} {\bibinfo {author} {\bibfnamefont {N.~N.}\ \bibnamefont
  {Bogolubov}}, \bibinfo {author} {\bibfnamefont {A.~A.}\ \bibnamefont
  {Logunov}}, \bibinfo {author} {\bibfnamefont {A.~I.}\ \bibnamefont {Oksak}},
  \ and\ \bibinfo {author} {\bibfnamefont {I.}~\bibnamefont {Todorov}},\ }\href
  {https://www.springer.com/la/book/9780792305408#aboutBook} {\emph {\bibinfo
  {title} {General Principles of Quantum Field Theory}}}\ (\bibinfo
  {publisher} {Springer Netherlands},\ \bibinfo {year} {1990})\ Chap.~\bibinfo
  {chapter} {8}\BibitemShut {NoStop}%
\bibitem [{\citenamefont {Greiner}\ and\ \citenamefont
  {Reinhardt}(1996)}]{Greiner96}%
  \BibitemOpen
  \bibfield  {author} {\bibinfo {author} {\bibfnamefont {Walter}\ \bibnamefont
  {Greiner}}\ and\ \bibinfo {author} {\bibfnamefont {Joachim}\ \bibnamefont
  {Reinhardt}},\ }\href {https://www.springer.com/gp/book/9783642614859} {\emph
  {\bibinfo {title} {Field Quantization}}}\ (\bibinfo  {publisher}
  {Springer-Verlag Berlin Heidelberg},\ \bibinfo {year} {1996})\BibitemShut
  {NoStop}%
\bibitem [{\citenamefont {Ba\~{n}uls}\ \emph {et~al.}(2009)\citenamefont
  {Ba\~{n}uls}, \citenamefont {Cirac},\ and\ \citenamefont {Wolf}}]{Banuls09}%
  \BibitemOpen
  \bibfield  {author} {\bibinfo {author} {\bibfnamefont {Mari-Carmen}\
  \bibnamefont {Ba\~{n}uls}}, \bibinfo {author} {\bibfnamefont {J~Ignacio}\
  \bibnamefont {Cirac}}, \ and\ \bibinfo {author} {\bibfnamefont {Michael~M}\
  \bibnamefont {Wolf}},\ }\bibfield  {title} {\enquote {\bibinfo {title}
  {Entanglement in systems of indistinguishable fermions},}\ }\href
  {http://stacks.iop.org/1742-6596/171/i=1/a=012032} {\bibfield  {journal}
  {\bibinfo  {journal} {Journal of Physics: Conference Series}\ }\textbf
  {\bibinfo {volume} {171}},\ \bibinfo {pages} {012032} (\bibinfo {year}
  {2009})}\BibitemShut {NoStop}%
\bibitem [{\citenamefont {D'Ariano}\ \emph {et~al.}(2014)\citenamefont
  {D'Ariano}, \citenamefont {Manessi}, \citenamefont {Perinotti},\ and\
  \citenamefont {Tosini}}]{Ariano}%
  \BibitemOpen
  \bibfield  {author} {\bibinfo {author} {\bibfnamefont {G.~M.}\ \bibnamefont
  {D'Ariano}}, \bibinfo {author} {\bibfnamefont {F.}~\bibnamefont {Manessi}},
  \bibinfo {author} {\bibfnamefont {P.}~\bibnamefont {Perinotti}}, \ and\
  \bibinfo {author} {\bibfnamefont {A.}~\bibnamefont {Tosini}},\ }\bibfield
  {title} {\enquote {\bibinfo {title} {Fermionic computation is non-local
  tomographic and violates monogamy of entanglement},}\ }\href {\doibase
  10.1209/0295-5075/107/20009} {\bibfield  {journal} {\bibinfo  {journal} {EPL
  (Europhysics Letters)}\ }\textbf {\bibinfo {volume} {107}},\ \bibinfo {pages}
  {20009} (\bibinfo {year} {2014})}\BibitemShut {NoStop}%
\bibitem [{\citenamefont {Modi}\ \emph {et~al.}(2012)\citenamefont {Modi},
  \citenamefont {Brodutch}, \citenamefont {Cable}, \citenamefont {Paterek},\
  and\ \citenamefont {Vedral}}]{Modi12}%
  \BibitemOpen
  \bibfield  {author} {\bibinfo {author} {\bibfnamefont {Kavan}\ \bibnamefont
  {Modi}}, \bibinfo {author} {\bibfnamefont {Aharon}\ \bibnamefont {Brodutch}},
  \bibinfo {author} {\bibfnamefont {Hugo}\ \bibnamefont {Cable}}, \bibinfo
  {author} {\bibfnamefont {Tomasz}\ \bibnamefont {Paterek}}, \ and\ \bibinfo
  {author} {\bibfnamefont {Vlatko}\ \bibnamefont {Vedral}},\ }\bibfield
  {title} {\enquote {\bibinfo {title} {The classical-quantum boundary for
  correlations: Discord and related measures},}\ }\href {\doibase
  10.1103/RevModPhys.84.1655} {\bibfield  {journal} {\bibinfo  {journal}
  {Reviews of Modern Physics}\ }\textbf {\bibinfo {volume} {84}},\ \bibinfo
  {pages} {1655--1707} (\bibinfo {year} {2012})}\BibitemShut {NoStop}%
\bibitem [{\citenamefont {Vidal}(2019)}]{code}%
  \BibitemOpen
  \bibfield  {author} {\bibinfo {author} {\bibfnamefont {Nicetu~Tibau}\
  \bibnamefont {Vidal}},\ }\href@noop {} {\enquote {\bibinfo {title}
  {{Mathematica Notebook for Fermionic-partial-tracing}},}\ }\bibinfo
  {howpublished} {\url{
  https://nicetutvidal.github.io/Fermionic-partial-tracing/}} (\bibinfo {year}
  {2019})\BibitemShut {NoStop}%
\end{thebibliography}%

\onecolumngrid
%\newpage
\section*{Appendix}
\appendix

Here, we include the different properties of wedge product Hilbert space, proofs of the Lemmas and Theorems, technical details and explanations eluded in the main text.

\section{Wedge product space}
\label{sec:properties}
We define the wedge product ($\wedge$) in Section \ref{sec:febo} to obtain a product notion in the fermionic space, analogous to the tensor product ($\otimes$) in the distinguishable system case. Here we check all the required properties of the so-called wedge product space, to be able to manipulate it consistently.

We have the space $\mathcal{H}$ defined as the Hilbert space spanned by the orthonormal basis 
\begin{align}
 \mathcal{B}_0=\left\{f_{i_1}^{\dagger}\ldots f_{i_k}^{\dagger} \ket{\Omega} \medspace |  \medspace  i_j\in I_N \  \text{s.t.} \medspace  i_{1}<i_2<\ldots<i_{k} \right\},
\end{align}
 
where $k\in\{0,\ldots,N\}$ and $I_N$ is the ordered set of the $N$ modes. Since we desire to define a product that can be viewed as an analogue to the tensor product for the distinguishable case, it is natural to define it from a natural basis of the space. We denote by $\mathcal{H}$ the set of elements that are products (by products of operators we mean wedge product and its compositions unless stated otherwise) of $f_i^{\dagger}$ on the $\ket{\Omega}$ element. 
One can observe that the product, as they are defined above, are also the elements of the basis set $\mathcal{B}_0$. Now this basis can equivalently be expressed, thanks to this product definition and the definition of $\ket{i}=f_i^{\dagger}\ket{\Omega}$, as
\begin{align}\label{eq:basis0}
\mathcal{B}_0=\left\{\ket{i_1}\wedge \ldots\wedge \ket{i_n} |\medspace i_j<i_{j+1}, \medspace 0\leq n \leq N \right\}, \ \text{where} \ i_j\in I_N. 
\end{align}
 
Once these properties are known, we can observe the structure of the basis $\mathcal{B}$ by fixing a number $N$ of modes. We consider the case where we extend our space to the space generated by our $N$ modes and one extra mode, which is labelled by $N+1$. It is clear that the extended basis set $\tilde{\mathcal{B}}$ contains the basis set $\mathcal{B}$. 
The basis set $\tilde{\mathcal{B}}$ is exactly the elements of the basis set of $\mathcal{B}$ by the wedge product of the elements of the set $\{\ket{\Omega},\ket{N+1}\}$. So we obtain that $\tilde{\mathcal{B}}=\mathcal{B} \wedge \{\ket{\Omega},\ket{N+1}\}$. Moreover, since the set $\{\ket{\Omega},\ket{N+1}\}$ it is exactly the canonical basis for the fermionic space of just the $(N+1)$th mode, we have found a decomposition of the basis sets. 
If we denote the basis $\mathcal{B}$ of  $N$-mode space as $\mathcal{B}^N$, and the basis set in $i$th mode as $E_i = \{\ket{\Omega},\ket{i}\} \in \mathcal{H}_i$, we have 
\begin{align}
\mathcal{B}^{N+1}=\mathcal{B}^N \wedge E_{N+1}=\mathcal{B}^{N-1} \wedge E_N \wedge E_{N+1}=\ldots= \mathcal{B}^{1} \wedge E_2\wedge \ldots \wedge E_{N+1} = E_{1} \wedge E_2\wedge \ldots \wedge E_{N+1}.
\end{align}

So, we finally obtain a decomposition of the general basis in a (wedge) product of the most elementary basis set one can think. It consists in the vacuum ($\ket{\Omega}$) and the only excited state ($\ket{i}$) of each mode ($i$th) in the system. An $N$-mode space $\mathcal{H}$ decomposes as the product of the Hilbert spaces $\mathcal{H}_i$ spanned by the orthonormal bases $E_i$, i.e.
\begin{align}
 \mathcal{H}=\bigwedge_{i=1}^N \mathcal{H}_i.
\end{align}

Once we have the natural basis, one may want to explore these bases as the (wedge) product of bases belong to the subspaces. 
Given $\ket{u}\in \mathcal{H}_u$, $ \ket{v} \in \mathcal{H}_v$ for all $k_u,k_v \in \{0,\ldots,N\}, \medspace \text{and} \medspace i_{\eta} \in I_N$, we define

\begin{align}\label{eq:basestruc}
& \ket{u}=\prod_{j_u=1}^{k_u} f_{i_{j_u}}^{\dagger} \ket{\Omega}, \quad  \ket{v}=\prod_{j_v=1}^{k_v} f_{i_{j_v}}^{\dagger} \ket{\Omega},   \qquad 
& \ket{u} \wedge \ket{v} := \prod_{j_u=1}^{k_u} f_{i_{j_u}}^{\dagger}  \prod_{j_v=1}^{k_v} f_{i_{j_v}}^{\dagger} \ket{\Omega} \in \mathcal{H}_u \wedge \mathcal{H}_v.  
\end{align}
It may be worth commenting that the product is closed in the set of elements that are products of the creation operators acting on the vacuum state $\ket{\Omega}$. In $\mathcal{H}$,  the product inherits the property of being associative from the associativity of the composition of operators. The element $\ket{\Omega}$, corresponds to the vacuum state, is the neutral element of the product, fulfilling that $\forall \ket{u}\in \mathcal{H}$, $\ket{u}\wedge \ket{\Omega}=\ket{\Omega}\wedge \ket{u}= \ket{u}$. It can also be seen that for any pair $\ket{u},\ket{v}\in\mathcal{H}_{u,v}$, $\ket{u}\wedge \ket{v} =\pm \ket{v}\wedge \ket{u}$, where the phase depends on the number of creation operators on each term. If the number of creation operators are even (or both are odd) for both the terms, then $\ket{u}\wedge \ket{v} = \ket{v}\wedge \ket{u}$, otherwise $\ket{u}\wedge \ket{v} = - \ket{v}\wedge \ket{u}$. This follows from the definition of the product in terms of the composition of the creation operators, and then applying the anti-commutation relations for the fermionic case. Using the same argument, the product between two elements $\ket{u},\ket{v} \in \mathcal{H}$ that share same creation operator, is $\ket{u}\wedge \ket{v}=0$. Beyond the bases, any two elements  $ \ket{\psi_u} \in \mathcal{H}_u$ and $ \ket{\psi_v} \in \mathcal{H}_v$, the $\ket{\psi_u}\wedge \ket{\psi_v} \in \mathcal{H}_u \wedge \mathcal{H}_v$.

One of the keys of this formalism is that, although we have fixed an order to work with, any permutation on this order of modes would lead to the same result; e.g. if we consider the product $E_2\wedge E_1$ is also a valid basis of the same space as $E_1\wedge E_2$, moreover it is the same basis with some of its elements multiplied by $-1$. 

From now on, when we refer to the product $\ket{\psi_u}\wedge \ket{\psi_v}$, we assume that they live in disjoint decompositions of spaces. With this notion of $\mathcal{H}$, we now consider its dual space.

The dual space of $\mathcal{H}$ is the space where the elements $\bra{\psi}$ are, elements such that the scalar product defined in $\mathcal{H}$ can be seen as $(\ket{\varphi},\ket{\psi})=\scp{\varphi}{\psi}$. We now want to define how the $\wedge$ product acts on this space in order to have consistency with this condition. If we consider the simplest case, we find that the dual element of $\ket{\Omega}$ is given by $\ket{\Omega}^{\dagger}=\bra{\Omega}$, so that $\scp{\Omega}{\Omega}=1$. Then it is natural to see that for an element $\ket{i}=f_i^{\dagger}\ket{\Omega}$, the corresponding dual element is $\bra{\Omega}f_i$. From the commutation relation it follows that $\bra{\Omega}f_i f_i^{\dagger}\ket{\Omega}=1$, and it makes sense of the $\dagger$ notion of the dual space. Now, in order to see how the product $\wedge$ is defined, we consider the element $\ket{i}\wedge \ket{j}=f_i^{\dagger}f_j^{\dagger}\ket{\Omega}$, and we observe that its dual element is $\bra{\Omega}f_j f_i$ and not $\bra{\Omega} f_i f_j$, because the second gives that the norm of the ket is -1, and the first 1. So we obtained that $\bra{\Omega} f_i f_j=(\ket{i}\wedge\ket{j})^{\dagger}$. Now, since we desire to have similar properties to the tensor product, we would like to have that: $(\ket{i}\wedge\ket{j})^{\dagger}=\bra{i}\wedge\bra{j}$. So, imposing this condition we find that it can be defined a consistent operation for the $\wedge$ product on the dual space by setting: $\bra{i}\wedge\bra{j}:=\bra{\Omega}f_j f_i$. 

Then, performing the same generalization as in the case for the $\mathcal{H}$ space, the duals of the basis in Eq.~\ref{eq:basestruc} are defined as
\begin{align}
& \bra{u}=\bra{\Omega}\prod_{j_u=k_u}^{1} f_{i_{j_u}}, \quad \bra{v}=\bra{\Omega}\prod_{j_v=k_v}^{1} f_{i_{j_v}}, \qquad 
& \bra{u}\wedge \bra{v}:=\bra{\Omega} \prod_{j_v=k_v}^{1} f_{i_{j_v}} \prod_{j_u=k_u}^{1} f_{i_{j_u}}.
\end{align}
Then, by following the same technique as for the $\mathcal{H}$, we have
\begin{align}
\mathcal{H}^*=\mathcal{H}_{1}^* \wedge \ldots \wedge \mathcal{H}_N^* 
\end{align}
where $\mathcal{H}_{i}^*=\{\bra{\Omega},\bra{i}\}$, that matches exactly with the dual set of $\mathcal{H}_i$. From here, it is easy to see that the wedge product behaves well in general with the dual operation. So, for  $\ket{\psi_u} \in \mathcal{H}_u$ and $\ket{\varphi_v}\in\mathcal{H}_v$, then 
\begin{equation}
\left(\ket{\psi_u}\wedge \ket{\varphi_v}\right)^{\dagger}=\bra{\psi_u}\wedge\bra{\varphi_v} \in \mathcal{H}_u^* \wedge \mathcal{H}_v^*.
\end{equation}
From the associativity of the $\wedge$ product on both spaces, we further have
\begin{align}
\left(\ket{\psi_1}\wedge\ldots\wedge\ket{\psi_n}\right)^{\dagger}=\bra{\psi_1}\wedge \ldots \wedge \bra{\psi_n}. 
\end{align} 
With the dual elements, we can easily cast a scalar products. For two basis elements it is given as in the following.
\begin{lem}
\label{lema:scprod}
The scalar product between the elements $f_{i_1}^{\dagger}\ldots f_{i_k}^{\dagger}\ket{\Omega}$ and $f_{j_1}^{\dagger}\ldots f_{j_m}^{\dagger}\ket{\Omega}$ is given by:
\begin{gather}
\bra{\Omega}f_{i_k}\ldots f_{i_1}f_{j_1}^{\dagger}\ldots f_{j_m}^{\dagger}\ket{\Omega}=\begin{cases}
0 & \text{If } k\neq m \\
\det{W_k} & k=m
\end{cases} \qquad \text{where } W_k=\begin{pmatrix}
\delta_{i_1,j_1} & \ldots & \delta_{i_1,j_k} \\
\vdots & \ddots & \vdots\\
\delta_{i_k,j_1} & \ldots & \delta_{i_k,j_k}
\end{pmatrix} 
\end{gather}
\end{lem}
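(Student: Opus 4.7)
The plan is to split the claim into the two cases $k\neq m$ and $k=m$. The first I would handle by particle-number conservation; the second I would prove by induction on $k$, reducing the identity to a Laplace expansion of $\det W_k$ along its first row.

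For $k\neq m$, I would invoke the number operator $\hat{n}=\sum_i f_i^\dagger f_i$. Using $[\hat{n},f_j^\dagger]=f_j^\dagger$ and $\hat{n}\ket{\Omega}=0$, iteration shows that $f_{j_1}^\dagger\ldots f_{j_m}^\dagger\ket{\Omega}$ is an eigenvector of $\hat{n}$ with eigenvalue $m$, while taking the adjoint makes $\bra{\Omega}f_{i_k}\ldots f_{i_1}$ a left-eigenvector with eigenvalue $k$. Since $\hat{n}$ is self-adjoint, eigenvectors with distinct eigenvalues are orthogonal, so the scalar product vanishes.

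For $k=m$, I would argue by induction on $k$. The base case $k=0$ gives $\scp{\Omega}{\Omega}=1=\det W_0$, using the empty-determinant convention. For the inductive step, I would push $f_{i_1}$ (the rightmost annihilation operator acting on the string of creation operators) through the creation operators using $\{f_{i_1},f_{j_\ell}^\dagger\}=\delta_{i_1 j_\ell}\mathbb{I}$, obtaining
\begin{align}
f_{i_1}f_{j_1}^\dagger\ldots f_{j_k}^\dagger = \sum_{\ell=1}^k (-1)^{\ell-1}\delta_{i_1 j_\ell}\, f_{j_1}^\dagger\ldots \widehat{f_{j_\ell}^\dagger}\ldots f_{j_k}^\dagger + (-1)^k f_{j_1}^\dagger\ldots f_{j_k}^\dagger f_{i_1}, \nonumber
\end{align}
where the hat denotes omission. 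The last term annihilates $\ket{\Omega}$, so substituting into the scalar product and applying the induction hypothesis to each of the remaining $(k-1)$-operator factors yields
\begin{align}
\bra{\Omega}f_{i_k}\ldots f_{i_1}f_{j_1}^\dagger\ldots f_{j_k}^\dagger\ket{\Omega}=\sum_{\ell=1}^k (-1)^{1+\ell}\delta_{i_1 j_\ell}\det W_k^{(1,\ell)}, \nonumber
\end{align}
where $W_k^{(1,\ell)}$ is the minor of $W_k$ obtained by deleting its first row and $\ell$-th column. This is precisely the Laplace expansion of $\det W_k$ along its first row, closing the induction.

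The main obstacle will be the sign bookkeeping: one has to verify that the sign $(-1)^{\ell-1}$ picked up when anti-commuting $f_{i_1}$ past the first $\ell-1$ creation operators coincides with the cofactor sign $(-1)^{1+\ell}$ appearing in the Laplace expansion. These agree because $(-1)^{\ell-1}=(-1)^{\ell+1}$, so the two expressions match term by term; nevertheless, care is required with the ordering conventions inside the omitted-operator strings and with making sure the inductive hypothesis is applied to the correct pair of $(k-1)$-tuples of indices.
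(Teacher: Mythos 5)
Your proof is correct, and while its overall architecture (case split on $k\neq m$ versus $k=m$, then induction realized as a cofactor expansion of $\det W_k$) matches the paper's, both halves are executed differently. For $k\neq m$ the paper argues by pigeonhole: some $f_{j_{l_0}}^{\dagger}$ has an index matching none of the $i_s$, so it anti-commutes all the way to $\bra{\Omega}$ and kills the expression. That argument tacitly assumes the $j$'s are pairwise distinct (otherwise pigeonhole does not directly apply, though the state then vanishes anyway); your number-operator argument, $\hat n$-eigenvalues $k$ and $m$ forcing orthogonality, sidesteps this entirely and is cleaner and more robust. For $k=m$ the paper anti-commutes the creation operator $f_{j_1}^{\dagger}$ leftward through the string of annihilation operators, producing the Laplace expansion of $\det W_k$ along its \emph{first column}, $\sum_{\ell}(-1)^{\ell+1}\delta_{i_\ell j_1}\det(\tilde A^{\ell,1})$; you instead push the annihilation operator $f_{i_1}$ rightward through the creation operators, obtaining the expansion along the \emph{first row}. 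The two are mirror images and the sign bookkeeping you flag, $(-1)^{\ell-1}=(-1)^{\ell+1}$ matching the cofactor sign, works out exactly as you say; your induction hypothesis is correctly applied to the index pair $(i_2,\dots,i_k)$ and $(j_1,\dots,\widehat{j_\ell},\dots,j_k)$, which is precisely the minor $W_k^{(1,\ell)}$. The paper also spells out the base cases $k=1,2$ explicitly, whereas you start from the empty determinant at $k=0$; either anchor is fine.
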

\begin{proof}
First, it has to be seen that if $k\neq m$ then the value is 0. If $k\neq m$ then or $m>k$ or $k>m$.\\

In the first case, by the pigeon hole principle there will be some $l_0 \in \{1,...,m\} $ such that $j_{l_0}\neq i_s \forall s\in\{1,...,k\}$, therefore $\delta_{j_{l_0} s}=0 \forall s\in \{1,...,k\}$. That is why the element $f^{\dagger}_{j_{l_0}}$ can be commuted (with the corresponding sign, towards $\bra{\Omega}$, and annihilate it :\begin{equation}
\bra{\Omega}f_{i_k}\ldots f_{i_1}f_{j_1}^{\dagger}\ldots f_{j_m}^{\dagger}\ket{\Omega}=(-1)^{k+j_{l_0}-1}\bra{\Omega}f^{\dagger}_{j_{l_0}} f_{i_k}\ldots f_{i_1}f_{j_1}^{\dagger}\ldots f_{j_m}^{\dagger}\ket{\Omega}=0
\end{equation}
Similarly this procedure can be repeated in the second case commuting the corresponding $f$ element towards the element $\ket{\Omega}$ and annihilate it as well. So just the case $k=m$ is left to see. The exposed relation will be proved by induction. For $k=1$:
\begin{equation}
\bra{\Omega}f_{i_1}f^{\dagger}_{j_1}\ket{\Omega}=\delta_{i_1 j_1}- \bra{\Omega}f^{\dagger}_{j_1}f_{i_1}\ket{\Omega}=\delta_{i_1 j_1}= \det(\delta_{i_1 j_1})
\end{equation}
For $k=2$ the proof follows by:
\begin{gather}
\bra{\Omega}f_{i_2}f_{i_1}f^{\dagger}_{j_1}f^{\dagger}_{j_2}\ket{\Omega}=\delta_{i_1 j_1}\bra{\Omega}f_{i_2}f^{\dagger}_{j_2}\ket{\Omega}-\bra{\Omega}f_{i_2}f^{\dagger}_{j_1}f_{i_1}f^{\dagger}_{j_2}\ket{\Omega}=\nonumber \\ =\delta_{i_1 j_1}\delta_{i_2 j_2}-\delta_{i_2 j_1}\bra{\Omega}f_{i_1}f^{\dagger}_{j_2}\ket{\Omega}+\bra{\Omega}f^{\dagger}_{j_1}f_{i_2}f_{i_1}f^{\dagger}_{j_2}\ket{\Omega}=\delta_{i_1 j_1}\delta_{i_2 j_2}-\delta_{i_2 j_1}\delta_{i_1 j_2}=\det \begin{pmatrix} \delta_{i_1 j_1} & \delta_{i_1 j_2}\\ \delta_{i_2 j_1} & \delta_{i_2 j_2}
\end{pmatrix}
\end{gather}
So now lets choose a fixed integer $n_0>1$, and lets assume that $\forall n\leq n_0, n\in \mathbb{N} $ the relation is true. If the relation is proved for $n_0+1$ then the proof is done. So lets consider this case: \begin{gather}
\bra{\Omega}f_{i_{n_0+1}}\ldots f_{i_1}f_{j_1}^{\dagger}\ldots f_{j_{n_0+1}}^{\dagger}\ket{\Omega}=\delta_{i_1 j_1} \bra{\Omega}f_{i_{n_0+1}}\ldots f_{i_2}f_{j_2}^{\dagger}\ldots f_{j_{n_0+1}}^{\dagger}\ket{\Omega}-\bra{\Omega}f_{i_{n_0+1}}\ldots f_{i_2}f_{j_1}^{\dagger}f_{i_1}f_{j_2}^{\dagger}\ldots f_{j_{n_0+1}}^{\dagger}\ket{\Omega}=\nonumber \\ =\delta_{i_1 j_1}\det(\tilde{A}_{n_0+1}^{1,1})-\delta_{i_2 j_1} \bra{\Omega}f_{i_{n_0+1}}\ldots f_{i_3}f_{i_1}f_{j_2}^{\dagger}\ldots f_{j_{n_0+1}}^{\dagger}\ket{\Omega} +\bra{\Omega}f_{i_{n_0+1}}\ldots f_{i_3}f_{j_1}^{\dagger}f_{i_2}f_{i_1}f_{j_2}^{\dagger}\ldots f_{j_{n_0+1}}^{\dagger}\ket{\Omega}=\nonumber \\ = \delta_{i_1 j_1}\det(\tilde{A}_{n_0+1}^{1,1})-\delta_{i_2 j_1} \det(\tilde{A}_{n_0+1}^{2,1}) +\delta_{i_3 j_1}\bra{\Omega}f_{i_{n_0+1}}\ldots f_{i_4}f_{i_2}f_{i_1}f_{j_2}^{\dagger}\ldots f_{j_{n_0+1}}^{\dagger}\ket{\Omega}+\nonumber \\ -\bra{\Omega}f_{i_{n_0+1}}\ldots f_{i_4}f_{j_1}^{\dagger}f_{i_3}f_{i_2}f_{i_1}f_{j_2}^{\dagger}\ldots f_{j_{n_0+1}}^{\dagger}\ket{\Omega}=\ldots = \sum_{k=1}^{n_0+1} (-1)^{k+1}\delta_{i_k j_1} \det(\tilde{A}_{n_0+1}^{k,1})+\nonumber \\ + \bra{\Omega}f_{j_1}^{\dagger}f_{i_{n_0+1}}\ldots f_{i_1}f_{j_2}^{\dagger}\ldots f_{j_{n_0+1}}^{\dagger}\ket{\Omega} = \sum_{k=1}^{n_0+1} (-1)^{k+1}\delta_{i_k j_1} \det(\tilde{A}_{n_0+1}^{k,1})=\det(A_{n_0+1})
\end{gather}\end{proof}

With the structure of the dual space of $\mathcal{H}$, we are now ready to the extend of the wedge product structure to the operator space of $\mathcal{H}$, that is $\Gamma$. The operator space is constructed with the Cartesian product between two basis of $\mathcal{H}$ and $\mathcal{H}^*$ and the conventional linear extension of this elements. So, we can say, without loss of generality, that any linear operator on $\mathcal{H}$, a fermionic space of N-modes, takes the form $\hat{A}=\sum_{r=1}^{2^N}\sum_{s=1}^{2^N} \alpha_r \beta_s \ketbra{u_r}{u_s}$, where $\{ \ket{u_r}, \ket{u_s} \} \in \mathcal{B}_0^N$.

Focusing on the action of the $\wedge$ product lets consider a bipartition of the modes $A|B$. From now on, we consider the decomposition of any vector or operator in terms of ordered actions of creation and annihilation operators acting on the $\ket{\Omega}$ or the $\bra{\Omega}$ elements. In case that we have the sum of many of those terms, they will be distinct one from each other. From all these considerations taken into account, we present the following results.
\begin{lem}\label{lema:distribu}
If $\hat{C},\hat{E}$ are local operators on $A$, and $\hat{D},\hat{F}$ are local operators on $B$, then $(\hat{C}\wedge \hat{D})(\hat{E}\wedge \hat{F})=(\hat{C}\hat{E}\wedge \hat{D}\hat{F})$.
\end{lem}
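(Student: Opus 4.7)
The plan is to reduce the identity to rank-one basis operators by multilinearity and then compute both sides explicitly via canonical creation/annihilation operator expansions. First, I would observe that both sides of the asserted equality are linear in each of the four operators $\hat{C},\hat{D},\hat{E},\hat{F}$ separately. Since any operator in $\Gamma_S^A$ is a linear combination of rank-one operators $\ketbra{u_A}{v_A}$ with $\ket{u_A},\ket{v_A}$ canonical basis elements of $\mathcal{H}^A$ (and similarly for $B$), it suffices to verify the identity when $\hat{C}=\ketbra{u_A}{v_A}$, $\hat{E}=\ketbra{w_A}{x_A}$, $\hat{D}=\ketbra{u_B}{v_B}$, and $\hat{F}=\ketbra{w_B}{x_B}$, each state being written in the canonical ordered form of Eq.~\ref{eq:basis0}.

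Second, I would use the wedge-product conventions for kets and bras from Eq.~\ref{eq:basestruc} and its dual to expand $\hat{C}\wedge\hat{D}=f^{\dagger}_{u_A}f^{\dagger}_{u_B}\ket{\Omega}\bra{\Omega}f_{v_B}f_{v_A}$ and the analogous expression for $\hat{E}\wedge\hat{F}$, where $f^{\dagger}_{u_A}$ abbreviates the ordered product of creation operators corresponding to $\ket{u_A}$, and so on. Multiplying the two expressions isolates a central scalar factor $\bra{\Omega}f_{v_B}f_{v_A}f^{\dagger}_{w_A}f^{\dagger}_{w_B}\ket{\Omega}$, while the surviving outer factors $f^{\dagger}_{u_A}f^{\dagger}_{u_B}\ket{\Omega}$ and $\bra{\Omega}f_{x_B}f_{x_A}$ already encode $\ket{u_A\wedge u_B}$ and $\bra{x_A\wedge x_B}$ respectively.

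Third, I would evaluate the central scalar using Lemma~\ref{lema:scprod}: it equals the determinant of a matrix whose entries are Kronecker deltas $\delta_{i_l,j_{l'}}$, with the $i$-indices running over the $A$-indices of $v_A$ followed by the $B$-indices of $v_B$, and the $j$-indices over those of $w_A$ followed by $w_B$. Because $A$ and $B$ are disjoint mode sets, every delta crossing an $A$-index with a $B$-index vanishes, so the delta-matrix is block-diagonal, $W=\left(\begin{smallmatrix}W_A & 0 \\ 0 & W_B\end{smallmatrix}\right)$, and its determinant factors without extra sign as $\det W_A\cdot\det W_B=\scp{v_A}{w_A}\scp{v_B}{w_B}$. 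Substituting back and recognising $\ketbra{u_A}{v_A}\ketbra{w_A}{x_A}=\scp{v_A}{w_A}\ketbra{u_A}{x_A}$ (and similarly on $B$) gives exactly $(\hat{C}\hat{E})\wedge(\hat{D}\hat{F})$, and the identity extends to all local operators by linearity.

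The main obstacle is the sign bookkeeping that fermionic anti-commutation would naively generate: a direct attempt to move $\hat{E}$ past $\hat{D}$ inside the product would produce a factor $(-1)^{p(D)p(E)}$ whenever both factors have odd parity. The proposal sidesteps this by never performing that reordering explicitly; instead all the combinatorics are funneled into the single central inner product, where the wedge convention places $A$-annihilators adjacent to $A$-creators (and likewise for $B$). Combined with the disjointness of the two mode sets, this forces the delta-matrix to be block-diagonal and lets the determinant absorb every sign automatically, yielding the clean factorisation claimed by the lemma.
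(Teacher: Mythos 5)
Your proposal is correct and follows essentially the same route as the paper's own proof: both reduce to rank-one matrix elements by multilinearity, isolate the central vacuum expectation value $\bra{\Omega}f_{v_B}f_{v_A}f^{\dagger}_{w_A}f^{\dagger}_{w_B}\ket{\Omega}$, and factorize it via Lemma~\ref{lema:scprod} using the block-diagonal structure of the Kronecker-delta matrix forced by the disjointness of the $A$ and $B$ mode sets. Your closing remark about why no explicit parity sign appears is exactly the content of the paper's computation showing $W=W_1W_2$.
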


\begin{proof}
Since the products fulfill the distributive and linear properties both in $\mathcal{H}$ and $\mathcal{H}^*$, it is enough to proof it for just the matrix elements, given by 
\begin{gather}
\hat{C}=c f_{\lambda_1}^{\dagger}\ldots f_{\lambda_l}^{\dagger}\proj{\Omega}f_{\mu_m}\ldots f_{\mu_1} \medspace \medspace  \hat{D}=d f_{\nu_1}^{\dagger}\ldots f_{\nu_n}^{\dagger}\proj{\Omega}f_{\pi_p}\ldots f_{\pi_1} \medspace \medspace \hat{E}=e f_{\rho_1}^{\dagger}\ldots f_{\rho_r}^{\dagger}\proj{\Omega}f_{\sigma_s}\ldots f_{\sigma_1} \medspace \medspace \hat{F}=f f_{\tau_1}^{\dagger}\ldots f_{\tau_t}^{\dagger}\proj{\Omega}f_{\xi_x}\ldots f_{\xi_1} \nonumber \\
(\hat{C}\wedge \hat{D})(\hat{E}\wedge \hat{F})=cdef( f_{\lambda_1}^{\dagger}\ldots f_{\lambda_l}^{\dagger} f_{\nu_1}^{\dagger}\ldots f_{\nu_n}^{\dagger}\proj{\Omega} f_{\pi_p}\ldots f_{\pi_1} f_{\mu_m}\ldots f_{\mu_1}) (f_{\rho_1}^{\dagger}\ldots f_{\rho_r}^{\dagger} f_{\tau_1}^{\dagger}\ldots f_{\tau_t}^{\dagger}\proj{\Omega} f_{\xi_x}\ldots f_{\xi_1} f_{\sigma_s}\ldots f_{\sigma_1})=\nonumber \\ =cdef W f_{\lambda_1}^{\dagger}\ldots f_{\lambda_l}^{\dagger} f_{\nu_1}^{\dagger}\ldots f_{\nu_n}^{\dagger}\proj{\Omega}  f_{\xi_x}\ldots f_{\xi_1} f_{\sigma_s}\ldots f_{\sigma_1}=\frac{W}{W_1 W_2}\cdot CE\wedge DF 
\end{gather}
where $W=\bra{\Omega}f_{\pi_p}\ldots f_{\pi_1} f_{\mu_m}\ldots f_{\mu_1}f_{\rho_1}^{\dagger}\ldots f_{\rho_r}^{\dagger} f_{\tau_1}^{\dagger}\ldots f_{\tau_t}^{\dagger}\ket{\Omega}$, $W_1=\bra{\Omega} f_{\mu_m}\ldots f_{\mu_1}f_{\rho_1}^{\dagger}\ldots f_{\rho_r}^{\dagger} \ket{\Omega}$, and $W_2=\bra{\Omega}f_{\pi_p}\ldots f_{\pi_1}  f_{\tau_1}^{\dagger}\ldots f_{\tau_t}^{\dagger}\ket{\Omega}$. The proof is complete if we can show that 
 $W=W_1 W_2 $. For $t\neq p$ or $r\neq m$, we have $W=W_1 W_2=0$. But if they are equal, $t= p$ and $r= m$, then 
\begin{gather}
W=\det \begin{pmatrix}
\delta_{\mu_1 ,\rho_1} & \ldots & \delta_{\mu_1, \rho_r} & \delta_{\mu_1, \tau_1} & \ldots & \delta_{\mu_1, \tau_t} \\
\vdots & \ddots & \vdots & \vdots & \ddots & \vdots \\
\delta_{\mu_m ,\rho_1} & \ldots & \delta_{\mu_m, \rho_r} & \delta_{\mu_m, \tau_1} & \ldots & \delta_{\mu_m, \tau_t} \\ 
\delta_{\pi_1 ,\rho_1} & \ldots & \delta_{\pi_1, \rho_r} & \delta_{\pi_1, \tau_1} & \ldots & \delta_{\pi_1, \tau_t} \\
\vdots & \ddots & \vdots & \vdots & \ddots & \vdots \\
\delta_{\pi_p ,\rho_1} & \ldots & \delta_{\pi_p, \rho_r} & \delta_{\pi_p, \tau_1} & \ldots & \delta_{\pi_p, \tau_t}
\end{pmatrix} = \det \begin{pmatrix}
\delta_{\mu_1 ,\rho_1} & \ldots & \delta_{\mu_1, \rho_r} & 0 & \ldots & 0\\
\vdots & \ddots & \vdots & \vdots & \ddots & \vdots \\
\delta_{\mu_m ,\rho_1} & \ldots & \delta_{\mu_m, \rho_r} & 0 & \ldots & 0 \\ 
0 & \ldots & 0 & \delta_{\pi_1, \tau_1} & \ldots & \delta_{\pi_1, \tau_t} \\
\vdots & \ddots & \vdots & \vdots & \ddots & \vdots \\
0 & \ldots & 0 & \delta_{\pi_p, \tau_1} & \ldots & \delta_{\pi_p, \tau_t}\end{pmatrix}=\nonumber \\ = \det\begin{pmatrix}
\delta_{\mu_1,\rho_1} & \ldots & \delta_{\mu_1, \rho_r}\\ \vdots & \ddots & \vdots \\ \delta_{\mu_m,\rho_1} & \ldots & \delta_{\mu_m, \rho_r}
\end{pmatrix} \det\begin{pmatrix}
\delta_{\pi_1,\tau_1} & \ldots & \delta_{\pi_1, \tau_t}\\ \vdots & \ddots & \vdots \\ \delta_{\pi_p,\tau_1} & \ldots & \delta_{\pi_p, \tau_t}
\end{pmatrix}=W_1 W_2 
\end{gather}
Where we use the Lemma \ref{lema:scprod} in the above manipulation.
\end{proof}

\begin{lem}
If $\hat{P_A}$ and $\hat{P_B}$ are hermitian operators, then $\hat{P_A}\wedge \hat{P_B}$ is also a Hermitian operator.
\label{lema:lho}
\end{lem}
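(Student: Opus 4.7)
The plan is to reduce the statement to the key operator identity
\begin{align}
(\hat{A}\wedge \hat{B})^{\dagger}=\hat{A}^{\dagger}\wedge \hat{B}^{\dagger}, \nonumber
\end{align}
valid for any $\hat{A}\in \Gamma_A$ and any $\hat{B}\in \Gamma_B$. Once this identity is in hand, the Hermiticity of $\hat{P}_A$ and $\hat{P}_B$ gives immediately $(\hat{P}_A\wedge \hat{P}_B)^{\dagger}=\hat{P}_A^{\dagger}\wedge \hat{P}_B^{\dagger}=\hat{P}_A\wedge \hat{P}_B$, which is the claim. So the whole proof is essentially the verification of one adjoint-versus-wedge identity.

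To establish the identity I would first invoke the decomposition of $\Gamma$ developed just before Lemma \ref{lema:distribu}: any local operator on $A$ admits the expansion
\begin{align}
\hat{A}=\sum_{\vec{\lambda},\vec{\mu}} \alpha_{\vec{\lambda}\vec{\mu}}\, f_{\lambda_1}^{\dagger}\ldots f_{\lambda_l}^{\dagger}\proj{\Omega} f_{\mu_m}\ldots f_{\mu_1}, \nonumber
\end{align}
with indices in $I_{f_A}$, and similarly for $\hat{B}$ with indices in $I_{f_B}$. Since the adjoint is conjugate-linear and the $\wedge$ product on operators is bilinear (as used explicitly in Lemma \ref{lema:distribu}), it suffices to check the identity on a single pair of such monomial elements. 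For those, the left-hand side $(\hat{A}\wedge \hat{B})^{\dagger}$ is obtained by the recipe underlying the definition of $\wedge$ on $\Gamma$: first write the single concatenated string with the $A$-creations, then the $B$-creations, then $\proj{\Omega}$, then the $B$-annihilations, then the $A$-annihilations; then reverse the whole word and conjugate each factor. A direct comparison shows this produces exactly the word one obtains by first forming $\hat{A}^{\dagger}$ and $\hat{B}^{\dagger}$ (each of which only reverses and conjugates its own $f$-string within the modes it acts on) and then taking their wedge product in the same canonical order.

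The only substantive step is sign bookkeeping: one needs to check that no relative parity sign arises between the $A$ and $B$ blocks when the reversal is performed, even if $\hat{A}$ or $\hat{B}$ carries an odd number of creation/annihilation operators. This is the main obstacle but is tame, for the same reason as in the proof of Lemma \ref{lema:distribu}: the ordering convention that places all $A$-creations before $B$-creations in front and all $B$-annihilations before $A$-annihilations at the back is exactly the convention that the dagger operation preserves when it reverses a single concatenated string. Having verified the monomial case, I would close by extending through (conjugate-)linearity to arbitrary $\hat{P}_A$ and $\hat{P}_B$, and conclude $(\hat{P}_A\wedge \hat{P}_B)^{\dagger}=\hat{P}_A\wedge \hat{P}_B$.
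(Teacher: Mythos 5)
Your proposal is correct, and I verified the key claim myself: for monomials $\hat{A}=f_{\lambda_1}^{\dagger}\ldots f_{\lambda_l}^{\dagger}\proj{\Omega}f_{\mu_m}\ldots f_{\mu_1}$ and $\hat{B}=f_{\nu_1}^{\dagger}\ldots f_{\nu_n}^{\dagger}\proj{\Omega}f_{\sigma_s}\ldots f_{\sigma_1}$, reversing and conjugating the concatenated word gives $f_{\mu_1}^{\dagger}\ldots f_{\mu_m}^{\dagger}f_{\sigma_1}^{\dagger}\ldots f_{\sigma_s}^{\dagger}\proj{\Omega}f_{\nu_n}\ldots f_{\nu_1}f_{\lambda_l}\ldots f_{\lambda_1}$, which is exactly $\hat{A}^{\dagger}\wedge\hat{B}^{\dagger}$ in the canonical order with no residual sign, and conjugate-bilinearity of both sides extends this from monomial pairs to general operators. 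The route is at bottom the same as the paper's — both reduce to the monomial decomposition of $\Gamma$ and to the observation that the dagger of the concatenated string respects the ordering convention — but the packaging differs: the paper fixes Hermitian $\hat{P}_A,\hat{P}_B$ from the start and checks directly that the product's coefficients $\alpha_{\vec{\lambda}\vec{\nu};\vec{\mu}\vec{\sigma}}=\eta_{\vec{\lambda};\vec{\mu}}\beta_{\vec{\nu};\vec{\sigma}}$ satisfy the Hermiticity symmetry inherited from $\eta=\eta^{*}$ and $\beta=\beta^{*}$, whereas you first establish the strictly stronger identity $(\hat{A}\wedge\hat{B})^{\dagger}=\hat{A}^{\dagger}\wedge\hat{B}^{\dagger}$ for arbitrary local operators and then read off the lemma as a one-line corollary. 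Your version buys a reusable structural fact (the analogue of $(\hat{A}\otimes\hat{B})^{\dagger}=\hat{A}^{\dagger}\otimes\hat{B}^{\dagger}$, useful elsewhere, e.g., wherever adjoints of wedge products of non-Hermitian Kraus-type operators appear) at the cost of making explicit the sign bookkeeping that the paper's coefficient comparison leaves implicit; the paper's version is marginally shorter for this single lemma. No gap either way.
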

\begin{proof}
Lets write down $\hat{P_A}$ and $\hat{P_B}$ as
\begin{gather}
\hat{P_A}=\sum_{\lambda_1,\ldots, \lambda_l,  \mu_1,\ldots,\mu_m} \eta_{\lambda_1,\ldots, \lambda_l  \mu_1,\ldots,\mu_m} f_{\lambda_1}^{\dagger}\ldots f_{\lambda_l}^{\dagger} \proj{\Omega} f_{\mu_m}\ldots f_{\mu_1}, \quad \hat{P_B}=\sum_{\nu_1,\ldots, \nu_n,  \sigma_1,\ldots,\sigma_s} \beta_{\nu_1,\ldots, \nu_n  \sigma_1,\ldots,\sigma_s} f_{\nu_1}^{\dagger}\ldots f_{\nu_n}^{\dagger} \proj{\Omega} f_{\sigma_s}\ldots f_{\sigma_1}.
\end{gather}
Since, $\hat{P_A}$ and $\hat{P_B}$ are Hermitian the coefficients have to fulfill the following relationship 
\begin{equation}
\eta_{\lambda_1,\ldots, \lambda_l,  \mu_1,\ldots,\mu_m}=\eta^{*}_{\mu_1,\ldots, \mu_m,  \lambda_1,\ldots,\lambda_l}, \ \ \mbox{and} \ \   \beta_{\nu_1,\ldots, \nu_n,  \sigma_1,\ldots,\sigma_s}=\beta^{*}_{\sigma_1,\ldots, \sigma_s,  \nu_1,\ldots,\nu_n}.
\end{equation}
Now, with the behavior of the wedge product on the $\mathcal{H}$ and $\mathcal{H}^*$ spaces, the $\hat{P_A}\wedge \hat{P_B}$ can be expressed as 
\begin{gather}
\hat{P_A}\wedge \hat{P_B} =\sum_{\lambda_1,\ldots, \lambda_l , \mu_1,\ldots,\mu_m, \nu_1,\ldots, \nu_n  \sigma_1,\ldots,\sigma_s} \eta_{\lambda_1,\ldots, \lambda_l  \mu_1,\ldots,\mu_m} \beta_{\nu_1,\ldots, \nu_n  \sigma_1,\ldots,\sigma_s}   f_{\lambda_1}^{\dagger}\ldots f_{\lambda_l}^{\dagger}  f_{\nu_1}^{\dagger}\ldots f_{\nu_n}^{\dagger} \proj{\Omega}  f_{\sigma_s}\ldots f_{\sigma_1} f_{\mu_m}\ldots f_{\mu_1}.
\end{gather}
It will be hermitian if and only if $\alpha_{\lambda_1,\lambda_l,\nu_1,\nu_n; \mu_1,\mu_m,\sigma_1,\sigma_s} = \alpha^{*}_{\lambda_1,\lambda_l,\nu_1,\nu_n; \mu_1,\mu_m,\sigma_1,\sigma_s} $, where $\alpha_{\lambda_1,\lambda_l,\nu_1,\nu_n; \mu_1,\mu_m,\sigma_1,\sigma_s} = \eta_{\lambda_1,\ldots, \lambda_l,  \mu_1,\ldots,\mu_m} \beta_{\nu_1,\ldots, \nu_n,  \sigma_1,\ldots,\sigma_s}  $ and $\alpha^{*}_{\lambda_1,\lambda_l,\nu_1,\nu_n; \mu_1,\mu_m,\sigma_1,\sigma_s} = \eta^{*}_{\lambda_1,\ldots, \lambda_l,  \mu_1,\ldots,\mu_m} \beta^{*}_{\nu_1,\ldots, \nu_n,  \sigma_1,\ldots,\sigma_s} $. Using the impositions $\eta=\eta^*$ and $\beta=\beta^*$, we see that indeed $\alpha_{\lambda_1,\lambda_l,\nu_1,\nu_n; \mu_1,\mu_m,\sigma_1,\sigma_s} = \alpha^{*}_{\lambda_1,\lambda_l,\nu_1,\nu_n; \mu_1,\mu_m,\sigma_1,\sigma_s} $, therefore, $\hat{P_A}\wedge \hat{P_B}$ is Hermitian.
\end{proof}

\begin{lem}
If $\hat{C}$ is a local operator on $A$ and $\hat{D}$ a local operator on $B$, then $\tr(\hat{C}\wedge D)=\tr(\hat{C}) \tr(D)$.
\label{lema:trprod}
\end{lem}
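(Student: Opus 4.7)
By bilinearity of the trace, of the wedge product, and of the right-hand side in $\hat{C}$ and $\hat{D}$, it suffices to verify the identity on the monomial building blocks
\begin{align}
\hat{C}=f_{\lambda_1}^{\dagger}\ldots f_{\lambda_l}^{\dagger}\proj{\Omega}f_{\mu_m}\ldots f_{\mu_1}, \qquad
\hat{D}=f_{\nu_1}^{\dagger}\ldots f_{\nu_n}^{\dagger}\proj{\Omega}f_{\sigma_s}\ldots f_{\sigma_1}, \nonumber
\end{align}
with all $\lambda_\bullet,\mu_\bullet$ in the mode set $A$ and all $\nu_\bullet,\sigma_\bullet$ in $B$. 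So I would fix such $\hat{C},\hat{D}$ from the outset and do the rest at the level of these monomials.

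To compute $\tr(\hat{C}\wedge \hat{D})$, the plan is to use the product basis of $\mathcal{H}_A \wedge \mathcal{H}_B$ supplied by the decomposition $\mathcal{H}_{AB}=\bigwedge_{i}\mathcal{H}_i$ established in Section~\ref{subsec:wedge}, namely vectors of the form $\ket{e_\alpha}\wedge \ket{e_\beta}$ with $\ket{e_\alpha}\in \mathcal{B}_0^A$ and $\ket{e_\beta}\in\mathcal{B}_0^B$. The single matrix element
\begin{align}
(\bra{e_\alpha}\wedge \bra{e_\beta})\,(\hat{C}\wedge\hat{D})\,(\ket{e_\alpha}\wedge \ket{e_\beta}) \nonumber
\end{align}
then becomes a vacuum expectation of a long string of creation and annihilation operators, half of them living on $A$ and half on $B$. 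Since operators on disjoint mode sets anti-commute, I can reorder the string to group all $A$-operators together and all $B$-operators together, at the price of an overall sign $(-1)^{\kappa}$ with $\kappa$ determined by the numbers of $f,f^{\dagger}$ on each side. Once the string is separated, the vacuum factorises (the operators act on independent Fock sectors) and the matrix element becomes
\begin{align}
(-1)^{\kappa}\,\bra{e_\alpha}\hat{C}\ket{e_\alpha}\,\bra{e_\beta}\hat{D}\ket{e_\beta}. \nonumber
\end{align}

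The main obstacle is checking that this sign $(-1)^{\kappa}$ is actually $+1$ whenever the matrix element is non-zero. By Lemma~\ref{lema:scprod}, both $\bra{e_\alpha}\hat{C}\ket{e_\alpha}$ and $\bra{e_\beta}\hat{D}\ket{e_\beta}$ vanish unless the ``creation'' and ``annihilation'' index sets match, forcing $l=m=|\alpha|$ and $n=s=|\beta|$. A direct bookkeeping of the commutations needed to separate the $A$- and $B$-strings on the bra and ket sides then gives $\kappa \equiv 2|\beta|(|\alpha|+|\lambda|)+2|\sigma|(|\mu|+|\alpha|) \equiv 0 \pmod 2$ under those constraints, so the sign is trivial. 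Hence
\begin{align}
\tr(\hat{C}\wedge \hat{D})=\sum_{\alpha,\beta}\bra{e_\alpha}\hat{C}\ket{e_\alpha}\,\bra{e_\beta}\hat{D}\ket{e_\beta}=\tr(\hat{C})\,\tr(\hat{D}), \nonumber
\end{align}
and the lemma follows by linearity in $\hat{C}$ and $\hat{D}$.
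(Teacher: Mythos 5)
Your proof is correct and follows essentially the same route as the paper's: both reduce by bilinearity to monomials in creation/annihilation operators and invoke Lemma~\ref{lema:scprod} to see that only the ``diagonal'' terms survive the trace, each contributing with coefficient $1$. The only difference is cosmetic --- you sum diagonal matrix elements in the product basis and track the reordering sign explicitly (correctly noting it is trivial once the index-matching constraints force $l=m$ and $n=s$), whereas the paper reads off the trace of each monomial $\ketbra{u}{v}$ directly as the scalar product $\scp{v}{u}$.
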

\begin{proof}
Consider the $\hat{C}$ and $\hat{D}$ as
\begin{gather}
\hat{C}=\sum_{\vec{\lambda}, \vec{\mu}} \eta_{\vec{\lambda}, \vec{\mu}} f_{\lambda_1}^{\dagger}\ldots f_{\lambda_l}^{\dagger} \proj{\Omega} f_{\mu_m}\ldots f_{\mu_1}, \quad \hat{D}=\sum_{\vec{\nu}, \vec{\sigma}} \beta_{\vec{\nu}, \vec{\sigma}} f_{\nu_1}^{\dagger}\ldots f_{\nu_n}^{\dagger} \proj{\Omega} f_{\sigma_s}\ldots f_{\sigma_1}.
\end{gather}
Then it can be seen that by the imposition of a specific order, as seen in the proof of Lemma \ref{lema:scprod}:
\begin{gather}
\hat{C}\wedge \hat{D}= \sum_{\vec{\lambda}, \vec{\mu},\vec{\nu}, \vec{\sigma}}  \eta_{\vec{\lambda}, \vec{\mu}} \beta_{\vec{\nu}, \vec{\sigma}}  f_{\lambda_1}^{\dagger}\ldots f_{\lambda_l}^{\dagger} f_{\nu_1}^{\dagger}\ldots f_{\nu_n}^{\dagger} \proj{\Omega} f_{\sigma_s}\ldots f_{\sigma_1}  f_{\mu_m}\ldots f_{\mu_1}, \\ \tr(\hat{C}\wedge \hat{D})= \sum_{\vec{\lambda}, \vec{\nu} } \eta_{\vec{\lambda}, \vec{\lambda}} \beta_{\vec{\nu}, \vec{\nu}}=\sum_{\vec{\lambda}} \eta_{\vec{\lambda},\vec{\lambda}} \cdot \sum_{\vec{\nu}} \beta_{\vec{\nu}, \vec{\nu}} =\tr(\hat{C}) \cdot \tr(\hat{D}). 
\end{gather}
\end{proof}

\begin{lem}
For a fermionic space $\mathcal{H}$ of $N$ modes labeled by $x_1,\ldots, x_{N}$, the projector $\proj{\Omega}$ in terms of the creation and annihilation operators is
\begin{equation}
\proj{\Omega}=f_{x_{N}}\ldots f_{x_1}f^{\dagger}_{x_1}\ldots f^{\dagger}_{x_{N}}
\end{equation}
\label{lema:0}
\end{lem}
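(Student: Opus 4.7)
The plan is to verify the identity by checking both sides as operators on the orthonormal basis $\mathcal{B}$ of Proposition~\ref{Prop:basis1}. Since $\proj{\Omega}$ is the rank-one projector onto the vacuum, it suffices to show that the operator $R := f_{x_N}\ldots f_{x_1} f^{\dagger}_{x_1}\ldots f^{\dagger}_{x_N}$ satisfies $R\ket{\Omega}=\ket{\Omega}$ and $R\ket{u}=0$ for every basis element $\ket{u}\in \mathcal{B}\setminus\{\ket{\Omega}\}$; by linearity this uniquely determines $R=\proj{\Omega}$.

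For the first part, I would proceed by induction on $N$ using the anti-commutation relations. Isolating the innermost pair $f_{x_1} f^{\dagger}_{x_1}$ and writing $f_{x_1} f^{\dagger}_{x_1} = \mathbb{I} - f^{\dagger}_{x_1} f_{x_1}$, I would anti-commute $f_{x_1}$ past the remaining creation operators $f^{\dagger}_{x_2}\ldots f^{\dagger}_{x_N}$ until it annihilates $\ket{\Omega}$, killing the $f^{\dagger}_{x_1} f_{x_1}$ piece. This reduces the expression to $f_{x_N}\ldots f_{x_2} f^{\dagger}_{x_2}\ldots f^{\dagger}_{x_N}\ket{\Omega}$, which is exactly the analogous operator on $N-1$ modes applied to $\ket{\Omega}$. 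The base case $N=1$ is immediate from $f_{x_1} f^{\dagger}_{x_1}\ket{\Omega}=\ket{\Omega}$.

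For the second part, any non-vacuum basis element has the form $\ket{u}=f^{\dagger}_{i_1}\ldots f^{\dagger}_{i_k}\ket{\Omega}$ with $k\geq 1$ and $i_j\in\{x_1,\ldots,x_N\}$. Then $R\ket{u}$ contains the string $f^{\dagger}_{x_1}\ldots f^{\dagger}_{x_N} f^{\dagger}_{i_1}\ldots f^{\dagger}_{i_k}$, in which $f^{\dagger}_{i_1}$ appears twice. Using $\{f^{\dagger}_{i},f^{\dagger}_{j}\}=0$ one can anti-commute the leftmost copy of $f^{\dagger}_{i_1}$ until it meets the other copy, at which point $(f^{\dagger}_{i_1})^2=0$ forces the entire expression to vanish.

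The argument is essentially bookkeeping; the only subtlety is sign-tracking in the inductive step, but since the spurious term contains $f_{x_1}\ket{\Omega}=0$ the signs never actually contribute to the final value. Combining the two checks gives $R=\proj{\Omega}$ as an operator identity on $\mathcal{H}$.
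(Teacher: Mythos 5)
Your proof is correct: checking that $R=f_{x_N}\ldots f_{x_1}f^{\dagger}_{x_1}\ldots f^{\dagger}_{x_N}$ fixes $\ket{\Omega}$ (by induction, using $f_{x_1}f^{\dagger}_{x_1}=\mathbb{I}-f^{\dagger}_{x_1}f_{x_1}$ and $f_{x_1}\ket{\Omega}=0$) and annihilates every other basis element (via the repeated creation operator and $(f^{\dagger}_{i})^2=0$) determines $R=\proj{\Omega}$ on the basis $\mathcal{B}$. This is exactly the computation the paper leaves implicit when it states that the result ``can be easily followed from the structures of creation and annihilation operators,'' so you have simply supplied the details of the same approach.
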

\begin{proof}
 The proof can be easily followed from the structures of creation and annihilation operators. 
\end{proof}
This lemma enables us to say that any object, in the operator space of the Hilbert space $\mathcal{H}$, is a linear combination of products of creations and annihilation operators. Moreover, the following Corollary can be derived:

\begin{corollary}
If $\hat{O}_A$ is a fermionic local operator on the subspace spanned by the set of modes $A=\{a_1, \dots, a_M\}$, then $\hat{O}_A$ can be written as a sum of products of creation and annihilation operators of the modes in $A$ alone. 
\end{corollary}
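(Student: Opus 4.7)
The plan is to reduce the statement directly to Lemma \ref{lema:0}, using the fact that the operator space $\Gamma^A$ associated to the modes in $A$ is spanned by the outer products of basis vectors of $\mathcal{H}^A$.

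First, I would recall the characterization of the Hilbert space $\mathcal{H}^A$ spanned by the modes in $A=\{a_1,\ldots,a_M\}$: an orthonormal basis is given by elements of the form $f_{i_1}^{\dagger}\cdots f_{i_l}^{\dagger}\ket{\Omega}$ with $i_1<\cdots<i_l$ and each $i_k\in A$, together with the vacuum itself. By the construction of the operator space in Section \ref{subsec:wedge}, any linear operator $\hat{O}_A$ on $\mathcal{H}^A$ can be expanded as a finite linear combination of elementary operators of the form
\begin{align}
\hat{O}_A=\sum_{\vec{s},\vec{r}} \alpha_{\vec{s},\vec{r}}\,(f_{a_1}^{\dagger})^{s_1}\cdots (f_{a_M}^{\dagger})^{s_M}\,\proj{\Omega}_A\,(f_{a_M})^{r_M}\cdots (f_{a_1})^{r_1},\nonumber
\end{align}
where $s_i,r_j\in\{0,1\}$ and $\proj{\Omega}_A$ is the projector on the vacuum of the $A$-subsystem.

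Next, I would invoke Lemma \ref{lema:0} applied to the $M$ modes of $A$, which gives the explicit identity $\proj{\Omega}_A=f_{a_M}\cdots f_{a_1}f_{a_1}^{\dagger}\cdots f_{a_M}^{\dagger}$. Substituting this expression for $\proj{\Omega}_A$ into each term of the above expansion replaces the only ``non-operatorial'' object in the expansion by a product of creation and annihilation operators of the modes in $A$. Hence each elementary term becomes a product of $f_{a_k}^{\dagger}$'s and $f_{a_k}$'s (with $a_k\in A$), and $\hat{O}_A$ is the corresponding finite linear combination of such products, which is exactly the claimed form.

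I do not expect a real obstacle here: the statement is essentially a corollary of Lemma \ref{lema:0} together with the definition of $\Gamma^A$ as the span of outer products of basis kets and bras. The only point that needs some care is bookkeeping, namely making sure that after substituting $\proj{\Omega}_A$ one ends up with a well-defined product of operators acting only on modes in $A$ and not implicitly on modes outside $A$; this is automatic because every creation and annihilation operator appearing (both from the original $(f_{a_i}^{\dagger})^{s_i}$, $(f_{a_j})^{r_j}$ factors and from the Lemma \ref{lema:0} expression for $\proj{\Omega}_A$) carries an index in $A$.
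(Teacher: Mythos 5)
Your proposal is correct and matches the paper's (essentially unwritten) derivation: the paper states the corollary immediately after Lemma \ref{lema:0} precisely because one expands $\hat{O}_A$ in the outer-product basis $(f_{a_1}^{\dagger})^{s_1}\cdots(f_{a_M}^{\dagger})^{s_M}\proj{\Omega}(f_{a_M})^{r_M}\cdots(f_{a_1})^{r_1}$ and substitutes the lemma's expression for the vacuum projector restricted to the modes of $A$. Your closing remark about bookkeeping is also the right one to make, since the substituted product $f_{a_M}\cdots f_{a_1}f_{a_1}^{\dagger}\cdots f_{a_M}^{\dagger}$ acts as $\proj{\Omega}_A\wedge\mathbb{I}$ on the full space while involving only indices in $A$, which is exactly what the corollary asserts.
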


\begin{prop*}
\textbf{\ref{Prop:basis1})} For an $N$-mode fermion system, the set $\mathcal{B}=\{\ket{\Omega},\ket{1},\ket{2},\ket{1}\wedge \ket{2},\ket{3},\ket{1}\wedge\ket{3}, \ket{2}\wedge \ket{3}, \ket{1}\wedge\ket{2}\wedge\ket{3},\ldots,  \ket{1}\wedge\ket{2}\wedge \ldots \wedge \ket{N} \}$ is an orthonormal basis of the Hilbert space $\mathcal{H}$, with dimension $2^N$.
\end{prop*}
\begin{proof}
We know that $\mathcal{H}=\bigoplus_{k=0}^{+\infty} \mathcal{H}_{k-p}$. For a $N$ mode fermionic system, we can see that $\mathcal{H}_{k-p}=\{0\}$ $\forall k>N$. This is due to the fact that these spaces are spanned by the elements $f_{i_1}^{\dagger} \dots f_{i_k}^{\dagger}\ket{\Omega}$, and since $k>N$ there must exist at least one repeated index $i_{l}=i_m$ for $1\leq l < m\leq k$. Thus, by anti-commuting this terms it is obtained $f_{i_1}^{\dagger} \dots f_{i_k}^{\dagger}\ket{\Omega}=(-1)^{l-m+1}f_{i_1}^{\dagger} \dots f_{i_l}^{\dagger}f_{i_m}^{\dagger} \dots  f_{i_k}^{\dagger}\ket{\Omega}= (-1)^{l-m+1}f_{i_1}^{\dagger} \dots f_{i_l}^{\dagger}f_{i_l}^{\dagger} \dots  f_{i_k}^{\dagger}\ket{\Omega}=0$. Where it is used that $f_i ^2=0=(f_i^{\dagger})^2$. So we have that $\mathcal{H}$ is the space spanned by all the elements $f_{i_1}^{\dagger}\ldots f_{i_k}^{\dagger}\ket{\Omega}$ where $k=0,\ldots,N$. Since the creation operators can be anti-commuted, WLOG it can be said that the space is spanned by the elements above such that $i_1<i_2<\ldots<i_k$. Because if are non-ordered, can be anti-commuted to an ordered case and strict inequality because if 2 are equal then the contribution is cancelled. By combinatorics is not difficult to see that the number of elements that span each $\mathcal{H}_{k-p}$ under this restrictions is $N\choose{k}$. So since the basis of $\mathcal{H}$ will be the reunion of al this generators, because it is a direct sum; we obtain that the dimension of $\mathcal{H}$ is $\sum_{k=0}^{N} {N\choose k} =2^N$ by the Newton binomial formula. So, if we find $2^N$ linearly independent states of $\mathcal{H}$ that are orthonormal we would have found a Hilbert basis of the space. To see that $\mathcal{B}$ is a basis, we start by observing that it can be constructed by starting with $\mathcal{B}_1=\{\ket{\Omega},\ket{1}\}$, then to each element we leave it invariant or apply the next creation operator, in this case $f^{\dagger}_2$, to obtain: $\mathcal{B}_2={\ket{\Omega},\ket{1},\ket{2},\ket{2}\wedge\ket{1}}$. In general the array $\mathcal{B}_{n+1}$ is obtained by concatenating $\mathcal{B}_{n}$ and $f^{\dagger}_{n+1}\mathcal{B}_{n}$. Then, once we have arrived to $\mathcal{B}_N$ we have obtained $2^N$ elements of the space $\mathcal{H}_{pure}$. By the construction, they are linearly independent because each one has a non repeating appearence of creation operators. In order that $\mathcal{B}_N$ is really the $\mathcal{B}$ of the proposition, it is required to reorder the components of the elements by reordering the creation operators, such operation only contributes up to a $-1$ sign that is irrelevant in the basis. So it is proven that $\mathcal{B}$ is a basis of $\mathcal{H}$. And to proof that is a Hilbert basis, it is only required to use the Lemma \ref{lema:scprod} and it follows directly from that result.     
\end{proof}

%\newpage 

\section{Definition of partial tracing} 
\label{sec:ptappendix}

In this Appendix \ref{sec:ptappendix} the Proposition \ref{prop:partialtrace} is proven by showing several Lemmas that lead to the complete proof. The proof does work for finite systems although it seems that the result can be generalized to the countable infinite case. A neat property of the partial tracing procedure is that we can easily implement it in matrix representations. In \cite{code} there is an implementation of the partial tracing procedure in matrix representation, where a Mathematica function is programmed to take the fermionic partial trace. In the following lines, we deduce the partial tracing procedure for fermionic systems under the SSR. We observe that the obtained procedure is the same as in \cite{Friis16} where the SSR condition was not part of the partial tracing definition.

\begin{lem}
\label{lema:tracezero}
If we have an hermitian operator $C\in \Gamma_S$ acting on $\mathcal{H}_S$ such that satisfies:
\begin{gather}
\tr(O \cdot C)=0 ,\ \ \forall O\in \mathcal{O}_S
\end{gather}
where $\mathcal{O}_S$ is the set of all hermitian operators of $\Gamma_S$. Then $C$ is the null operator, $C=0$. 
\end{lem}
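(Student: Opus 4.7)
The plan is to exploit the Hermiticity of $C$ by choosing $O = C$ itself as the test operator, turning the vanishing-trace hypothesis into a positivity statement. Concretely, observe that for an Hermitian $C$ one has a spectral decomposition $C = \sum_i \lambda_i \proj{e_i}$ with $\lambda_i \in \mathbb{R}$, so that
\begin{equation}
\tr(C^2) = \sum_i \lambda_i^2 \geq 0,
\end{equation}
with equality if and only if every $\lambda_i = 0$, i.e.\ $C = 0$. Thus the whole argument reduces to being allowed to substitute $O = C$ into the hypothesis.

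The only subtle point, and the step I would treat carefully, is justifying that $C$ itself lies in $\mathcal{O}_S$ so that it is a legitimate test operator. Since $C$ is Hermitian and, by the discussion of Section~\ref{subsec:physstates}, the Hermitian elements of $\Gamma_S$ decompose as $C = C_e \oplus C_o$ in the basis $\mathcal{B}'$ (block-diagonal in the parity grading because the physical state/observable space excludes coherences between even and odd parity sectors), $C$ automatically has the form required of an observable. Hence $C \in \mathcal{O}_S$, and one may set $O = C$ in the hypothesis to obtain $\tr(C^2) = 0$.

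A fallback, should one want to avoid appealing to block-diagonality of $C$ at this point, is to pick test operators from a basis: for each pair of physical basis vectors $\ket{e_i}, \ket{e_j} \in \mathcal{B}'$ of the same parity sector, take $O = \ketbra{e_i}{e_j} + \ketbra{e_j}{e_i}$ and $O = i(\ketbra{e_i}{e_j} - \ketbra{e_j}{e_i})$, both of which are Hermitian and block-diagonal, hence in $\mathcal{O}_S$. The vanishing of $\tr(O \cdot C)$ for these choices forces every matrix element $C_{ij}$ within each parity block to vanish; the anti-block-diagonal entries are already absent by the argument of the previous paragraph, so $C = 0$.

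The main obstacle is therefore not the algebra but the bookkeeping around the SSR structure of $\Gamma_S$: one needs to confirm that a Hermitian element of $\Gamma_S$ is indeed parity-block-diagonal so that the clean $O = C$ substitution is admissible. Once that is in hand, the proof is one line via $\tr(C^2) = 0 \Rightarrow C = 0$.
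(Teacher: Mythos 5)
Your proof is correct, and it takes a genuinely different and more economical route than the paper's. The paper diagonalizes $C$ with a block-diagonal unitary $U$, so $D=UCU^\dagger$ is real diagonal, and then tests against the operators $O=U^\dagger M_i U$ where each $M_i$ is the rank-one projector onto the $i$-th vector of the diagonalizing basis; each test kills one eigenvalue $\lambda_i$, forcing $D=0$ and hence $C=0$. You instead substitute $O=C$ itself, obtaining $\tr(C^2)=\sum_i\lambda_i^2=0$ in one step. Both arguments hinge on the same admissibility point, namely that the chosen test operator lies in $\mathcal{O}_S$; but note that for your main line this is even more immediate than you make it out to be: the lemma itself defines $\mathcal{O}_S$ as the set of Hermitian operators of $\Gamma_S$, and $C$ is assumed Hermitian and in $\Gamma_S$, so $C\in\mathcal{O}_S$ by definition, with no need to separately invoke the block-diagonal structure of Hermitian elements of $\Gamma_S$ (that structure is what membership in $\Gamma_S$ already encodes in this paper's conventions). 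Your fallback via the Hermitian matrix units $\ketbra{e_i}{e_j}+\ketbra{e_j}{e_i}$ and $i(\ketbra{e_i}{e_j}-\ketbra{e_j}{e_i})$ within each parity block is also sound and is closer in spirit to the paper's element-by-element elimination, at the cost of slightly more bookkeeping. What your approach buys is brevity and independence from any diagonalization; what the paper's buys is a template it reuses verbatim in the uniqueness argument for the partial trace, where the operator being tested is a difference of two candidate reduced states rather than something one would naturally square.
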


\begin{proof}
Since the trace is invariant up to unitary transformations and being $C$ a hermitian matrix, there exist a unitary matrix $U$ such that $D=U\cdot C\cdot U^{\dagger}$ is a diagonal matrix with real values on it. Due to the fact that $C$ is block diagonal, the unitary $U$ can also be chosen to be block diagonal. With these impositions the conditions can be modified to:

\begin{gather}
\tr(U\cdot O \cdot U^{\dagger}\cdot U\cdot C\cdot U^{\dagger})=\tr(U\cdot O \cdot U^{\dagger}\cdot D)=0 \qquad \forall O \in \mathcal{O}
\end{gather}

Since $D$ is a diagonal matrix it follows that $[D]_{i,j}=\lambda_i \delta_{i j} \in \mathbb{R}$.\\

The matrix $M=U\cdot O \cdot U^{\dagger}$ is also block diagonal. It can be chosen a string of matrices $M_i$ for $i=1,\ldots 2^{N-1}$ such that each $M_i$ it takes the form of $[M_i]_{j,k}=\delta_{i k}\delta_{j i}$. Notice that each matrix $M_i$ is a real-valued diagonal matrix thus is hermitian and block diagonal. Therefore by setting $O=U^{\dagger}\cdot M_i\cdot U$, which is the unitary transformation (by $U^{\dagger}$) of a hermitian matrix, it will also be a hermitian matrix that is block-diagonal, and consequently $O\in \mathcal{O}_S$ as desired. So for each convenient $M_i$ exists a hermitian block-diagonal $O$ such that the condition has to hold. And therefore the string of impositions. 

\begin{equation}
\tr(M_i\cdot D)=0 \qquad \forall i\in\{1,\ldots, 2^{N-1}\}
\end{equation}
must be true. And therefore since 

\begin{equation}
[M_i \cdot D]_{j ,k} =\delta_{j i}\delta_{i k}\lambda_i
\end{equation}
 the conditions become:
 
 \begin{equation}
 \tr(M_i\cdot D)=\lambda_i=0  \qquad \forall i\in\{1,\ldots, 2^{N-1}\}
 \end{equation}
and therefore it must be that $D=0$ and since $C=U^{\dagger}\cdot D\cdot U$, $C=0$.
\end{proof}

\begin{lem} 
\label{lema:nth}
If the partial trace of the $N^{th}$ mode is well defined and unique, then the definition of the partial trace on an arbitrary mode is well defined and unique.
\end{lem}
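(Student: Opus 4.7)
The plan is to reduce the uniqueness and well-definedness of tracing out an arbitrary mode $m_i$ to the already assumed case of tracing out mode $N$, by exploiting the fact that both the fermionic algebra and the consistency condition that pins down the partial trace are insensitive to the labeling (ordering) of modes. The key observation is that the defining physical requirement $\tr(\hat{O}_A \rho_{AB}) = \tr(\hat{O}_A \rho_A)$ is purely algebraic: it refers to local observables and to the trace, neither of which knows about our convention that the set $I_f$ is ordered in a particular way.

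First, I would fix a mode $m_i$ and introduce a permutation $\sigma$ of $\{1,\ldots,N\}$ with $\sigma(m_i)=N$ (one convenient choice is the cyclic shift that carries $m_i$ to the last position and moves $m_i+1,\ldots,N$ down by one). Define relabeled operators $\tilde{f}_j := f_{\sigma^{-1}(j)}$. These satisfy the same canonical anti-commutation relations on the same Hilbert space $\mathcal{H}$, so they generate an isomorphic CAR algebra. The number operator $\hat{n}=\sum_i f_i^\dagger f_i = \sum_j \tilde{f}_j^\dagger \tilde{f}_j$ is manifestly invariant, hence so is the parity operator $\Pi=e^{i\pi \hat{n}}$, which means the set of parity-SSR-respecting states $\mathcal{R}_S^N$ and observables $\Gamma_S^N$ is stable under the relabeling.

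Next, I would translate the consistency condition for partial tracing over $m_i$ in the original labeling into the consistency condition for partial tracing over mode $N$ in the new labeling. An arbitrary local Hermitian observable $\hat{O}_A$ on the complement $A=\{1,\ldots,N\}\setminus\{m_i\}$ can, using the corollary that any local operator is a polynomial in the corresponding creation and annihilation operators, be rewritten in the relabeled generators as a local observable $\tilde{O}_{\sigma(A)}$ on $\{1,\ldots,N-1\}$, and similarly for the density matrix $\rho$. Because the trace is cyclic and basis-independent, the equality $\tr(\hat{O}_A \rho) = \tr(\tilde{O}_{\sigma(A)}\tilde{\rho})$ holds exactly, so the consistency condition in the two pictures is the same constraint. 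By hypothesis, this constraint uniquely determines the reduced state when the mode to be traced is the $N^{th}$ one; therefore it uniquely determines the reduced state when the mode to be traced is $m_i$.

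The main obstacle is the sign bookkeeping from the fermionic anti-commutation relations: when we pass between the two labelings, reordering the creation operators in a basis element such as $f_{i_1}^\dagger\ldots f_{i_k}^\dagger \ket{\Omega}$ can introduce factors of $-1$. However, every such sign that appears when rewriting a state appears identically when rewriting a local observable (both are conjugated by the same re-ordering isomorphism), so the signs cancel inside the bilinear expression $\tr(\hat{O}_A \rho)$. Once this cancellation is made explicit, the equivalence of the two consistency conditions follows cleanly and the lemma is established.
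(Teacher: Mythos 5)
Your proof is correct and follows essentially the same route as the paper: both reduce tracing out an arbitrary mode $m_i$ to the assumed last-mode case by reordering so that $m_i$ sits in the final position, with the anti-commutation signs absorbed consistently (the paper does this by explicitly commuting the mode-$m_i$ creation and annihilation operators to the end of each basis element of the operator, picking up a factor $(-1)^{\delta}$; you do it via a relabeling isomorphism of the generators and invariance of the consistency condition, which amounts to the same bookkeeping). No gap to report.
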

\begin{proof}
Given an operator $A$ of the fermionic space for $N$ modes, labelled as $\{\ket{1},\ldots,\ket{N}\}$. Then $A$ can be decomposed as: $A=\sum_{\vec{r} \vec{s}} \alpha_{\vec{r} \vec{s}} \ket{1}^{r_1}\wedge \ldots \wedge \ket{N}^{r_N}\bra{1}^{s_1}\wedge \bra{N}^{s_N} $. If the mode $j$ it is wanted to be traced out, since the operator $A$ can be written as  $A=\sum_{\vec{r} \vec{s}} (-1)^{\delta}\alpha_{\vec{r} \vec{s}} \ket{1}^{r_1}\wedge \ldots \wedge \ket{N}^{r_N}\wedge\ket{i}^{r_i} \bra{1}^{s_1}\wedge \bra{N}^{s_N}\wedge \bra{i}^{s_i}$ where $\delta$ is an integer that depends on $s_i,r_i,\vec{s},\vec{r}$. Since the partial trace of the last mode it is well defined and unique and now the $i$ mode is the last, the partial trace over the $i$ mode is well defined and unique. Since $i$ is arbitrary, the partial trace on any mode is well defined and unique.      
\end{proof}

\begin{lem}
\label{lema:ordering}
If the partial tracing of one mode is well defined unique, then partial tracing over an ordered set of modes $k_1,\ldots,k_l$ is well defined and unique so that:
\begin{gather}
\tr_{k_1,\ldots,k_l}=\tr_{k_1}\circ \ldots\circ \tr_{k_l}
\end{gather}
\end{lem}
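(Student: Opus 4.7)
The plan is to treat the composition $\tr_{k_1}\circ\cdots\circ\tr_{k_l}$ as a \emph{candidate} for $\tr_{k_1,\ldots,k_l}$, verify that it satisfies the consistency conditions that define the set-valued partial trace, and then invoke uniqueness (via Lemma~\ref{lema:tracezero}) to conclude both that this composition equals the set partial trace and that the result is independent of the ordering. Throughout, Lemma~\ref{lema:nth} lets us reduce any single-mode tracing (over $k_i$) to the already-established case of tracing the last mode by simply relabelling modes.

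First I would observe that, by the hypothesis, a single-mode partial trace maps $\mathcal{R}_S^{n}$ into $\mathcal{R}_S^{n-1}$ (it produces a bona fide SSR density matrix on the remaining modes). Hence the composition $\sigma := \tr_{k_1}\circ\cdots\circ\tr_{k_l}(\rho)$ is a well-defined element of $\mathcal{R}_S^{N\setminus\{k_1,\ldots,k_l\}}$, obtained by a chain of well-defined single-mode operations. This handles the ``well-definedness'' part and ensures that at every intermediate stage we have an object on which the next single-mode partial trace is itself well-defined (by Lemma~\ref{lema:nth}).

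Next I would verify consistency. Let $\hat{O}$ be any local physical observable on the complement $C = \{1,\ldots,N\}\setminus\{k_1,\ldots,k_l\}$. Because $C$ is contained in the larger mode set $\{1,\ldots,N\}\setminus\{k_l\}$, the operator $\hat{O}$ is also local on that larger set (it can be written purely from creation/annihilation operators of modes in $C$, and $C$ sits inside any larger mode set of interest). Applying the one-mode consistency condition to $\tr_{k_l}$ gives
\begin{equation}
\tr(\hat{O}\,\rho) \;=\; \tr\bigl(\hat{O}\,\tr_{k_l}(\rho)\bigr).
\end{equation}
Iterating this argument with $\tr_{k_{l-1}}$, then $\tr_{k_{l-2}}$, and so on, yields
\begin{equation}
\tr(\hat{O}\,\rho) \;=\; \tr\bigl(\hat{O}\,\sigma\bigr), \qquad \forall\,\hat{O}\in \Gamma_S^{C}\;\text{Hermitian}.
\end{equation}
Thus $\sigma$ satisfies the consistency conditions defining the partial trace over the set $\{k_1,\ldots,k_l\}$.

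Finally, for uniqueness and order independence, suppose $\sigma'$ is any other element of $\mathcal{R}_S^{C}$ that also satisfies the consistency conditions. Then $\tr(\hat{O}(\sigma-\sigma'))=0$ for every Hermitian $\hat{O}\in\Gamma_S^{C}$, and $\sigma-\sigma'$ is itself Hermitian and SSR-respecting. Lemma~\ref{lema:tracezero} applied on $\mathcal{H}_S^C$ forces $\sigma=\sigma'$. The same argument applied to any permutation $\pi$ of $(k_1,\ldots,k_l)$ shows that $\tr_{k_{\pi(1)}}\circ\cdots\circ\tr_{k_{\pi(l)}}(\rho)$ also satisfies the same consistency conditions, hence equals $\sigma$. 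This establishes that $\tr_{k_1,\ldots,k_l}$ is well-defined, unique, and equal to the stated composition. The only subtlety I anticipate is keeping the ``locality'' bookkeeping straight across the nested reductions—i.e. making sure that at each stage the observable $\hat{O}$ qualifies as a local operator on the current (partially reduced) mode set so that the one-mode consistency condition can be applied—but this is automatic once one notes that locality on $C$ implies locality on any superset of $C$.
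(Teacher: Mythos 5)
Your proposal is correct and follows essentially the same route as the paper: iterate the single-mode consistency condition (using that an observable local on $C$ is also local on every intermediate, larger mode set) to show the composition satisfies the multi-mode consistency conditions, then invoke Lemma~\ref{lema:tracezero} on the Hermitian, SSR-respecting difference of two candidates to get uniqueness. Your explicit order-independence remark is a nice addition, but the paper makes the same observation immediately after its proof, so nothing is genuinely new here.
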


\begin{proof}
Let us define that $\Gamma$ is the global operator space with $N>l$ modes, and denote by $\mathcal{H}_C$ as the Hilbert fermionic space where we remove the $k_1,\ldots,k_l$ modes from the initial set, $\mathcal{H}_{k_1,\ldots,k_m}$ the Hilbert space of all $N$ modes. 
To proof the equality of these 2 operations lets proof that they act the same way to an arbitrary state $\rho$. So if it is seen that $\tr_{k_1,\ldots,k_l}(\rho)=\tr_{k_1}\circ \ldots\circ \tr_{k_l}(\rho)$  it is done. By choosing the ordering of the modes as putting $k_1,\ldots,k_l$ the last ones, the operators on $\Gamma_S$ that are local on $\mathcal{H}_C$ can be written as $O_C\wedge \mathbb{I}\wedge \ldots \wedge\mathbb{I}$. 
From the definition of partial trace it is known that:
\begin{gather}
\tr( (O_C\wedge \mathbb{I}\wedge \ldots \wedge\mathbb{I}) \rho)=\tr(O_C \tr_{k_1,\ldots,k_l}(\rho)) \qquad \forall O_C\in \mathcal{O}^C_S
\end{gather}
where $\mathcal{O}^C_S$ is the set of hermitian local operators in $\Gamma^C_S$.
Now, in the other case, since $\tr_{k_m}( \ldots(\tr_{k_l}(\rho))\ldots)\in \mathcal{H}_{k_1\ldots,k_{m-1}}$ is clear that since $\mathcal{O}_{C,k_1,\ldots, k_n}\subset \mathcal{O}_{C,k_1,\ldots, k_m} $ if $n\leq m$, for the definition of the partial trace operation: 
\begin{align}
\tr(O_C \tr_{k_1}( \ldots(\tr_{k_l}(\rho))\ldots))&=\tr((O_C \wedge \mathbb{I}) \tr_{k_2}( \ldots(\tr_{k_l}(\rho))\ldots)) \nonumber \\ 
&=\tr((O_C \wedge \mathbb{I} \wedge \mathbb{I}) \tr_{k_3}( \ldots(\tr_{k_l}(\rho))\ldots))\nonumber \\ 
& \ \ \vdots \nonumber \\ 
&= \tr((O_C \wedge \mathbb{I}\wedge \ldots \wedge \mathbb{I}) \tr_{k_l}(\rho)) \nonumber \\ 
&=\tr((O_C \wedge \mathbb{I}\wedge \ldots \wedge \mathbb{I} ) \rho) \quad \forall O_C \in \mathcal{O}^C_S 
\end{align}
And therefore it is found that
\begin{equation}
\label{eq:ordering}
\tr\left(O_C \left[\tr_{k_1}( \ldots(\tr_{k_l}(\rho))\ldots)-\tr_{k_1,\ldots,k_l}(\rho)\right]\right)=0 \qquad \forall O_C \in \mathcal{O}^C_S 
\end{equation}
And for what it has been shown in the Lemma \ref{lema:tracezero}, And since $\rho$ has been arbitrary, this condition is sufficient to say that $\tr_{k_1}\circ  \ldots \circ \tr_{k_l} = \tr_{k_1,\ldots,k_l}$.
\end{proof}

Is important to remark that the uniqueness of the partial tracing procedure does not rely on the concrete ordering seen in the Lemma \ref{lema:ordering} since by the definition of partial tracing and its uniqueness is straightforward to prove that $\tr_{A} \circ \tr_{B}=\tr_{B} \circ \tr_{A}$. Nevertheless, for procedural purposes is usually easier to trace out the 'largest' mode first.

\begin{prop*}\textbf{\ref{prop:partialtrace}}) For a density operator $\rho \in \mathcal{R}^N_S$, of an $N$-mode fermionic system, the partial tracing over the set of modes $M=\{m_1,\ldots,m_{|M|}\}\subset\{1,\ldots,N\}$ must result in a reduced density operator $\sigma \in \mathcal{R}^{N\setminus M}_S$, given by
\begin{align}
\sigma=\tr_{M}(\rho)=\tr_{m_1}\circ \tr_{m_2} \circ \ldots \circ \tr_{m_{|M|}}(\rho),
\end{align}
 There is a unique partial tracing procedure that satisfies the physically imposed consistency conditions. The operation of partial tracing one mode $m_i$, of an element of $\rho$, is given then by:
\begin{align}
\tr_{m_i} & \left(\left(f_1^{\dagger}\right)^{s_1} \ldots \left(f_{m_i}^{\dagger}\right)^{s_{m_i}}\ldots \left(f_N^{\dagger}\right)^{s_N}\proj{\Omega}f_N^{r_N}\ldots f_{m_i}^{r_{m_i}}\ldots f_1^{r_1} \right) = \delta_{s_{m_i} r_{m_i}} (-1)^{k}\left(f_1^{\dagger}\right)^{s_1} \ldots \left(f_N^{\dagger}\right)^{s_N}\proj{\Omega}f_N^{r_N}\ldots f_1^{r_1}, 
\end{align}
with $k=s_{m_i} \sum_{j=m_i}^{N-1} s_{j+1} +r_{m_i} \sum_{k=m_i}^{N-1} r_{k+1}$ and $s_i,r_j \in \{0,1 \}$.
\end{prop*}

\begin{proof}
 By the Lemmas \ref{lema:nth} \& \ref{lema:ordering} it is enough to see that there exists a well defined and unique way to trace out the $N$th mode and that it corresponds to the tracing procedure proposed in the statement of the proposition. \\

First we observe that to say that: 

\begin{gather}
\tr_{m_i}\left(\left(f_1^{\dagger}\right)^{s_1} \ldots \left(f_{m_i}^{\dagger}\right)^{s_{m_i}}\ldots \left(f_N^{\dagger}\right)^{s_N}\proj{\Omega}f_N^{r_N}\ldots f_{m_i}^{r_{m_i}}\ldots f_1^{r_1} \right) = \delta_{s_{m_i} r_{m_i}} (-1)^{k}\left(f_1^{\dagger}\right)^{s_1} \ldots \left(f_N^{\dagger}\right)^{s_N}\proj{\Omega}f_N^{r_N}\ldots f_1^{r_1} 
\end{gather}
where $k=s_{m_i} \sum_{j=m_i}^{N-1} s_{j+1} +r_{m_i} \sum_{k=m_i}^{N-1} r_{k+1}$ and $s_i,r_j$ take the value $0$ or $1$. Is equivalent to say that you put the $m_i$ mode as the $N$th mode and then generate the basis $\mathcal{B}$, where the matrix representation of $A$ will be $\left.A\right|_{\mathcal{B}}=\begin{pmatrix}
a && b \\ c && d
\end{pmatrix}$  then $\left.\tr_{m_i} A\right|_{\mathcal{B}}=a+d$.\\

This form of expressing the partial trace is useful to prove that indeed is a well defined partial tracing procedure.\\ 

First, since the partial tracing is a linear operation and due to the correspondence of the matrix representation of the operators with the operator space; if the partial trace can be defined and seen to be unique in terms of a matrix representation for a concrete basis, then by the correspondence of matrix representations with the space itself, the operation will be well defined and unique in the linear space. \\

As a first step, we check that the found procedure satisfies the properties of a partial trace. Suppose a hermitian local operator $O_A\in \Gamma_S^A$ on the first $N-1$ modes of the system. We can easily see that $\tilde{O}_A\otimes \mathbb{I}_2$ is the matrix representation of $O_A$ on the basis $\mathcal{B}$. Where $ \tilde{O}_A$ is the matrix representation of the operator $O_A$ in the basis $\mathcal{B}$ restricted to the space of the $N-1$ first modes. Now, it is wanted to check that defining that $\tilde{\rho}_A=a+c$ (in the decomposition seen above) it follows the conditions of partial tracing:

\begin{equation}
\tr(O_A\cdot\rho)=\tr(O_A\cdot \rho_A) \quad \forall O_A\in \mathcal{O}^A_S
\end{equation}

(where $\mathcal{O}^A_S$ is the set of all local hermitian operators in $\Gamma_S^A$. So let's check it:

\begin{gather}
\tr(O_A \cdot \rho)=\tr((\tilde{O}_A \otimes \mathbb{I}_2 )\left.\rho\right|_{\mathcal{B}})=\tr\left(\left(\begin{matrix}
\tilde{O}_A && 0 \\ 0 && \tilde{O}_A
\end{matrix}\right) \left(\begin{matrix}
a && b \\ b^* && c
\end{matrix}\right)\right)=\tr\left(\begin{matrix}
\tilde{O}_A \cdot a && \tilde{O}_A \cdot b\\ \tilde{O}_A\cdot b^* && \tilde{O}_A \cdot c
\end{matrix}\right) = \tr(\tilde{O}_A \cdot a )+\tr(\tilde{O}_A \cdot c)=\nonumber\\=\tr(\tilde{O}_A \cdot a +\tilde{O}_A \cdot c )= \tr(\tilde{O}_A \cdot (a + c) )=\tr(\tilde{O}_A \cdot \tilde{\rho}_A )=
\end{gather}
and since $O_A$ is arbitrary, it is proven by all $O_A\in \mathcal{O}^A_S$. \\

So it just has been seen that a partial trace operation can be defined. Now the second step is to see that this definition is unique. \\

Let's assume that there exists another consistent definition of a partial trace, that will give rise to $\rho'_A$ as a reduce state. Lets proof that $\tilde{\rho'}_A-a-c=0$. Since for both notions the consistency conditions are true, the following relations hold:

\begin{gather}
\tr(\tilde{O}_A \cdot (a+c))=\tr(\tilde{O}_A\otimes \mathbb{I}_2\cdot \tilde{\rho})=\tr(\tilde{O}_A \cdot \tilde{\rho'}_A) \quad \forall O_A \in \mathcal{O}^A_S \qquad \tr(\tilde{O}_A \cdot( \tilde{\rho'}_A-a-c))=0 \qquad \forall O_A \in \mathcal{O}_S^A 
\end{gather}

Since $\tilde{\rho'}_A$ and $a+c$ are matrix representations of reduced states and therefore states, both are hermitian and block-diagonal. Therefore the matrix $C=\tilde{\rho'}_A-a-c$ is hermitian and block-diagonal. The condition obtained becomes the same as in Lemma \ref{lema:tracezero}, and thus $C=0$ and so $\tilde{\rho'}_A=a+c$. Thus proving the uniqueness of the procedure, and hence proving the proposition.

\end{proof}

As we comment in the main text, one could think that the introduction of the SSR attempts against the uniqueness of the partial tracing procedure. However, we have seen that the partial tracing procedure does not lose its uniqueness nor the form used in \cite{Friis16} for general fermionic state contributions. Imposing the SSR to states is compensated by its imposition to the observables.

\section{Incompatibility between Jordan-Wigner transformation and partial tracing operation}
\label{sec:incopatibility}

Why such development is important since we could use the Jordan-Wigner transformation to transform the fermionic states of $N$ modes to $N$ qubit states? As shown in \cite{Friis16}, under the parity SSR, the Jordan-Wigner transformation does not correctly describe partial tracing operation. In other words, by choosing any concrete Jordan-Wigner transformation $JW: \mathcal{H}_S^N\to \mathbb{C}^N$, we can find SSR fermionic states $\rho_{AB}$ such that $\tr_B(JW \cdot \rho_{AB} \cdot JW^{\dagger}) \neq JW\cdot \tr_B(\rho_{AB}) \cdot JW^{\dagger}$. Thus, when proceeding to work with $\rho_A$ it is unknown which representation to take in this scenario. However, even with this inconsistency, some can say that it is just a defect of the representation and that this inconsistency is minor and does not play a role in the calculation of relevant physical quantities. In the following lines, we illustrate how this is not the case for the canonical Jordan-Wigner transformation; there are SSR states that given these two procedures, the Von Neumann entropy of the associated $\rho_A$ is different. 

The Von Neumann entropy is a measure of information in the usual sense, for any state in $\mathcal{R}_{S}$, thanks to the diagonalization Proposition  \ref{prop:observables}. Any state $\rho\in \mathcal{R}_{S}$ can be diagonalized,  $\rho=\sum_i p_i \proj{\psi_i}$ with the eigenvalues being the probabilities $\{p_i\}$ and the eigenvectors being pure SSR states. Thus the Von Neumann entropy can be computed accordingly as 
\begin{align}
	S(\rho)=-\tr \rho \log \rho =-\sum_i p_i \log p_i,
\end{align}
where we have used base $2$ with $\log 2=1$.

Given the space of $4$ fermionic modes with canonical base $\mathcal{B}$ where and the canonical Jordan-Wigner transformation given by the assignment  $f_i\to -(\bigotimes_{k=1}^{i-1} \hat{\sigma}_z )\otimes \ketbra{0}{1} \otimes (\bigotimes_{j=i+1}^4 \mathbb{I})$. this assignment is such that it provides the assignments $\ket{\Omega} \to \ket{0000}, \ket{1} \to -\ket{1000}, \ket{2}\to -\ket{0100}, \ket{1}\wedge\ket{2} \to \ket{1100}, \dots $. Notice that the odd states pick a global $-1$ phase. We notice that for a SSR state, in the density matrix formalism these $-1$ phases disappear. Thus, by choosing the canonical basis of the qubit space ${\mathbb{C}^2}^{\otimes 4}$ the canonical Jordan-Wigner transformation corresponds to the identity mapping of the representation of the density matrices in the corresponding canonical basis.   

With that mapping, and choosing the mode bipartition $14|23$ we can see that the fermionic SSR density operator $\rho_{1234}$ that is given by
\begin{gather}
	\rho_{1234}=\frac{1}{16} \left[\mathbb{I} + \ket{\Omega}(\bra{1}\wedge\bra{4}) + \ket{1}\bra{4} + \ket{2}(\bra{1}\wedge\bra{2}\wedge\bra{4}) + \right. \nonumber \\ + (\ket{1}\wedge \ket{2})(\bra{2}\wedge\bra{4}) + \ket{3}(\bra{1}\wedge\bra{3} \wedge \bra{4}) +\nonumber \\+ (\ket{1}\wedge \ket{3})(\bra{3}\wedge\bra{4}) + (\ket{2}\wedge \ket{3})(\bra{1}\wedge\bra{2}\wedge\bra{3}\wedge\bra{4}) +\nonumber \\ \left. +(\ket{1}\wedge \ket{2}\wedge\ket{3})(\bra{2}\wedge\bra{3}\wedge\bra{4}) + h.c.  \right]
\end{gather}
yields the following inconsistent results. The representation on the canonical computation basis, eigenvalues and Von Neumann information for the operator obtained by taking the fermionic partial trace to obtain $\rho_{12}$ and then applying the Jordan-Wigner transformation are the following
\begin{gather}
	\left.JW\cdot \rho_{12}\cdot JW \right|_{\text{comp},\text{comp}} = \frac{1}{4} \begin{pmatrix}
		1 & 0& 0& 0\\ 0 &1 &0 &0\\ 0 &0 &1 &0 \\ 0& 0 &0& 1
	\end{pmatrix} \nonumber \\ \lambda_i\in \left\{\frac{1}{4},\frac{1}{4},\frac{1}{4},\frac{1}{4} \right\} \quad S(\rho_{12})= 2
\end{gather}

while on the other hand, if we first take the Jordan-Wigner transformation and then apply the qubit partial tracing procedure the results become:
\begin{gather}
	\left.\tr_{23}(JW\cdot\rho_{1234}\cdot JW)\right|_{\text{comp},\text{comp}} = \frac{1}{4} \begin{pmatrix}
		1 & 0& 0& 1\\ 0 &1 &1 &0\\ 0 &1 &1 &0 \\ 1& 0 &0& 1
	\end{pmatrix} \nonumber \\ \lambda_i\in \left\{\frac{1}{2},\frac{1}{2},0,0 \right\} \quad S(\rho_{12})= 1
\end{gather}

Therefore, showing the complete inconsistency of this use of the Jordan-Wigner function. Due to the excellent properties of the canonical Jordan-Wigner function on states under the SSR, the correct result using purely fermionic based representations is the same as the one yielded in the first procedure of the two.

%\newpage

\section{Proofs of the structures that arise by imposing the parity super-selection rule}
\label{sec:proofs}
As shown in the main text, we invoke some restrictions on the fermionic space - the parity super-selection rule (SSR). In the following lines, we prove the found properties analogous to the tensor product space case described in the main text.

\begin{lem}
\label{lema:ssrmixed}
The matrix representation of $\rho$, in the basis $\mathcal{B}'$, takes the form
\begin{align}
\rho= \rho_e \oplus \rho_o =\begin{pmatrix}
\rho_e && 0 \\ 0 && \rho_o
\end{pmatrix},
\end{align}
where $\rho_e$ ($\rho_o$) are $2^{N-1}\times 2^{N-1}$ complex valued matrices. 
\end{lem}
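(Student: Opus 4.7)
The plan is to start from the ensemble decomposition of $\rho$ as a mixture of physically allowed pure states and then read off the block structure in the reordered basis $\mathcal{B}'$. By the definition of a mixed fermionic state, we may write
\begin{align}
\rho = \sum_k p_k \proj{\psi_k},
\end{align}
with $p_k\geq 0$, $\sum_k p_k = 1$, and each $\ket{\psi_k}\in\mathcal{H}_S$. The parity SSR forces every $\ket{\psi_k}$ to be either purely even or purely odd, so I would split the index set into $\{k : \ket{\psi_k}=\ket{\psi^e_k}\}$ and $\{k : \ket{\psi_k}=\ket{\psi^o_k}\}$ with corresponding weights $p^e_i$ and $p^o_j$. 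This gives the decomposition $\rho = \rho_e + \rho_o$, where $\rho_e$ is a non-negative combination of $\proj{\psi^e_i}$ and $\rho_o$ is a non-negative combination of $\proj{\psi^o_j}$.

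Next, I would observe that by Definition~\ref{Prop:basis2} the basis $\mathcal{B}'$ is ordered so that $\mathcal{B}_e$ comes first and $\mathcal{B}_o$ comes second. Since each $\ket{\psi^e_i}$ is an even-parity state, its expansion in $\mathcal{B}'$ has nonzero coefficients only on the first $2^{N-1}$ basis vectors, so $\proj{\psi^e_i}$ is supported inside the upper-left $2^{N-1}\times 2^{N-1}$ sub-block and has zero entries everywhere else. The same argument (restricted to $\mathcal{B}_o$) shows that $\proj{\psi^o_j}$ is supported inside the lower-right $2^{N-1}\times 2^{N-1}$ block. Summing with non-negative weights preserves these supports, so $\rho_e$ and $\rho_o$ are exactly the even-even and odd-odd sub-blocks stated in the lemma, and all off-diagonal (even-odd) entries of $\rho$ vanish.

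As an optional cross-check, I would verify the same conclusion via the parity operator: any $\rho$ built from physically allowed pure states satisfies $\Pi\rho\Pi^\dag = \rho$. Writing $\Pi = \mathbb{I}_{2^{N-1}}\oplus(-\mathbb{I}_{2^{N-1}})$ in the basis $\mathcal{B}'$, the identity $\Pi \rho\Pi^\dag=\rho$ kills all matrix elements $\bra{e_i}\rho\ket{e_j}$ with $e_i$ and $e_j$ of opposite parity, recovering the block-diagonal form. I do not expect any serious obstacle here; the only mildly delicate point is to be explicit that parity SSR forbids coherent superpositions across parity sectors in a pure state, which is the statement we are entitled to use directly.
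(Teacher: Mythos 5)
Your argument is correct and is essentially the paper's own proof: both decompose $\rho$ into an ensemble of definite-parity pure states, note that each even (odd) projector is supported entirely in the upper-left (lower-right) $2^{N-1}\times 2^{N-1}$ block of the reordered basis $\mathcal{B}'$, and sum. The parity-operator cross-check you add is a harmless bonus not present in the paper.
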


\begin{proof}
Since $\rho$ is by definition an ensemble of pure fermionic SSR states then its expression in equation \ref{eq:rho} can be broken as
\begin{gather}
\rho=\sum_i p_i \sum_{\vec{s^i},\vec{r^i}; R^i,S^i\in even} \alpha_{\vec{s^i}} \alpha^*_{\vec{r^i}} \left(f^{\dag}_1\right)^{s^i_1} \ldots (f^{\dag}_N)^{s^i_N} \proj{\Omega} f_N^{r^i_N} \ldots f_1^{r^i_1} +\sum_i p_i\sum_{\vec{s^i},\vec{r^i}; R^i,S^i\in odd} \alpha_{\vec{s^i}} \alpha^*_{\vec{r^i}} \left(f^{\dagger}_1\right)^{s^i_1} \ldots \left(f^{\dagger}_N\right)^{s^i_N} \proj{\Omega} f_N^{r^i_N} \ldots f_1^{r^i_1} \nonumber \\
=\rho_E \oplus \rho_O
\end{gather}
and since the basis $\mathcal{B}'$ is a rearrangement of the basis $\mathcal{B}$ (that is constituted by the states $\left(f_1^\dagger\right)^{s_1}\ldots \left(f_N^\dagger\right)^{s_N}\ket{\Omega}$), where the $2^{N-1}$ even and $2^{N-1}$ odd components are splitted; we see that the $\rho_E$ and $\rho_O$ components will clearly have a matrix representation in the basis $\mathcal{B}'$ as 
\begin{align}
\left.\rho_E\right|_{\mathcal{B}'}=\begin{pmatrix}
E_{\rho} && 0 \\ 0 && 0
\end{pmatrix} \qquad \qquad\left.\rho_O\right|_{\mathcal{B}'}=\begin{pmatrix}
0 && 0 \\ 0 && O_{\rho}
\end{pmatrix}
\end{align}
The sum of matrix representations is the matrix representation of the sum, therefore the result follows. 
\end{proof}

\begin{prop}
\label{prop:ssrmixed}
$\rho$ is a SSR density operator iff $\rho\in \Gamma_S$ is a positively semi-defined Hermitian operator with trace one.
\end{prop}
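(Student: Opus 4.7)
The plan is to prove the two implications separately. The forward direction ($\Rightarrow$) is essentially immediate from the definition of an SSR density operator as an ensemble of pure SSR states. The reverse direction ($\Leftarrow$) requires more care and hinges on the block-diagonal structure available to operators in $\Gamma_S$ plus the spectral theorem applied block-wise.

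For the forward direction, I would take a generic SSR density operator in the form given in the main text, $\rho = \sum_i p_i^e \ketbra{\psi_i^e}{\psi_i^e} + \sum_j p_j^o \ketbra{\psi_j^o}{\psi_j^o}$ with $p_i^e,p_j^o\geq 0$ and $\sum_i p_i^e + \sum_j p_j^o = 1$. Hermiticity follows term-by-term since each $\ketbra{\psi}{\psi}$ is Hermitian; positive semi-definiteness follows because each rank-one projector is PSD and the convex coefficients are non-negative; the trace-one condition follows directly from normalization of each $\ket{\psi_i^e}$, $\ket{\psi_j^o}$ together with $\sum_i p_i^e + \sum_j p_j^o = 1$; and membership in $\Gamma_S$ is immediate since each $\ketbra{\psi_i^{e/o}}{\psi_i^{e/o}}$ is a basis element of $\Gamma_S$ (both bra and ket being SSR-respecting).

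For the reverse direction, the key is that an operator $\rho \in \Gamma_S$ which is Hermitian, PSD and trace-one admits the block decomposition $\rho = \rho_e \oplus \rho_o$ in the basis $\mathcal{B}'$, exactly as in Lemma \ref{lema:ssrmixed}. Granting this, each block $\rho_e$ and $\rho_o$ is a PSD Hermitian matrix on the even (resp.\ odd) parity subspace, so by the spectral theorem we can diagonalize each block within its own parity sector, obtaining $\rho_e = \sum_i p_i^e \ketbra{\psi_i^e}{\psi_i^e}$ and $\rho_o = \sum_j p_j^o \ketbra{\psi_j^o}{\psi_j^o}$, where the eigenvectors are automatically SSR states (being supported in a single parity subspace) and the eigenvalues are non-negative. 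Summing the traces yields $\sum_i p_i^e + \sum_j p_j^o = \tr\rho = 1$, so the collection $\{p_i^e,p_j^o\}$ is a valid probability distribution and $\rho$ is realized as an ensemble of pure SSR states, i.e.\ an SSR density operator.

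The main obstacle, and the only non-routine step, is justifying the block-diagonal decomposition $\rho = \rho_e \oplus \rho_o$ in the reverse direction directly from the assumption $\rho \in \Gamma_S$. I would handle this by appealing to the characterization of parity-preserving operators in Theorem \ref{thm:blockform} together with the Hermiticity requirement: the diagonal (same-parity) block form is the only one among the two allowed block structures that is compatible with $\rho = \rho^\dagger$ having real non-negative eigenvalues consistent with $\tr\rho=1$ (the anti-diagonal form would force $\tr\rho=0$, contradicting the trace-one condition). Once the block-diagonal form is established, the rest is a direct application of the finite-dimensional spectral theorem within each parity sector, and the proposition follows.
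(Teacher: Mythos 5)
Your proof is correct and follows essentially the same route as the paper's: the forward direction reads off Hermiticity, positivity, trace one, and membership in $\Gamma_S$ directly from the ensemble form (the paper invokes Lemma \ref{lema:ssrmixed} for the last point), and the reverse direction diagonalizes the even and odd blocks separately and reassembles $\rho$ as a convex mixture of definite-parity rank-one projectors. The only cosmetic difference is that you spend effort ruling out the anti-block-diagonal form via Theorem \ref{thm:blockform} and the trace-one condition, whereas the paper simply takes the block-diagonal representation as given for any Hermitian element of $\Gamma_S$ (as in Lemma \ref{lema:ssrmixed}); your extra step is harmless and, if anything, more careful.
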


\begin{proof}
$\Rightarrow$: since $\rho=\sum_i p_i \proj{\psi_i}$ with $\ket{\psi_i}$ being SSR states and $0\leq p_i \leq 1$ $\sum_i p_i =1$, we have by Lemma \ref{lema:ssrmixed} directly that $\rho \in \Gamma_S$. Of this form is also directly deduced that $\rho$ is Hermitian and that $\tr(\rho)=1$. To see positivity we only need to check that $\bra{\varphi} \rho \ket{\varphi}=\sum_i p_i |\scp{\varphi}{\psi_i}|^2 \geq 0$ for all $\ket{\varphi}$. \\

$\Leftarrow$: We use the matrix representation of a positively semi-definite Hermitian operator of $\Gamma_S$ with trace 1 in the basis $\mathcal{B}'$. Such operator $\rho$ will have the following matrix representation:
\begin{gather}
    \left. \rho \right|_{\mathcal{B}'}= \begin{pmatrix}
    \rho_e & 0 \\ 0 & \rho_o
    \end{pmatrix}
\end{gather}
The fact that $\rho$ is Hermitian implies that $\rho_e$ and $\rho_o$ have to be Hermitian matrices. This implies that they can be diagonalised. This means that there exist different changes of basis within the subspaces spanned by $\mathcal{B}_e$ and $\mathcal{B}_o$ such that without changing the block-parity structure exist basis $\tilde{\mathcal{B}}_e$ and $\tilde{\mathcal{B}}_o$ such that in the total basis $\tilde{\mathcal{B}}'=\tilde{\mathcal{B}}_e \cup \tilde{\mathcal{B}}_e$ the matrix representation of $\rho$ would be
\begin{gather}
    \left. \rho \right|_{\tilde{\mathcal{B}}'} = \begin{pmatrix}
    \lambda_1 &  & & 0 & & \\
     & \ddots & & & \ddots & \\
      & & \lambda_{2^{N-1}} & & & 0 \\
      0 & & & \mu_1 & & \\
       & \ddots & & & \ddots & \\
       & & 0 & & & \mu_{2^{N-1}}
    \end{pmatrix}
\end{gather}
Now, since $\rho$ is positively semi-definite this implies that necessarily $\mu_i,\lambda_j \geq 0$ for all $i,j$. Thus, this means that $\rho$ can be written as $\rho=\sum_i \lambda_i \proj{\psi_i} + \sum_j \mu_j \proj{\varphi_j}$ where $\ket{\psi_i}$ are even states that conform an orthonormal basis of the even subspace, and equally for $\ket{\varphi_j}$ in the odd case. This means that the union of the two conforms an orthonormal basis  $\{\ket{\tilde{\psi}_i}\}_{i=1}^{2^N}$ where the first $2^{N-1}$ elements are evens states and the last are odd states. Thus by choosing $p_i=\lambda_i$ and $p_{2^{N-1}+i}=\mu_i$ for $i=1,\dots, 2^{N-1}$, we obtain that $\rho=\sum_i p_i \proj{\tilde{\psi}_i}$. From the definition of $p_i$ and the positive semi-definitness of $\rho$ it follows that $p_i\geq 0$. Since $\tr(\rho)=1$ it follows that $\sum_i (\lambda_i + \mu_i)=1$ thus giving $\sum_i p_i=1$. Thus fulfilling the requirements of being a density operator.    
\end{proof}

We remind the reader that we name the set of density operators that follow the SSR as $\mathcal{R}_S$. In the next lines, we prove a general statement of the block form that the fermionic linear operators have to preserve the SSR structure. We denote them by SSR linear operators.  

\begin{theorem*}
\textbf{\ref{thm:blockform}. Linear operators})
If $\hat{O}$ is a linear operator such that $\hat{O}: \mathcal{H}^N \rightarrow \mathcal{H}^M $ such that for all $\hat{A}\in \Gamma_S^N$ and $\hat{B}\in \Gamma_S^M$,  $\hat{O} \hat{A}  \hat{O}^\dag \in \Gamma_S^M$ and $\hat{O}^\dagger \hat{B} \hat{O} \in \Gamma_S^N$ then the matrix representation of the operator $\hat{O}$ on the basis ${\mathcal{B}'}^N, {\mathcal{B}'}^M $ has to take one of these two diagonal and anti-diagonal forms:

\begin{gather}
\hat{O}|_{{\mathcal{B}'}^M,{\mathcal{B}'}^N}=\left(\begin{array}{c|c}
O_{++}  & \hat{0} \\ \hline \hat{0}  & O_{--}  
\end{array}\right)  \medspace \medspace \text{or} \medspace \medspace \hat{O}|_{{\mathcal{B}'}^M,{\mathcal{B}'}^N}=\left(\begin{array}{c|c}
\hat{0} & O_{+-}  \\ \hline  O_{-+}  &  \hat{0}
\end{array}\right)
\end{gather}
where $O_{++},O_{--},O_{-+} ,O_{+-}  \in M_{2^{M-1}\times 2^{N-1}}(\mathbb{C})$  and $\hat{0}$ is the zero element of $M_{2^{M-1}\times 2^{N-1}}(\mathbb{C})$. Such operators are linear operators that preserve the SSR form of the operators in $\Gamma_S$
\end{theorem*}

\begin{proof}
To start, since $\hat{O}$ is a linear operator between the two spaces, it can be represented in the most general way in the basis $\mathcal{B}'^{N}$ and $\mathcal{B}'^{M}$ as:
\begin{gather}
\left.\hat{O}\right|_{\mathcal{B}'^{M}, \mathcal{B}'^{N}}=\begin{pmatrix}
O_{++} && O_{+-} \\ O_{-+} && O_{--}
\end{pmatrix}
\end{gather}
But since it is required that preserves the separation among SSR and non-SSR contributions we have that, by choosing the decomposition of $\hat{A}\in \Gamma_S^N$ and $\hat{B}\in \Gamma_S^M$ in the basis $\mathcal{B}'^N$  and $\mathcal{B}'^M$ respectively, that we know are given by:
\begin{gather}
   \left.\hat{A}\right|_{\mathcal{B}'^{N}, \mathcal{B}'^{N}}=\begin{pmatrix}
A_{++} && \hat{0} \\ \hat{0} && A_{--}
\end{pmatrix} \qquad \qquad \left.\hat{B}\right|_{\mathcal{B}'^{M}, \mathcal{B}'^{M}}=\begin{pmatrix}
B_{++} && \hat{0} \\ \hat{0} && B_{--}
\end{pmatrix}
\end{gather}
Now, choosing the two spacial cases for each condition, by setting $A_{++}=\hat{0}$, $B_{++}=\hat{0}$ and in the other case setting $A_{--}=\hat{0}$, $B_{--}=\hat{0}$; we obtain that
\begin{gather}
    \left.\hat{O}\hat{A}\hat{O}^\dagger\right|_{\mathcal{B}'^{M}, \mathcal{B}'^{M}}=\begin{pmatrix}
O_{++} && O_{+-} \\ O_{-+} && O_{--} \end{pmatrix} \begin{pmatrix}
\hat{0} && \hat{0} \\ \hat{0} && A_{--} \end{pmatrix} \begin{pmatrix}
O^\dagger_{++} && O^\dagger_{-+} \\ O_{+-}^\dagger && O^\dagger_{--} \end{pmatrix}=  \begin{pmatrix}
O_{+-} A_{--} O^\dagger_{+-} && O_{+-} A_{--} O^\dagger_{--} \\ O_{--} A_{--} O^\dagger_{+-} && O_{--} A_{--} O^\dagger_{--} \end{pmatrix} \\
\left.\hat{O}\hat{A}\hat{O}^\dagger\right|_{\mathcal{B}'^{M}, \mathcal{B}'^{M}}=\begin{pmatrix}
O_{++} && O_{+-} \\ O_{-+} && O_{--} \end{pmatrix} \begin{pmatrix}
A_{++} && \hat{0} \\ \hat{0} && \hat{0} \end{pmatrix} \begin{pmatrix}
O^\dagger_{++} && O^\dagger_{-+} \\ O_{+-}^\dagger && O^\dagger_{--} \end{pmatrix}=  \begin{pmatrix}
O_{++} A_{++} O^\dagger_{++} && O_{++} A_{++} O^\dagger_{-+} \\ O_{-+} A_{++} O^\dagger_{++} && O_{-+} A_{++} O^\dagger_{-+} \end{pmatrix} \\
 \left.\hat{O}^\dagger \hat{B}\hat{O}\right|_{\mathcal{B}'^{N}, \mathcal{B}'^{N}}= \begin{pmatrix}
O^\dagger_{++} && O^\dagger_{-+} \\ O_{+-}^\dagger && O^\dagger_{--} \end{pmatrix} \begin{pmatrix}
\hat{0} && \hat{0} \\ \hat{0} && B_{--} \end{pmatrix} \begin{pmatrix}
O_{++} && O_{+-} \\ O_{-+} && O_{--} \end{pmatrix}=  \begin{pmatrix}
O_{-+}^\dagger B_{--} O_{-+} && O_{-+}^\dagger B_{--} O_{--} \\ O^\dagger_{--} B_{--} O_{-+} && O^\dagger_{--} B_{--} O_{--} \end{pmatrix} \\ 
\left.\hat{O}^\dagger\hat{B}\hat{O}\right|_{\mathcal{B}'^{N}, \mathcal{B}'^{N}}= \begin{pmatrix}
O^\dagger_{++} && O^\dagger_{-+} \\ O_{+-}^\dagger && O^\dagger_{--} \end{pmatrix} \begin{pmatrix}
B_{++} && \hat{0} \\ \hat{0} && \hat{0} \end{pmatrix} \begin{pmatrix}
O_{++} && O_{+-} \\ O_{-+} && O_{--} \end{pmatrix}=  \begin{pmatrix}
O^\dagger_{++} B_{++} O_{++} && O^\dagger_{++} B_{++} O_{+-} \\ O^\dagger_{+-} B_{++} O_{++} && O^\dagger_{+-} B_{++} O_{+-} \end{pmatrix}
\end{gather}
In order to preserve the SSR form for these two cases for each condition, we find 4 constraints for each condition:
\begin{gather}
    O_{-+} A_{++} O^\dagger_{++}=O_{++} A_{++} O^\dagger_{-+}=O_{--} A_{--} O^\dagger_{+-}= O_{+-} A_{--} O^\dagger_{--}=\hat{0} \qquad \forall A_{--}, A_{++} \in M_{2^{N-1}}(\mathbb{C}) \\  O^\dagger_{+-} B_{++} O_{++}= O^\dagger_{++} B_{++} O_{+-}= O^\dagger_{--} B_{--} O_{-+}= O^\dagger_{-+} B_{--} O_{--}=\hat{0} \qquad \forall B_{--}, B_{++} \in M_{2^{M-1}}(\mathbb{C})
\end{gather}
These restrictions can be reduced to two for each condition due to the $\dagger$ property.
\begin{gather}
    O_{-+} A O^\dagger_{++} = O_{+-} A O^\dagger_{--}=\hat{0} \qquad \forall A \in M_{2^{N-1}}(\mathbb{C}) \\
    O^\dagger_{+-} B O_{++} = O^\dagger_{-+} B O_{--}=\hat{0} \qquad \forall B \in M_{2^{M-1}}(\mathbb{C})
\end{gather}
In order to proceed is necessary to check that if exist $D\in M_{m,n}(\mathbb{C})$, $F\in M_{n,m}(\mathbb{C})$ such that $D\cdot E\cdot F=0$ for all $E\in M_{n}(\mathbb{C})$ then either $D=0$ or $F=0$. To prove it, lets assume there exist $D,F$ such that $\exists i_1,j_2\in \{1,\dots,m\}$, $\exists j_1,i_2\in \{1,\dots,n\}$ such that $D\cdot E\cdot F=0$ for all $E\in M_{n}(\mathbb{C})$ with $D_{i_1,j_1}\neq 0$ and $F_{i_2,j_2}\neq 0$. Now since it can be chosen the matrix $E$ given by $E_{i,j}=\delta_i^{j_1} \delta_j^{i_2}$ we find that ${(D\cdot E\cdot F)}_{i_1,j_2}=D_{i_1,j_1} F_{i_2,j_2} \neq 0$ giving us a contradiction. Thus the assumption must be false. \\

This result gives us that the overall four restrictions that we had transform to the following four statements. 1: Either $O_{-+}=0$ or $O_{++}=0$. 2: Either $O_{+-}=0$ or $O_{--}=0$. 3: Either $O_{+-}=0$ or $O_{++}=0$. 4: Either $O_{-+}=0$ or $O_{--}=0$. It follows that there are only two non-trivial configurations: $O_{+-}=0=O_{-+}$ or $O_{++}=0=O_{--}$. Such configurations exactly correspond to the diagonal and anti-diagonal block form showed in the theorem, just as desired. 
\end{proof}

We observe that all the elements of $\Gamma_S$ are SSR linear operators but that there can exist anti-diagonal operators that are not in $\Gamma_S$ that preserve the SSR structure. With this observation and classification, we move to prove the following result that makes the use of the partial tracing procedure under the SSR easier.

\begin{prop}
\label{prop:pureptssr} Assume that $\ket{a},\ket{b},\ket{c},\ket{d}\in \mathcal{H}_{S}$ for $N$ modes. Being $M\subset \{1,\dots, N\}$ where $\ket{a}$ and $\ket{b}$ only have contributions of the modes on $M$, and $\ket{c}$ and $\ket{d}$ only from $M^c$. Then we have that:
\begin{gather}
\tr_{M}\left(\ketbra{a}{b}\wedge \ketbra{c}{d}\right)=\scp{b}{a}\ketbra{c}{d} \qquad \qquad \tr_{M^c}\left(\ketbra{a}{b}\wedge \ketbra{c}{d}\right)=\scp{d}{c}\ketbra{a}{b} 
\end{gather}
\end{prop}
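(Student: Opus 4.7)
The plan is to verify both identities via the consistency-based characterisation of the fermionic partial trace established in Proposition~\ref{prop:partialtrace}, combined with the algebraic lemmas on wedge products of operators proved earlier in this appendix.

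For the first identity, let $\hat{O}_{M^c}\in\Gamma_{S}^{M^c}$ be any local SSR-respecting hermitian observable on the modes of $M^c$. Using $\mathbb{I}_M\,\ketbra{a}{b}=\ketbra{a}{b}$ together with Lemma~\ref{lema:distribu}, one rewrites
\[
(\mathbb{I}_M\wedge\hat{O}_{M^c})\bigl(\ketbra{a}{b}\wedge\ketbra{c}{d}\bigr)=\ketbra{a}{b}\wedge\bigl(\hat{O}_{M^c}\,\ketbra{c}{d}\bigr),
\]
and Lemma~\ref{lema:trprod}, which factorises the trace of a wedge product of local operators, then yields
\[
\tr\!\Bigl((\mathbb{I}_M\wedge\hat{O}_{M^c})\bigl(\ketbra{a}{b}\wedge\ketbra{c}{d}\bigr)\Bigr)=\tr(\ketbra{a}{b})\,\tr\bigl(\hat{O}_{M^c}\,\ketbra{c}{d}\bigr)=\scp{b}{a}\,\tr\bigl(\hat{O}_{M^c}\,\ketbra{c}{d}\bigr),
\]
since $\tr(\ketbra{a}{b})=\scp{b}{a}$. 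This coincides with $\tr\bigl(\hat{O}_{M^c}\cdot\scp{b}{a}\,\ketbra{c}{d}\bigr)$, so the operator $\scp{b}{a}\,\ketbra{c}{d}$ satisfies, for every $\hat{O}_{M^c}$, the consistency condition of Proposition~\ref{prop:partialtrace} that uniquely fixes $\tr_M(\ketbra{a}{b}\wedge\ketbra{c}{d})$ as the reduced operator on $M^c$. Uniqueness, via Lemma~\ref{lema:tracezero}, identifies the two operators and yields the first identity. The second identity is obtained by the symmetric computation with $M$ and $M^c$ exchanged, so that $\tr(\ketbra{c}{d})=\scp{d}{c}$ appears in place of $\scp{b}{a}$.

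The main obstacle is the subtle uniqueness step: Lemma~\ref{lema:tracezero} grants uniqueness only within the hermitian, parity-block-diagonal sector of $\Gamma_{S}^{M^c}$, while $\ketbra{c}{d}$ need not be hermitian nor have definite parity when $\ket{c}$ and $\ket{d}$ belong to different parity sectors. To close this gap I would split both candidate reduced operators into their hermitian and anti-hermitian parts, further decompose them into parity blocks using the even and odd projectors, apply the uniqueness argument block by block (using test observables dressed with parity projectors where required), and reassemble by linearity of the partial tracing map.
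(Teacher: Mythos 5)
There is a genuine gap, and it sits exactly where you suspect it does --- but your proposed repair cannot close it. The verification half of your argument is fine: Lemmas \ref{lema:distribu} and \ref{lema:trprod} do give $\tr\bigl((\mathbb{I}_M\wedge\hat{O}_{M^c})(\ketbra{a}{b}\wedge\ketbra{c}{d})\bigr)=\scp{b}{a}\,\tr(\hat{O}_{M^c}\ketbra{c}{d})$ for every SSR observable $\hat{O}_{M^c}$. The problem is that this family of test observables is too small to identify the reduced operator. The states $\ket{c}$ and $\ket{d}$ each have definite parity, but nothing forces those parities to coincide; when they differ, $\ketbra{c}{d}$ is parity-block-\emph{anti}-diagonal, and then $\tr(\hat{O}_{M^c}\ketbra{c}{d})=\bra{d}\hat{O}_{M^c}\ket{c}=0$ for \emph{every} block-diagonal hermitian $\hat{O}_{M^c}$, since $\hat{O}_{M^c}\ket{c}$ stays in the parity sector of $\ket{c}$ while $\ket{d}$ lies in the other. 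So the candidate $\scp{b}{a}\ketbra{c}{d}$ and the zero operator pass your consistency test equally well, and Lemma \ref{lema:tracezero} --- whose hypothesis requires the difference $C$ to be hermitian and block-diagonal --- simply does not apply. Dressing the test observables with parity projectors does not help: parity projectors are themselves block-diagonal, so the dressed operators remain block-diagonal and remain blind to the anti-diagonal sector. This mixed-parity case is not a pathology you can exclude; it arises in the very place the proposition is used (the Schmidt decomposition proof traces objects $\ketbra{i_A}{j_A}\wedge\ketbra{i_B}{j_B}$ with $i\neq j$, where $\ket{i_A}$ and $\ket{j_A}$ may well have opposite parities). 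More structurally, the consistency conditions of Proposition \ref{prop:partialtrace} uniquely determine the partial trace only on the span of SSR density operators, which is the block-diagonal hermitian sector; the proposition you are proving concerns the linear extension of the map to all of $\Gamma$, and that extension is fixed by the explicit formula, not by the physical consistency conditions.

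The paper's own proof avoids this entirely by direct computation: it expands $\ket{a},\ket{b},\ket{c},\ket{d}$ in creation-operator monomials, writes $\ketbra{a}{b}\wedge\ketbra{c}{d}$ as a sum of ordered monomials $f^{\dagger}\cdots f^{\dagger}\proj{\Omega}f\cdots f$, applies the explicit mode-by-mode tracing rule of Proposition \ref{prop:partialtrace}, and recognizes the resulting contraction as the overlap $\scp{d}{c}$ (respectively $\scp{b}{a}$). The only place the parity SSR enters is in showing that the sign $(-1)^{nr+sm}$ picked up when reordering the creation operators of the two subsystems equals $+1$, because $n\equiv m$ and $r\equiv s \pmod 2$. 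If you want to salvage your more conceptual route, you would first need a separate lemma asserting the duality $\tr\bigl(\hat{O}_{M^c}\,\tr_M(X)\bigr)=\tr\bigl((\mathbb{I}_M\wedge\hat{O}_{M^c})X\bigr)$ for \emph{arbitrary} operators $X$ and \emph{arbitrary} (not necessarily hermitian or SSR) local $\hat{O}_{M^c}$; proving that lemma amounts to essentially the same direct computation the paper performs.
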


\begin{proof}
First let's develop  $\ketbra{a}{b}\wedge \ketbra{c}{d}$ in terms of the mode operators: 
\begin{gather}
\ket{a}=\sum_{\vec{a}} \alpha_{\vec{a}} \cdot  f^{\dagger}_{a_1}\dots f^{\dagger}_{a_{n}}\ket{\Omega}\quad \ket{b}= \sum_{\vec{b}} \beta_{\vec{b}} \cdot f^{\dagger}_{b_1}\dots f^{\dagger}_{b_{m}}\ket{\Omega} \quad \ket{c}=\sum_{\vec{c}} \gamma_{\vec{c}} \cdot f^{\dagger}_{c_1}\dots f^{\dagger}_{c_{r}}\ket{\Omega} \quad \ket{d}=\sum_{\vec{d}} \delta_{\vec{d}} \cdot f^{\dagger}_{d_1}\dots f^{\dagger}_{d_{s}}\ket{\Omega} \\  \ketbra{a}{b}\wedge \ketbra{c}{d}=\sum_{\vec{a},\vec{b},\vec{c},\vec{d}} \alpha_{\vec{a}} \beta^*_{\vec{b}} \gamma_{\vec{c}} \delta^*_{\vec{d}}  f^{\dagger}_{a_1}\dots f^{\dagger}_{a_{n}} f^{\dagger}_{c_1}\dots f^{\dagger}_{c_{r}}\proj{\Omega} f_{d_{s}}\dots f_{d_1} f_{b_{m}}\dots f_{b_1} \\ \tr_{M^c}\left(\ketbra{a}{b}\wedge \ketbra{c}{d}\right)=\sum_{\vec{a},\vec{b},\vec{c},\vec{d}} \alpha_{\vec{a}} \beta^*_{\vec{b}} \gamma_{\vec{c}} \delta^*_{\vec{d}}  \left(f^{\dagger}_{a_1}\dots f^{\dagger}_{a_{n}} \proj{\Omega}  f_{b_{m}}\dots f_{b_1} \right)\left(\bra{\Omega}f_{d_{s}}\dots f_{d_1}f^{\dagger}_{c_1}\dots f^{\dagger}_{c_{r}}\ket{\Omega}\right)=\nonumber \\=   \left(\sum_{\vec{a}} \alpha_{\vec{a}} f^{\dagger}_{a_1}\dots f^{\dagger}_{a_{n}} \ket{\Omega}\right)\left( \sum_{\vec{b}} \beta^*_{\vec{b}} \bra{\Omega} f_{b_{m}}\dots f_{b_1} \right) \left( \sum_{\vec{d}} \delta^*_{\vec{d}}\bra{\Omega}   f_{d_{s}}\dots f_{d_1}\right)\left( \sum_{\vec{c}}  \gamma_{\vec{c}} f^{\dagger}_{c_1}\dots f^{\dagger}_{c_{r}}\ket{\Omega}\right)= \ketbra{a}{b} \scp{d}{c} \\  \ketbra{a}{b}\wedge \ketbra{c}{d}=\sum_{\vec{a},\vec{b},\vec{c},\vec{d}} \alpha_{\vec{a}} \beta^*_{\vec{b}} \gamma_{\vec{c}} \delta^*_{\vec{d}}  (-1)^{nr+sm}  f^{\dagger}_{c_1}\dots f^{\dagger}_{c_{r}} f^{\dagger}_{a_1}\dots f^{\dagger}_{a_{n}} \proj{\Omega}  f_{b_{m}}\dots f_{b_1} f_{d_{s}}\dots f_{d_1}
\end{gather}
Since they follow SSR: $n\equiv m \mod 2$ and $r\equiv s \mod 2$. Therefore the parity of $nr+sm$ is always even. Thus it is found that: 
%\begin{table}[H]
%\centering
%\begin{tabular}{c|c|c}
%$n,m$ $\downarrow$ $r,s$ $\rightarrow$ & Even & Odd \\ \hline  Even & Even & Even \\ Odd & Even & Even 
%\end{tabular}
%\end{table}
\begin{gather}
\ketbra{a}{b}\wedge \ketbra{c}{d}=\sum_{\vec{a},\vec{b},\vec{c},\vec{d}} \alpha_{\vec{a}} \beta^*_{\vec{b}} \gamma_{\vec{c}} \delta^*_{\vec{d}}   f^{\dagger}_{c_1}\dots f^{\dagger}_{c_{r}} f^{\dagger}_{a_1}\dots f^{\dagger}_{a_{n}} \proj{\Omega}  f_{b_{m}}\dots f_{b_1} f_{d_{s}}\dots f_{d_1} \\ \tr_M\left(\ketbra{a}{b}\wedge \ketbra{c}{d}\right)= \sum_{\vec{a},\vec{b},\vec{c},\vec{d}} \alpha_{\vec{a}} \beta^*_{\vec{b}} \gamma_{\vec{c}} \delta^*_{\vec{d}}  \left( f^{\dagger}_{c_1}\dots f^{\dagger}_{c_{r}} \proj{\Omega} f_{d_{s}}\dots f_{d_1} \right)\left(\bra{\Omega} f_{b_{m}}\dots f_{b_1} f^{\dagger}_{a_1}\dots f^{\dagger}_{a_{n}} \ket{\Omega}\right)=\nonumber \\ =    \left( \sum_{\vec{c}} \gamma_{\vec{c}} f^{\dagger}_{c_1}\dots f^{\dagger}_{c_{r}} \ket{\Omega}\right)\left(\sum_{\vec{d}} \delta^*_{\vec{d}} \bra{\Omega} f_{d_{s}}\dots f_{d_1} \right)\left(\sum_{\vec{b}} \beta^*_{\vec{b}} \bra{\Omega} f_{b_{m}}\dots f_{b_1} \right)\left(\sum_{\vec{a}} \alpha_{\vec{a}} f^{\dagger}_{a_1}\dots f^{\dagger}_{a_{n}} \ket{\Omega}\right) = \ketbra{c}{d} \scp{b}{a}
\end{gather}
\end{proof}

Once we have shown the proofs for general linear SSR preserving operators, we move to prove the theorems that characterize classes of such SSR preserving linear operators. 

\begin{prop}\textbf{ Projectors})
\label{prop:Projector}.
Consider $\hat{P} \in \Gamma_S$ an SSR linear operator.  $\hat{P}$ is called a projector iff $\hat{P}^2=\hat{P}$ and $\hat{P}=\hat{P}^{\dagger}$, which is equivalent to have  the following form in the basis $\mathcal{B}'$:
\begin{align}
\hat{P}=P_{ee} \oplus P_{oo} = \left( \begin{array}{c|c}
P_{ee} & \hat{0} \\ \hline 
\hat{0} & P_{oo}
\end{array}\right)
\end{align}
where $P_{ee},P_{oo}\in M_{2^{N-1}\times 2^{N-1}}(\mathbb{C})$ are projectors of that space, and $\hat{0}$ is the zero element of $M_{2^{N-1}\times 2^{N-1}}(\mathbb{C})$. 
\end{prop}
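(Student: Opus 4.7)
The plan is to leverage Theorem \ref{thm:blockform} to immediately restrict the matrix form of $\hat{P}$ to one of two possibilities, and then use the two defining properties $\hat{P}^2=\hat{P}$ and $\hat{P}=\hat{P}^\dagger$ to eliminate the anti-diagonal case and reduce the diagonal case to blockwise projector conditions.

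First I would establish the forward direction. Assuming $\hat{P} \in \Gamma_S$ is an SSR linear operator satisfying $\hat{P}^2=\hat{P}$ and $\hat{P}=\hat{P}^\dagger$, I invoke Theorem \ref{thm:blockform} (with $M=N$) to conclude that in the basis $\mathcal{B}'$ the matrix of $\hat{P}$ takes either block-diagonal or block-anti-diagonal form. I would then dispose of the anti-diagonal case: write $\hat{P}=\left(\begin{smallmatrix} 0 & B \\ C & 0 \end{smallmatrix}\right)$; Hermiticity forces $C=B^\dagger$, while idempotency gives
\begin{equation}
\hat{P}^2=\begin{pmatrix} BB^\dagger & 0 \\ 0 & B^\dagger B \end{pmatrix} = \begin{pmatrix} 0 & B \\ B^\dagger & 0 \end{pmatrix} = \hat{P}, \nonumber
\end{equation}
which compels $B=0$, so the only anti-diagonal solution is the trivial $\hat{P}=\hat{0}$ (a degenerate case that is also block-diagonal with zero blocks).

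Next, for the non-trivial diagonal case $\hat{P}=\left(\begin{smallmatrix} P_{ee} & 0 \\ 0 & P_{oo}\end{smallmatrix}\right)$, the two defining properties decouple across the blocks. Hermiticity $\hat{P}=\hat{P}^\dagger$ gives $P_{ee}=P_{ee}^\dagger$ and $P_{oo}=P_{oo}^\dagger$, and the computation of $\hat{P}^2$ is blockwise, yielding $P_{ee}^2=P_{ee}$ and $P_{oo}^2=P_{oo}$. Hence $P_{ee}$ and $P_{oo}$ are projectors on the even and odd subspaces, respectively, as required.

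The converse direction is routine. If $\hat{P}$ has the stated block-diagonal form with each $P_{ee}, P_{oo}$ a projector in $M_{2^{N-1}\times 2^{N-1}}(\mathbb{C})$, Hermiticity and idempotency lift immediately from the blocks to $\hat{P}$, and membership in $\Gamma_S$ together with the SSR-linear-operator property follows from Theorem \ref{thm:blockform} applied in reverse (since the block-diagonal form is exactly one of the two admissible structures). No significant obstacle is expected: the only subtlety is recognising that the idempotency condition kills all genuine anti-diagonal contributions, so the two structural possibilities offered by Theorem \ref{thm:blockform} collapse to the single block-diagonal one whenever $\hat{P}$ is a non-zero Hermitian idempotent.
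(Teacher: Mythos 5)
Your proposal is correct and follows essentially the same route as the paper's proof: invoke Theorem \ref{thm:blockform} to restrict $\hat{P}$ to block-diagonal or block-anti-diagonal form, use idempotency (the paper notes $\hat{P}^2$ is block-diagonal and so cannot equal a non-trivial anti-diagonal $\hat{P}$, while you additionally bring in Hermiticity to force $B=0$ — a harmless variation) to kill the anti-diagonal case, and then observe that Hermiticity and idempotency decouple blockwise in the diagonal case. No gaps.
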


\begin{proof}
By the Theorem \ref{thm:blockform} we know that any SSR linear operator $P$ decomposes in the basis $\mathcal{B}'$ as 
\begin{gather}
P= \left( \begin{array}{c|c}
P_{ee} & \hat{0} \\ \hline 
\hat{0} & P_{oo}
\end{array}\right) \quad \text{or} \quad P= \left( \begin{array}{c|c}
 \hat{0} & P_{oe} \\ \hline 
P_{eo} & \hat{0}
\end{array}\right)
\end{gather}
Now we only have to see that the anti-diagonal block form is not possible and then that $P$ is a projector iff $P_{ee}$ and $P_{oo}$ are.\\ 

If $P$ has an anti-diagonal block form then we obtain that $P^2$ is given by:

\begin{gather}
    P^2=\left( \begin{array}{c|c}
 P_{oe} \cdot P_{eo} & \hat{0} \\ \hline 
 \hat{0} & P_{eo} \cdot P_{oe}
\end{array}\right)
\end{gather}
which cannot be equal to $P$, unless we have the trivial case that can also be considered diagonal. Thus a projector cannot have an anti-diagonal block form.\\ 

So, considering only the diagonal block form we have that since
\begin{gather}
P^\dagger =\left( \begin{array}{c|c}
P_{ee}^\dagger & \hat{0} \\ \hline 
\hat{0} & P_{oo}^\dagger
\end{array}\right) \qquad \qquad P^2 =\left( \begin{array}{c|c}
P_{ee}^2 & \hat{0} \\ \hline 
\hat{0} & P_{oo}^2
\end{array}\right)
\end{gather}
Then, $P^\dagger=P$ iff $P_{ee}^\dagger=P_{ee}$ and $P_{oo}^\dagger=P_{oo}$, and  $P^2=P$ iff $P_{ee}^2=P_{ee}$ and $P_{oo}^2=P_{oo}$; proving the Proposition.
\end{proof}

\begin{prop}\textbf{ Observables)}
\label{prop:observables}
An operator $\hat{A}$ is Hermitian and is in $\Gamma_S$ iff $\hat{A}$ is an observable for a fermionic system under the SSR i.e. exists a set of orthonormal $\ket{\psi_i}\in \mathcal{H}_S$ such that $\hat{A}=\sum_i a_i \proj{\psi_i}$ with $a_i \in \mathbb{R}$.
\end{prop}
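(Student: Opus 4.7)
The plan is to prove the two implications separately, leaning on the block structure of $\Gamma_S$ that has already been established in Lemma \ref{lema:ssrmixed} and Theorem \ref{thm:blockform}, and then invoking the standard spectral theorem on each parity block.

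For the direction $(\Leftarrow)$, I would start from the assumed expansion $\hat{A} = \sum_i a_i \proj{\psi_i}$ with $a_i \in \mathbb{R}$ and $\ket{\psi_i} \in \mathcal{H}_S$ orthonormal. Hermiticity is then immediate from $a_i = a_i^{*}$ and $\proj{\psi_i}^{\dagger} = \proj{\psi_i}$. To show $\hat{A} \in \Gamma_S$, I would use that every $\ket{\psi_i} \in \mathcal{H}_S$ has a definite parity, so it lies in the span of $\mathcal{B}_e$ or in the span of $\mathcal{B}_o$. Hence each $\proj{\psi_i}$ is either supported on the even-even block or on the odd-odd block of the matrix representation in $\mathcal{B}'$, and the sum is block-diagonal of the form $\hat{A}_e \oplus \hat{A}_o$.

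For the direction $(\Rightarrow)$, suppose $\hat{A}$ is Hermitian and $\hat{A} \in \Gamma_S$. By Theorem \ref{thm:blockform} (applied with $M=N$ and with the diagonal block choice forced by the fact that Hermitian plus block-anti-diagonal is inconsistent unless trivial, exactly as argued in the proof of Proposition \ref{prop:Projector}), its matrix representation in $\mathcal{B}'$ is $\hat{A} = \hat{A}_{ee} \oplus \hat{A}_{oo}$. Hermiticity of $\hat{A}$ then forces $\hat{A}_{ee}^{\dagger}=\hat{A}_{ee}$ and $\hat{A}_{oo}^{\dagger}=\hat{A}_{oo}$. Applying the finite-dimensional spectral theorem to each block yields orthonormal eigenbases $\{\ket{\psi^e_i}\}_{i=1}^{2^{N-1}} \subset \mathrm{span}(\mathcal{B}_e)$ and $\{\ket{\psi^o_j}\}_{j=1}^{2^{N-1}} \subset \mathrm{span}(\mathcal{B}_o)$ with real eigenvalues $a^e_i,a^o_j \in \mathbb{R}$, giving
\begin{equation}
\hat{A} = \sum_i a^e_i \proj{\psi^e_i} + \sum_j a^o_j \proj{\psi^o_j}.
\end{equation}
Since even and odd vectors are automatically orthogonal, relabelling these eigenvectors as $\{\ket{\psi_k}\}_{k=1}^{2^N}$ produces an orthonormal set in $\mathcal{H}_S$ with $\hat{A} = \sum_k a_k \proj{\psi_k}$, as required.

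I do not expect a serious obstacle here; the only step that needs slight care is ruling out the anti-diagonal block form allowed by Theorem \ref{thm:blockform}. The cleanest way is to note that an anti-diagonal block operator sends even states to odd states and vice versa, so $\hat{A}=\hat{A}^{\dagger}$ combined with $\hat{A}\in \Gamma_S$ forces the even-odd cross blocks to vanish, exactly the observation already used for projectors in Proposition \ref{prop:Projector}. Once this is in place, the rest is the spectral theorem applied blockwise.
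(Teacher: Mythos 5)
Your overall strategy --- reduce to the block-diagonal form in $\mathcal{B}'$ and then apply the finite-dimensional spectral theorem separately to the even and odd blocks --- is exactly the paper's argument for the forward direction, and your treatment of the (trivial) converse, which the paper does not even write out, is fine. However, one step in your forward direction is wrongly justified: you claim the anti-diagonal alternative of Theorem \ref{thm:blockform} is excluded because ``Hermitian plus block-anti-diagonal is inconsistent unless trivial, exactly as argued in the proof of Proposition \ref{prop:Projector}.'' That is false. A block-anti-diagonal operator
\begin{equation*}
\hat{A}=\left(\begin{array}{c|c}\hat{0} & B \\ \hline B^{\dagger} & \hat{0}\end{array}\right)
\end{equation*}
is Hermitian for every $B$, so Hermiticity alone does not kill the cross blocks; in Proposition \ref{prop:Projector} it is the idempotency condition $\hat{P}^2=\hat{P}$, not $\hat{P}=\hat{P}^{\dagger}$, that rules out the anti-diagonal form. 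And such an $\hat{A}$ genuinely fails the conclusion of the proposition: its eigenvectors are superpositions of even and odd states, so it admits no spectral decomposition over orthonormal vectors in $\mathcal{H}_S$.

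The repair is to not route through Theorem \ref{thm:blockform} at all. The hypothesis is $\hat{A}\in\Gamma_S$, i.e.\ $\hat{A}$ is itself a parity-SSR operator, and membership in $\Gamma_S$ directly yields the block-diagonal representation $\hat{A}_{ee}\oplus\hat{A}_{oo}$ in the basis $\mathcal{B}'$ --- this is precisely how the paper proceeds, in the same spirit as Lemma \ref{lema:ssrmixed}. Theorem \ref{thm:blockform} characterizes the strictly larger class of operators that merely preserve $\Gamma_S$ under conjugation, which is where the anti-diagonal option lives (and where it is needed, e.g., for Kraus operators). Once block-diagonality is obtained from $\hat{A}\in\Gamma_S$ itself, your blockwise diagonalization, the reality of the eigenvalues, and the observation that the resulting eigenvectors have definite parity (hence lie in $\mathcal{H}_S$ and are mutually orthogonal across the two sectors) go through verbatim and coincide with the paper's proof.
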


\begin{proof}
Lets start by naming the elements of the basis $\mathcal{B}'$ as $\ket{E_i}$ for the $i=1,\dots, 2^{N-1}$ firsts and $\ket{O_i}$ for the final $i=1,\dots, 2^{N-1}$ elements of the basis. Then since $A$ is an hermitian SSR operator it follows that
\begin{gather}
\left.A\right|_{\mathcal{B}'}=\begin{pmatrix}
A_E && 0 \\ 0 && A_O
\end{pmatrix}
\end{gather}
where $A_E$ and $A_O$ are hermitian matrices of dimension $2^{N-1}$. Thus they can be decomposed into real values with unitary matrices $U_E$ and $U_O$ respectively, and thus:
\begin{gather}
\left.A\right|_{\mathcal{B}'}=\begin{pmatrix}
A_E && 0 \\ 0 && A_O
\end{pmatrix}=\begin{pmatrix}
U_E D_E U_E^\dagger && 0 \\ 0 && U_O D_O U_O^\dagger
\end{pmatrix}=\begin{pmatrix}
U_E && 0 \\ 0 && U_O
\end{pmatrix}\cdot \begin{pmatrix}
D_E && 0 \\ 0 && D_O
\end{pmatrix}\cdot \begin{pmatrix}
U_E && 0 \\ 0 && U_O
\end{pmatrix}^\dagger
\end{gather}
thus since the matrix $\begin{pmatrix}
U_E && 0 \\ 0 && U_O
\end{pmatrix}$ is the matrix representation of a unitary operator $U$ (for more clarity see Theorem \ref{thm:Unitary}), we have that the states $\ket{E_i'}=U \ket{E_i}$ and $\ket{O_i'}=U\ket{O_i}$ will be eigenvectors with a real eigenvalue of $A$. It is only left to see that indeed the states satisfy the SSR. Is clear that the subspaces of even and odd spaces are invariant under the action of $U$. Thus the theorem is proven. 
\end{proof}

\begin{theorem*}\textbf{\ref{thm:Unitary}. Unitary)} 
$\hat{U}$ is an SSR unitary operator, acting on $\mathcal{H}_{S}$ if and only if the matrix representation of the operator $\hat{U}$ in the basis $\mathcal{B}'$ takes the following form:
\begin{align}
\hat{U}=U_{ee} \oplus U_{oo} = \left(\begin{array}{c|c}
U_{ee}  & \hat{0} \\ \hline \hat{0}  & U_{oo}  
\end{array}\right),
\end{align}
where $U_{ee}$ and $U_{oo}$ are unitary matrices, each in $M_{2^{N-1}\times 2^{N-1}}(\mathbb{C})$, living in the even and odd space respectively, and $\hat{0}$ is the zero element of $M_{2^{N-1}\times 2^{N-1}}(\mathbb{C})$.
\end{theorem*}

\begin{proof}
By the Theorem \ref{thm:blockform} we know that any SSR linear operator $U$ decomposes in the basis $\mathcal{B}'$ as 
\begin{gather}
U= \left( \begin{array}{c|c}
U_{ee} & \hat{0} \\ \hline 
\hat{0} & U_{oo}
\end{array}\right) \quad \text{or} \quad \left( \begin{array}{c|c}
\hat{0} & U_{oe} \\ \hline 
U_{eo} & \hat{0} 
\end{array}\right)
\end{gather}

First, we will prove that an anti-diagonal block unitary cannot exist. As mentioned in the main article, we do this by designing a protocol where the no-signalling principle is violated if such a unitary exists. The quantum circuit of the scheme is the following:

\begin{figure}[ht]
\centering
\includegraphics[width=0.40\textwidth]{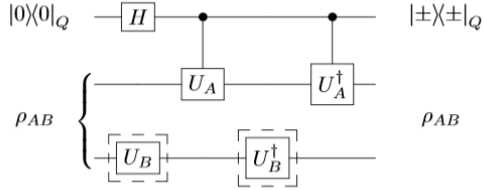}
\caption{ Scheme that shows the violation of no-signalling by anti-diagonal unitaries. The dashed boxes show the possibility that Bob applies those gates or not, depending on if he wants Alice to have a $\ket{+}$ or a $\ket{-}$ state.}
\end{figure} 

We assume that for an anti-diagonal unitary $\hat{U}$ in a set of modes $A$, we can apply the same unitary to another set of modes $B$. In this scheme, there are two separably distinct sets of fermionic modes $ A$ and $B$.  We denote the different applications as $\hat{U}_A$ and $\hat{U}_B$. In the spatial location of $A$, there is also a qubit system. Alice is able to couple the qubit system to the fermionic modes $A$ via a controlled-$\hat{U}_A$ operation, defined as: $\ketbra{0}{0} \otimes \mathbb{I}_A + \ketbra{1}{1} \otimes \hat{U}_A$.\\

The scheme consists in two cases, the case where $B$ decides to apply the anti-diagonal unitaries $\hat{U}_B$ and $\hat{U}_B^\dagger$ to their modes in the timing chosen in the scheme; and the case where the modes of $B$ do not get acted on. The initial state is $\ketbra{0}{0}\otimes \rho_{AB}$, where $\rho_{AB}$ is any SSR fermionic state. We can choose the qubit to be in this initial state. Now, lets split the scheme into its two parts. 
\begin{enumerate}
    \item $B$ is unmodified: Then we have that the protocol gives $\ketbra{+}{+}\otimes \rho_{AB}$. After the controlled-$\hat{U}_A$ gate we have $\frac{1}{2}\left(\ketbra{0}{0}\otimes \rho_{AB}+\ketbra{0}{1}\otimes \rho_{AB} \hat{U}_A^\dagger +\ketbra{1}{0}\otimes \hat{U}_A \rho_{AB} +\ketbra{1}{1}\otimes \hat{U}_A\rho_{AB}\hat{U}_A^\dagger\right)$ and after the final controlled-$\hat{U}_A^\dagger$ gate we have $\frac{1}{2}\left(\ketbra{0}{0}\otimes \rho_{AB}+\ketbra{0}{1}\otimes \rho_{AB} \hat{U}_A^\dagger\hat{U}_A +\ketbra{1}{0}\otimes \hat{U}_A^\dagger\hat{U}_A \rho_{AB} +\ketbra{1}{1}\otimes \hat{U}_A^\dagger\hat{U}_A\rho_{AB}\hat{U}_A^\dagger\hat{U}_A\right)=\ketbra{+}{+}\otimes\rho_{AB}$.
    \item $B$ applies the unitaries: from also $\ketbra{+}{+}\otimes \rho_{AB}$ we then have $\ketbra{+}{+}\otimes \hat{U}_B \rho_{AB} \hat{U}_B^\dagger$. After applying the controlled-$\hat{U}_A$ gate it is obtained $\frac{1}{2}\left(\ketbra{0}{0}\otimes \hat{U}_B \rho_{AB} \hat{U}_B^\dagger +\ketbra{0}{1}\otimes \hat{U}_B \rho_{AB} \hat{U}_B^\dagger \hat{U}_A^\dagger+\ketbra{1}{0}\otimes \hat{U}_A \hat{U}_B \rho_{AB} \hat{U}_B^\dagger +\ketbra{1}{1}\otimes \hat{U}_A \hat{U}_B \rho_{AB} \hat{U}_B^\dagger \hat{U}_A^\dagger\right)$. Then, when $\hat{U}_B^\dagger$ is applied we get $\frac{1}{2}\left(\ketbra{0}{0}\otimes \rho_{AB} +\ketbra{0}{1}\otimes \rho_{AB} \hat{U}_B^\dagger \hat{U}_A^\dagger \hat{U}_B+\ketbra{1}{0}\otimes \hat{U}_B^\dagger \hat{U}_A \hat{U}_B \rho_{AB} +\ketbra{1}{1}\otimes \hat{U}_B^\dagger \hat{U}_A \hat{U}_B \rho_{AB} \hat{U}_B^\dagger \hat{U}_A^\dagger \hat{U}_B\right)$. And after the final controlled-$\hat{U}_A^\dagger$ operation is applied, the result obtained is $\frac{1}{2}\left(\ketbra{0}{0}\otimes \rho_{AB} +\ketbra{0}{1}\otimes \rho_{AB} \hat{U}_B^\dagger \hat{U}_A^\dagger \hat{U}_B \hat{U}_A+\ketbra{1}{0}\otimes \hat{U}_A^\dagger \hat{U}_B^\dagger \hat{U}_A \hat{U}_B \rho_{AB} +\ketbra{1}{1}\otimes \hat{U}_A^\dagger \hat{U}_B^\dagger \hat{U}_A \hat{U}_B \rho_{AB} \hat{U}_B^\dagger \hat{U}_A^\dagger \hat{U}_B \hat{U}_A\right)$
\end{enumerate}

In order to proceed, is required that we proof the following statement. If $C$ and $D$ are anti-block diagonal SSR operators local on two non-overlapping sets of modes $A$ and $B$, then $CD=-DC$. To proof this statement, we just need to observe that such operators can be decomposed as a linear combination of monomials that are products of an odd number of creations and annihilation operators of the modes in $A$ and $B$ respectively. And since for each of this monomials $f_{\lambda_1}\dots f_{\lambda_n} f^\dagger_{\mu_1} \dots f^\dagger_{\mu_m}$ we observe that $(f_{\lambda_1}\dots f_{\lambda_n} f^\dagger_{\mu_1} \dots f^\dagger_{\mu_m})(f_{\nu_1}\dots f_{\nu_s} f^\dagger_{\eta_1} \dots f^\dagger_{\eta_r})=(-1)^{(n+m)(r+s)}(f_{\nu_1}\dots f_{\nu_s} f^\dagger_{\eta_1} \dots f^\dagger_{\eta_r}) (f_{\lambda_1}\dots f_{\lambda_n} f^\dagger_{\mu_1} \dots f^\dagger_{\mu_m}) $ if $\lambda_i,\mu_j \in A$ and $\nu_k,\eta_l \in B$ from this follows our crucial statement. We can deduce that the final state of the scheme for when the unitaries in $B$ are applied is $\frac{1}{2}\left(\ketbra{0}{0}\otimes \rho_{AB} +\ketbra{0}{1}\otimes \rho_{AB} \hat{U}_B^\dagger \hat{U}_A^\dagger \hat{U}_B \hat{U}_A+\ketbra{1}{0}\otimes \hat{U}_A^\dagger \hat{U}_B^\dagger \hat{U}_A \hat{U}_B \rho_{AB} +\ketbra{1}{1}\otimes \hat{U}_A^\dagger \hat{U}_B^\dagger \hat{U}_A \hat{U}_B \rho_{AB} \hat{U}_B^\dagger \hat{U}_A^\dagger \hat{U}_B \hat{U}_A\right) = \frac{1}{2}\left(\ketbra{0}{0}\otimes \rho_{AB} -\ketbra{0}{1}\otimes \rho_{AB} \hat{U}_A^\dagger \hat{U}_B^\dagger \hat{U}_B \hat{U}_A-\ketbra{1}{0}\otimes \hat{U}_B^\dagger \hat{U}_A^\dagger \hat{U}_A \hat{U}_B \rho_{AB} +\ketbra{1}{1}\otimes \hat{U}_B^\dagger \hat{U}_A^\dagger \hat{U}_A \hat{U}_B \rho_{AB} \hat{U}_A^\dagger \hat{U}_B^\dagger \hat{U}_B \hat{U}_A\right) = \frac{1}{2}\left(\ketbra{0}{0}\otimes \rho_{AB} -\ketbra{0}{1}\otimes \rho_{AB}-\ketbra{1}{0}\otimes \rho_{AB} +\ketbra{1}{1}\otimes \rho_{AB}\right)=\ketbra{-}{-}\otimes \rho_{AB}$. And since $\ket{+}$ and $\ket{-}$ are orthogonal states, Alice would know if Bob has applied the unitaries by measuring the qubit in the diagonal basis. Thus, Bob would have transmitted information to Alice without exchanging any particle nor any sort of classical communication. Bob via this protocol is able to transmit a message to Alice by acting remotely on his modes, no classical communication channel connects the two. Moreover, the information is transmitted instantaneously. Thus, the no-signalling principle would be violated. For this reason we conclude that anti-block diagonal unitaries cannot exist.\\    

Now, having discarded the anti-diagonal block case, we just have to see that $U$ is a unitary operator iff $U_{ee}$ and $U_{oo}$ are. And this follows directly from the block diagonal form action under hermitian conjugation and under product of block forms, e.g
\begin{gather}
U^\dagger =\left( \begin{array}{c|c}
U_{ee}^\dagger & \hat{0} \\ \hline 
\hat{0} & U_{oo}^\dagger
\end{array}\right) \qquad \qquad U U^\dagger =  \left( \begin{array}{c|c}
U_{ee} & \hat{0} \\ \hline 
\hat{0} & U_{oo}
\end{array}\right) \left( \begin{array}{c|c}
U_{ee}^\dagger & \hat{0} \\ \hline 
\hat{0} & U_{oo}^\dagger
\end{array}\right)= \left( \begin{array}{c|c}
U_{ee} U_{ee}^\dagger& \hat{0} \\ \hline 
\hat{0} & U_{oo} U_{oo}^\dagger
\end{array}\right)
\end{gather}
Thus, $U U^\dagger=\mathbb{I}_{\mathcal{H}}$ iff $U_{ee} U_{ee}^\dagger=\mathbb{I}_{2^{N-1}}$ and $U_{oo} U_{oo}^\dagger=\mathbb{I}_{2^{N-1}}$; proving the Theorem.
\end{proof}

Once we have proven the theorems that characterize the different types of SSR operators, we reproduce the proofs of the results that we need to discuss the notion of separable states properly. First, we prove the analogous Schmidt decomposition and purification procedures.

\begin{theorem*}\textbf{\ref{thm:schmidt}. Schmidt decomposition)} 
Given any bipartite, pure SSR fermionic state $\ket{\psi}_{AB} \in \mathcal{H}_{S}^{AB}$, there exist orthonormal basis $\{\ket{i}_A\} \in \mathcal{H}_{S}^A$ and $\{\ket{i}_B\} \in \mathcal{H}_{S}^B$, such that 
\begin{align}
\ket{\psi}_{AB}=\sum_i \sqrt{p_i} \ket{i}_A\wedge \ket{i}_B, 
\end{align}
where $\{p_i\}$ are probabilities.
\end{theorem*}

\begin{proof}
First of all,  the state $\ket{\psi}$ can be decomposed on the canonical basis $\mathcal{B}$ where its elements can be thought as products of the canonical basis of the subsystems:
\begin{equation}
\ket{\psi}=\sum_{i,j} \lambda_{i,j} \ket{e_i}\wedge \ket{e_j}
\end{equation}
where $\{\ket{e_i}\}$ is a basis of $\mathcal{H}_A$ and $\{\ket{e_j}\}$ is the canonical basis of $\mathcal{H}_B$. Therefore this expression can be transformed into another, transforming the elements of $\mathcal{B}_A$ into another basis $\mathcal{B}'_A$ and transforming the elements of $\mathcal{B}_B$ into another basis $\tilde{\mathcal{B}}_B$:
\begin{equation}
\ket{\psi}=\sum_{i,j} \mu_{i,j} \ket{f_i}\wedge \ket{g_j}
\end{equation}
with $\{f_i\}\in\mathcal{B}'_A$ and $\{g_j\}\in\tilde{\mathcal{B}}_B$. Since the transformation is unitary, the state stays well normalized. Therefore for any basis on $\mathcal{H}_A$ and $\mathcal{H}_B$ the state can be decomposed in these basis. \\

A new basis of $\mathcal{H}_B$ can be defined if the terms are grouped:
\begin{gather}
\ket{\psi}=\sum_{i} \ket{f_i}\wedge(\sum_j \mu_{i,j} \ket{g_j})=\sum_{i} \ket{f_i}\wedge\ket{h_i}
\end{gather}
where $\{\ket{h_i}\}$ it is not normalized neither orthogonal. \\

Once this description has been done, as a basis for $\mathcal{H}_A$ lets choose the basis in which $\rho_A$ is diagonal. $\rho_A$ is obtained by partial tracing B in $\rho=\proj{\psi}$. Therefore since we have $\rho_A=\sum_i p_i \proj{i}$, and it has been chosen $\ket{f_i}=\ket{i}$ lets see the relation with what we previously had:
\begin{gather}
\rho=\sum_{i,j} \ket{i}\wedge \ket{h_i}\bra{j}\wedge \bra{h_j} \Rightarrow \rho_A=\sum_{i,j} \ketbra{i}{j} \scp{h_j}{h_i}=\sum_i p_i \proj{i}
\end{gather}
where in the first implication it has been used the Proposition \ref{prop:pureptssr}. This relations imply that $\scp{h_j}{h_i}=p_i\delta_{i,j}$. Therefore the $\{h_i\}$ are indeed orthogonal. Defining $\ket{\tilde{i}}\equiv \frac{\ket{h_i}}{\sqrt{p_i}}$ the set $\{\ket{\tilde{i}}\}$ conform an orthonormal basis of $\mathcal{H}_B$, and therefore it can be written:

\begin{equation}
\ket{\psi}=\sum_i \sqrt{p_i} \ket{i}\wedge \ket{\tilde{i}}
\end{equation}
\end{proof}

\begin{corollary*}\textbf{\ref{cor:purif}. Purification)}
If $\rho \in \mathcal{R}_{S}^M$, then there exists a fermionic space $E$ of $M$ modes and a pure state $\omega \in \mathcal{H}_{S}^M  \wedge E$, such that $\tr_E(\omega)=\rho$. 
\end{corollary*}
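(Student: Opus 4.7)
The approach is to mimic the purification construction for distinguishable systems, using Theorem~\ref{thm:schmidt} essentially in reverse, while letting the parity super-selection rule dictate the pairing between system eigenvectors and environment vectors. First I would spectrally decompose $\rho$: by Proposition~\ref{prop:observables}, every $\rho\in\mathcal{R}_S^M$ admits an eigenbasis of pure SSR states, so $\rho=\sum_i p_i\proj{\psi_i}$ with $\ket{\psi_i}\in\mathcal{H}_S^M$, $p_i\geq 0$ and $\sum_i p_i=1$. The block structure of Lemma~\ref{lema:ssrmixed} guarantees that each $\ket{\psi_i}$ has a definite parity, and that at most $2^{M-1}$ of them are even and at most $2^{M-1}$ are odd.

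Next, I would introduce an $M$-mode fermionic environment with Hilbert space $\mathcal{H}_E^M$. Because $\mathcal{H}_E^M$ provides exactly $2^{M-1}$ orthonormal even and $2^{M-1}$ orthonormal odd SSR vectors, I can pick orthonormal SSR states $\{\ket{i}_E\}$ so that $\ket{i}_E$ has the \emph{same} parity as $\ket{\psi_i}_S$, and set
\begin{equation}
\ket{\omega}=\sum_i \sqrt{p_i}\,\ket{\psi_i}_S\wedge \ket{i}_E.
\end{equation}
Since the wedge of two even (or two odd) states is even, every summand has even total parity, and orthonormality of the $\ket{i}_E$ gives $\ket{\omega}$ unit norm; thus $\ket{\omega}$ is a pure SSR state in $\mathcal{H}_S^M\wedge\mathcal{H}_E^M$.

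Finally, I would verify $\tr_E\proj{\omega}=\rho$. Expanding $\proj{\omega}=\sum_{i,j}\sqrt{p_ip_j}\,(\ket{\psi_i}_S\wedge\ket{i}_E)(\bra{\psi_j}_S\wedge\bra{j}_E)$, and using that each product equals $\ketbra{\psi_i}{\psi_j}_S\wedge\ketbra{i}{j}_E$ (immediate once both sides are expressed as products of creation and annihilation operators, as in the proof of Lemma~\ref{lema:trprod}), I apply Proposition~\ref{prop:pureptssr} to obtain $\tr_E(\ketbra{\psi_i}{\psi_j}_S\wedge\ketbra{i}{j}_E)=\scp{j}{i}_E\,\ketbra{\psi_i}{\psi_j}_S=\delta_{ij}\proj{\psi_i}_S$. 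Summing over $i,j$ then yields $\tr_E\proj{\omega}=\sum_i p_i\proj{\psi_i}_S=\rho$.

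The principal obstacle is structural rather than computational: one must be sure that a matching of system eigenvectors to orthonormal environment states of the same parity actually exists, so that $\ket{\omega}$ itself is an allowed SSR state. This is precisely what the $2^{M-1}\oplus 2^{M-1}$ even/odd decomposition of $\mathcal{H}_S^M$ delivers, and it is the reason that $M$ environment modes form exactly the right size of environment (a smaller environment could fail to supply enough orthonormal SSR vectors of each parity). The remaining subtlety is checking that rewriting $(\ket{a}\wedge\ket{c})(\bra{b}\wedge\bra{d})$ as the operator wedge $\ketbra{a}{b}\wedge\ketbra{c}{d}$ introduces no stray signs under the SSR hypothesis, which follows from Lemma~\ref{lema:distribu}.
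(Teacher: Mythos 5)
Your proof is correct and follows essentially the same route as the paper: spectrally decompose $\rho$ into pure SSR eigenstates, wedge each $\ket{\psi_i}$ with an orthonormal environment state $\ket{i}_E$ of matching parity in an $M$-mode environment, and recover $\rho$ by partial tracing via Proposition~\ref{prop:pureptssr}. Your explicit counting argument ($2^{M-1}$ even and $2^{M-1}$ odd environment vectors) makes precise a point the paper leaves implicit, but the construction is the same.
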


\begin{proof}
We know that we can decompose any $\rho$ as $\rho=\sum_i p_i \proj{\psi_i}$ with $\ket{\psi_i}$ chosen to be SSR states. Since the sum is finite, we consider a set of $M$ new modes, where $2^M$ is the number of summing terms in the decomposition of $\rho$. With this set, we generate a new fermionic space $E$ with $M$ modes. Now we choose the state:
\begin{gather}
\omega=\sum_i \sqrt{p_i} \ket{\psi_i}\wedge \ket{i}
\end{gather}
where $\ket{i}$ are states of the canonical basis of our new space of $M$ modes. Their parity is the same as the parity of $\ket{\psi_i}$. So the global state $\omega$ has an even parity in $\mathcal{H}_{S}\wedge E$. Now, it is straightforward to check with the proved properties of the partial trace that $\tr_E(\omega)=\rho$.
\end{proof}

Once we have these tools, we prove the final results that characterize fermionic SSR uncorrelated states. We use them to discuss the relationship between the three different definitions presented in the main article. 

\begin{prop}
\label{prop:ProdState}
For a SSR fermionic bipartite state $\rho_{AB} \in \mathcal{R}_{S}^{AB}$: 
\begin{gather}
 \tr(\rho_{AB} (\hat{O}_A\wedge \hat{O}_B))=\tr(\rho_A \hat{O}_A)\tr(\rho_B \hat{O}_B) \quad \forall \hat{O}_X\in \mathcal{O}_X \quad \Longleftrightarrow \quad  \rho_{AB}=\rho_A \wedge \rho_B 
\end{gather}
where $\mathcal{O}_X$ is the set of all Hermitian operators local on the subspace $\mathcal{H}^X$
\end{prop}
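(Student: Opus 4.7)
The plan is to prove the two implications separately. For the $\Leftarrow$ direction, assuming $\rho_{AB}=\rho_A\wedge\rho_B$, I would apply Lemma \ref{lema:distribu} to rewrite $(\rho_A\wedge\rho_B)(\hat{O}_A\wedge\hat{O}_B)=(\rho_A\hat{O}_A)\wedge(\rho_B\hat{O}_B)$, and then Lemma \ref{lema:trprod} to obtain $\tr\bigl(\rho_{AB}(\hat{O}_A\wedge\hat{O}_B)\bigr)=\tr(\rho_A\hat{O}_A)\tr(\rho_B\hat{O}_B)$, which is the claimed factorisation. Note that this direction does not use the Hermiticity of the test operators.

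For the $\Rightarrow$ direction, I would set $\Delta:=\rho_{AB}-\rho_A\wedge\rho_B\in\Gamma(\mathcal{H}^{AB})$. Combining the hypothesis with the identity just established for the backward direction, the assumption can be recast as
\begin{align}
\tr\bigl(\Delta\,(\hat{O}_A\wedge\hat{O}_B)\bigr)=0,\qquad\forall\,\hat{O}_A\in\mathcal{O}_A,\;\hat{O}_B\in\mathcal{O}_B.
\end{align}
I would then invoke the non-degeneracy of the Hilbert--Schmidt pairing $(X,Y)\mapsto\tr(XY)$ on $\Gamma(\mathcal{H}^{AB})$: if the complex linear span of $\{\hat{O}_A\wedge\hat{O}_B\}$ exhausts $\Gamma(\mathcal{H}^{AB})$, vanishing of the pairing on this set forces $\Delta=0$, yielding $\rho_{AB}=\rho_A\wedge\rho_B$ as required.

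The main obstacle is therefore the spanning claim, and it is precisely the place where allowing arbitrary (non-SSR) Hermitian test operators is indispensable. I would settle it in two steps. First, by the corollary following Lemma \ref{lema:0}, every element of $\Gamma(\mathcal{H}^{AB})$ is a $\mathbb{C}$-linear combination of ordered monomials in the creation and annihilation operators of the modes in $A\cup B$; each such monomial factorises into a part supported on $A$-modes and a part supported on $B$-modes, and by the definition of the wedge product on operators it equals $\hat{X}_A\wedge\hat{Y}_B$ for suitable (not necessarily Hermitian) local operators. Hence the wedge products of local operators span $\Gamma(\mathcal{H}^{AB})$ over $\mathbb{C}$. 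Second, any local operator decomposes as a complex combination $\hat{X}_A=H_1^A+iH_2^A$ of its Hermitian and anti-Hermitian parts, both of which are local Hermitian operators on $A$, and similarly $\hat{Y}_B=H_1^B+iH_2^B$; bilinearly expanding $\hat{X}_A\wedge\hat{Y}_B$ rewrites it as a $\mathbb{C}$-linear combination of wedges of local Hermitian operators, closing the argument.

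I would emphasise that restricting the test operators to parity-SSR-respecting Hermitian operators would \emph{not} suffice: by Theorem \ref{thm:blockform}, such operators generate only the block-diagonal subspace of $\Gamma^A_S$, and their wedges miss the parity-flipping monomials needed to probe all of $\Delta$. This is precisely the local-tomography failure for fermions pointed out in \cite{Ariano} and mentioned in the main text as the reason why definitions (i) and (iii) of uncorrelated states need not coincide; the proposition shows that, on the contrary, (i) and (ii) do agree once SSR is dropped on the local test operators.
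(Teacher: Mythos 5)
Your proof is correct, and while the backward direction coincides with the paper's (same use of Lemma \ref{lema:distribu} followed by Lemma \ref{lema:trprod}), your forward direction takes a genuinely different and, in one respect, more careful route. The paper also reduces to showing $\tr\bigl(D(\hat{O}_A\wedge\hat{O}_B)\bigr)=0$ for $D=\rho_{AB}-\rho_A\wedge\rho_B$, but then declares that Lemma \ref{lema:tracezero} ``is exactly this statement'' and stops. It is not exactly that statement: Lemma \ref{lema:tracezero} assumes the trace vanishes against \emph{every} Hermitian operator in $\Gamma_S$, whereas the hypothesis only supplies wedge products of local Hermitian (not necessarily SSR) operators, so one still owes the argument that this restricted family is rich enough. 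Your spanning argument --- monomials in creation/annihilation operators factor across the $A|B$ cut into $\hat{X}_A\wedge\hat{Y}_B$, and Hermitian/anti-Hermitian decomposition of each factor reduces these to complex combinations of wedges of local Hermitian operators --- supplies precisely that missing step, and once you have that the full complex span is $\Gamma(\mathcal{H}^{AB})$, non-degeneracy of the trace pairing kills $D$ directly, so you do not even need Lemma \ref{lema:tracezero}. The trade-off is that the paper's route is shorter and reuses a lemma already needed for the partial-trace uniqueness, while yours is self-contained and makes explicit why the test operators must be allowed to violate the parity SSR; your closing remark correctly ties this to the failure of local tomography and to the counterexample the paper gives for definition (iii).
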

\begin{proof}
$\Leftarrow$: If we have $\rho_{AB}=\rho_A \wedge \rho_B$, then given any $\hat{O}_A\in \mathcal{O}_A$ and any $\hat{O}_B\in \mathcal{O}_B$, $\tr(\rho_{AB}(\hat{O}_A \wedge \hat{O}_B))=\tr((\rho_A \wedge \rho_B)(\hat{O}_A \wedge \hat{O}_B))$ which by Lemma \ref{lema:distribu} is equal to $\tr((\rho_A \hat{O}_A)\wedge (\rho_B \hat{O}_B) )$ and then by Lemma \ref{lema:trprod} gives $\tr(\rho_A \hat{O}_A)\cdot \tr(\rho_B \hat{O}_B)$ just as desired. Since the equality holds for any Hermitian local operators, holds for all SSR local Hermitian operators and all local Hermitian operators. \\

$\Rightarrow$: In order to proof the other implication we use that since we know that the $\Leftarrow$ implication holds, for all  $\hat{O}_A \in \mathcal{O}_A$ and for all $ \hat{O}_B \in \mathcal{O}_B$: $\tr(\rho_A \hat{O}_A)\cdot\tr(\rho_B \hat{O}_B)=\tr((\rho_A\wedge \rho_B)(\hat{O}_A\wedge \hat{O}_B))$ So the condition  $ \tr(\rho_{AB} (\hat{O}_A\wedge \hat{O}_B))=\tr(\rho_A \hat{O}_A)\tr(\rho_B \hat{O}_B)$ for all $\hat{O}_X\in \mathcal{O}_X$ is equivalent to $ \tr(\rho_{AB} (\hat{O}_A\wedge \hat{O}_B))=\tr((\rho_A\wedge \rho_B)(\hat{O}_A\wedge \hat{O}_B))$ for all $\hat{O}_X\in \mathcal{O}_X$  , which is equivalent by linearity of the trace and distribution of the product to $ \tr((\rho_{AB}-\rho_A\wedge \rho_B) (\hat{O}_A\wedge \hat{O}_B))=0$ for all $\hat{O}_X\in \mathcal{O}_X$. And by defining $D= \rho_{AB}-\rho_A\wedge \rho_B$ the statement that we want to proof is equivalent to proof that: if  $\tr(D (\hat{O}_A\wedge \hat{O}_B))=0$ for all $\hat{O}_X\in \mathcal{O}_X$ then $D=0$. And since $D$ is a Hermitian operator because is a lineal combination of Hermitian operators, we can apply Lemma \ref{lema:tracezero} which is exactly this statement, so we have proven the implication. 
\end{proof}

So we have proven that the first and second definitions of uncorrelated states are equivalent. Note though, that in the proof of the Proposition \ref{prop:ProdState} we see that in the implication $\Leftarrow$ also relies the proof that the states that satisfy the second possible definition of uncorrelated state mentioned in the article will also satisfy the third. This is since all SSR local observables are local Hermitian SSR operators, that are a subset of local Hermitian operators. But, as mentioned in \cite{Banuls09} there exist states that adhere to the third definition but not to the second one. One counterexample is, for a 2 mode fermionic system, under the matrix representation in the basis $\mathcal{B}$ 
\begin{gather}
\rho_{12}=\frac{1}{16}\begin{pmatrix} 9 & 0 & 0 & -i \\ 0 & 3 & -i & 0 \\ 0 & i & 3 & 0 \\ i & 0 & 0 & 1 \end{pmatrix} \quad \rho_1=\frac{1}{16}\begin{pmatrix} 12 & 0 \\ 0 & 4 \end{pmatrix} \quad \rho_2= \frac{1}{16}\begin{pmatrix} 12 & 0 \\ 0 & 4 \end{pmatrix}
\end{gather} 
The result relies on the fact that there are only 2 linearly independent Hermitian SSR local operators in the 1 mode system, and both have a diagonal representation in the basis $\mathcal{B}$, being $\{\mathbb{I}, f_1 f_1^\dagger\}$. Now, assuming that $\rho_{AB}$ is pure, we are able to proof that the three proposed definitions of uncorrelated states are the same. 

\begin{proof}
We have to prove the implication from the third to the second definition, and we do it by \textit{reductio ad absurdum}. We start writing $\rho_{AB}$ as $\rho_{AB}=\proj{\psi}$ since we know that is pure. Now it will be assumed that $\rho_{AB}\neq \rho_A \wedge \rho_B$, and get to a contradiction:\\

Using the Schmidt decomposition from Theorem \ref{thm:schmidt} proven in this appendix, one can decompose $\rho_{AB}$ as:

\begin{gather}
\rho_{AB}=\sum_{i,j} \sqrt{p_i p_j} \ketbra{i_A
}{j_A} \wedge \ketbra{i_B}{j_B}
\end{gather}

We can say that if $\rho_{AB} \neq \rho_A \wedge \rho_B$ then the number of non-zero $p_i$ is greater than 1. Therefore, without loss of generality we can consider $p_1,p_2 \in (0,1)$. Now, lets see what is obtained:

\begin{gather}
\tr((P_A\wedge P_B)\rho)=\sum_{i,j
} \sqrt{p_i p_j} \tr((P_A\wedge P_B) \ketbra{i_A}{j_A} \wedge \ketbra{i_B}{j_B}
)=\nonumber \\ =\sum_{i,j
} \sqrt{p_i p_j} \tr(P_A \ketbra{i_A}{j_A} \wedge P_B \ketbra{i_B}{j_B}
)= \sum_{i,j
} \sqrt{p_i p_j} \tr(P_A \ketbra{i_A}{j_A}) \tr(P_B \ketbra{i_B}{j_B}
) 
\end{gather}

Now, it is easy to calculate the corresponding partial traces to obtain $\rho_A=\sum_{i} p_i \proj{i_A}$ and $\rho_B=\sum_{j} p_j \proj{j_B}$. Thus, the right hand part of the first uncorrelation relation becomes:

\begin{gather}
\tr(P_A \rho_A)\tr(P_B \rho_B)=\sum_{i,j} p_i p_j \tr(P_A \proj{i_A}) \tr(P_B \proj{j_B})
\end{gather}

Now we will see that these two quantities cannot be equal. If they were, for all $P_A,P_B$, choosing $P_A=\proj{1_A}$ and $P_B=\proj{2_B}$, which obviously are hermitian operators, one will obtain:

\begin{gather}
\tr((P_A\wedge P_B) \rho_{AB})=\sum_{i,j} \delta_{i,1} \delta_{i,2} \sqrt{p_i p_j} \tr(\ketbra{1_A}{j_A}) \tr(\ketbra{2_B}{j_B})=0\\ \tr(P_A\rho_A)\tr(P_B \rho_B)=\sum_{i,j} p_i p_j \delta_{1,i} \delta_{2,j} \tr(\ketbra{1_A}{i_A}) \tr(\ketbra{2_B}{j_B})=\sum_{i,j} p_i p_j \delta_{1,i}^2 \delta_{2,j}^2=p_1 p_2
\end{gather}

So, we would obtain that $0=p_1 p_2$, but since we have that $p_1,p_2 \in (0,1)$ we can say that we arrived to a contradiction. Thus if $\rho\neq \rho_A \wedge \rho_B$ then $\tr((P_A\wedge P_B)\rho)\neq \tr(P_A\rho_A)\tr(P_B \rho_A)$ for all $P_A,P_B$ local hermitian operators on $\mathcal{H}_A$ and $\mathcal{H}_B$ respectively.
\end{proof}

Thus, all three definitions of uncorrelated states agree for pure states.

%\newpage

\section{Proofs of CPTP-Kraus-Stinespring equivalences} 
\label{sec:prooftheorem}

In this Appendix, we present the complete proofs of Theorem \ref{thm:Sinespring-SupO-CPTP} and the general characterization of quantum operations.

To then proof more easily Theorem \ref{thm:Sinespring-SupO-CPTP}, we first prove the equivalence that holds for general quantum operations. 

\begin{theorem}
\label{thm:Kraus-CPTP}
\textbf{General quantum operations)} 
For a SSR fermionic quantum operation represented by a map $\varphi: \mathcal{R}_{S}^{N} \rightarrow \Gamma_{S}^{M}$ that transforms $N$-mode SSR fermionic states in $\mathcal{R}_{S}^{N}$ to a $M$-mode SSR fermionic operators in $\Gamma_{S}^{M}$, the following statements are equivalent.

\begin{enumerate}
\item (Operator-sum representation.) There exists a set of linear operators $E_k: \mathcal{H}_{S}^{N} \rightarrow \mathbb{C}\mathcal{H}_{S}^{M}$, with $0 \leq \sum_k E_k^{\dagger}E_k \leq \mathbb{I}_{N}$, such that:
\begin{equation}
\varphi(\rho)=\sum_k E_k  \rho E_k^{\dagger}
\end{equation}

\item (Axiomatic formalism.) $\varphi$ fulfills the following properties:
\begin{itemize}
\item $ \tr(\varphi(\rho))$ is a probability, i.e. $0\leq \tr(\varphi(\rho))\leq 1$ for all $\rho\in \mathcal{R}^N_S$.
\item Convex-linear, i.e. $\varphi\left(\sum_i p_i\rho_i \right)=\sum_i p_i \varphi(\rho_i)$ with $p_i$ probabilities and $\rho\in \mathcal{R}^N_S$.
\item $\varphi: \mathcal{R}_{S}^{N} \rightarrow \Gamma_{S}^{M}$ is CP.   
\end{itemize}
\end{enumerate}
\end{theorem}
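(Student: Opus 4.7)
The plan is to prove the equivalence by establishing both directions separately.

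The implication from the operator-sum representation to the axiomatic characterization is the short, computational one. Given $\varphi(\rho) = \sum_k E_k \rho E_k^\dagger$, convex-linearity follows from the bilinearity of conjugation in $\rho$. The trace bound reduces, via cyclicity, to $\tr(\varphi(\rho)) = \tr((\sum_k E_k^\dagger E_k)\rho)$, which lies in $[0,1]$ because $\rho$ is a state and $0 \leq \sum_k E_k^\dagger E_k \leq \mathbb{I}_N$. Complete positivity follows from the factorization identity of Lemma \ref{lema:distribu}, namely $(\varphi \wedge \mathbb{I}_K)(\hat{A}) = \sum_k (E_k \wedge \mathbb{I}_K)\, \hat{A}\, (E_k \wedge \mathbb{I}_K)^\dagger$, so that each summand is the conjugation of the positive operator $\hat{A}$ by a linear operator, hence positive semi-definite.

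For the converse, I would adapt the Choi--Jamio\l{}kowski construction to the fermionic SSR setting. First fix an orthonormal parity-definite basis $\{\ket{i}\}$ of $\mathcal{H}_S^N$ and introduce an ancillary $N$-mode fermionic system $E$ with a mirror basis $\{\ket{i}_E\}$. Since pairing two parity-definite states of equal parity via $\wedge$ yields an even-parity state, the (unnormalized) vector $\ket{\Phi} = \sum_i \ket{i} \wedge \ket{i}_E$ lies in $\mathcal{H}_S^{2N}$ and respects the parity SSR. Define the fermionic Choi operator $\sigma := (\varphi \wedge \mathbb{I}_E)(\proj{\Phi})$; by the CP axiom, $\sigma \geq 0$, and since $\proj{\Phi}$ commutes with total parity while $\varphi \wedge \mathbb{I}_E$ preserves block-parity structure, $\sigma$ admits a spectral decomposition $\sigma = \sum_k \proj{v_k}$ whose eigenvectors $\ket{v_k} \in \mathcal{H}_S^M \wedge \mathcal{H}_S^E$ can be chosen parity-definite. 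Reshape each $\ket{v_k} = \sum_{ij} (E_k)_{ij}\, \ket{i}_M \wedge \ket{j}_E$ into a linear operator $E_k : \mathcal{H}_S^N \to \mathcal{H}_S^M$ with matrix elements $(E_k)_{ij}$, identifying $E$ with the input space. The parity of $\ket{v_k}$ then forces $(E_k)_{ij}$ to vanish unless the parities of $\ket{i}_M$ and $\ket{j}_N$ agree (for even $\ket{v_k}$) or disagree (for odd $\ket{v_k}$), giving $E_k$ exactly one of the two forms allowed by Theorem \ref{thm:blockform}. The identity $\varphi(\rho) = \sum_k E_k \rho E_k^\dagger$ is then verified by direct computation using the wedge-product partial-trace identities of Proposition \ref{prop:pureptssr}, and the bound $\sum_k E_k^\dagger E_k \leq \mathbb{I}_N$ follows from $0 \leq \tr(\varphi(\rho)) \leq \tr(\rho)$ for all states, combined with convex-linearity and Lemma \ref{lema:tracezero} applied within each parity sector.

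The main obstacle I anticipate is tracking the sign factors introduced by the anti-commutation relations when reshaping wedge-product vectors into operators. Unlike the tensor-product case, where $\ket{i} \otimes \ket{j} \leftrightarrow \ketbra{i}{j}$ is sign-free, the wedge product introduces a relative sign depending on the parities of the two factors. Showing that these signs cancel consistently in the final identity $\varphi(\rho) = \sum_k E_k \rho E_k^\dagger$ should hinge on the fact that $\ket{\Phi}$ is an even-parity reference state, and that each $\ket{v_k}$ is parity-definite, which together enforce same-parity pairings throughout the calculation. A secondary subtlety is that the uniqueness argument of Lemma \ref{lema:tracezero} only tests against SSR-respecting operators, but this is sufficient because both sides of the key trace identity are themselves block-diagonal in parity.
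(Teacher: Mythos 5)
Your proposal is correct and follows essentially the same route as the paper: the forward direction by direct computation (cyclicity of the trace and positivity of conjugated operators), and the converse via a fermionic Choi--Jamio\l{}kowski construction on the even-parity reference state $\sum_i \ket{i}\wedge\ket{i}_E$, with the Kraus operators obtained by reshaping the parity-definite eigenvectors of the Choi operator (the paper writes this reshaping as $E_k\ket{\eta}=2^N\sqrt{a_k}\scp{\tilde\eta}{s_k}$ and then checks the four parity cases explicitly, which is exactly the sign-tracking you flag as the main obstacle).
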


\begin{proof}
To proof this equivalence, it will be seen first that 1 implies 2 and latter that 2 implies 1. For 1 implies 2 it has to be seen that if $\varphi(\rho)=\sum_k E_k^{\dagger}\rho E_k$ with the stated properties then the set of axioms is fulfilled. It is clear that the property b) is fulfilled due to the linear properties of operator sum and the distributive property. To proof a) is easy to see that follows from the cyclic property of the trace and the preservation of inequalities by the trace operator and the multiplication by a positive operator. To proof c) we first check that indeed $\varphi: \mathcal{R}_{S}^{N} \rightarrow \Gamma_{S}^{M}$. This can easily seen that follows due to the SSR preservation property of the Kraus operators $E_k$. Now, to proof that is CP, assume that $K\in \mathbb{N}$ and that $L$ is a fermionic Hilbert space of $K$ modes. Then choosing any state $\ket{\psi}\in \mathcal{H}_S^M\wedge L$ we can define for every $E_i$ an unnormalized state $\ket{\phi_i}\equiv (E_i^{\dagger}\wedge \mathbb{I}_K)\ket{\psi}$ where it can be easily checked that $0\leq \scp{\phi_i}{\phi_i}\leq 1$. Now if $A$ is an arbitrary positive operator of the Hilbert space $\mathcal{H}_S^N\wedge L$ we can see that:
\begin{gather}
\bra{\psi}(E_i\wedge\mathbb{I}_K)A(E_i^{\dagger}\wedge\mathbb{I}_K)\ket{\psi}=\bra{\phi_i}A\ket{\phi_i}\geq 0
\end{gather}
where the last step is true for the positivity of $A$ and the norm of $\ket{\phi_i}$. Once we have that, since \begin{gather}
    (\varphi\wedge\mathbb{I}_K)(A)=\sum_i (E_i\wedge \mathbb{I}_K)A(E_i^{\dagger}\wedge \mathbb{I}_K)
\end{gather} 
and the numerable sum of positive elements is positive, it is found that for an arbitrary $K$, an arbitrary state $\ket{\psi}\in\mathcal{H}_S^M\wedge L$ and for an arbitrary positive operator $A$ for $\mathcal{H}_S^N\wedge L$,  $\bra{\psi}(\varphi\wedge \mathbb{I}_K )(A) \ket{\psi}\geq 0$, and therefore  $(\varphi\wedge \mathbb{I}_K )(A)\geq 0$ and thus $\varphi$ is CP. \\

For the opposed implication, we have the map $\varphi$ fulfilling the three axioms. Now let us consider an additional fermionic Hilbert space of $N$ modes, that we will call $L$ and consider the global Hilbert space $\mathcal{H}_S^N\wedge L$. In these Hilbert spaces, orthonormal basis $\{\ket{i_H}\}\in \mathcal{H}_S^N$ and $\{\ket{i_L}\}\in L$ indexed by the same numerable label $i=1,\dots,2^N$ can be chosen. We can choose this basis so that the first $2^{N-1}$ are even SSR states and the last $2^{N-1}$ are odd SSR states, for both spaces.\\

Now we consider the even SSR state for the global system:
\begin{gather}
\ket{\alpha}\equiv\frac{1}{2^N} \sum_{i=1}^{2^N} \ket{i_H}\wedge \ket{i_L}
\end{gather}

Now from this definition, an operator on the global Hilbert space is defined:
\begin{gather}
\sigma \equiv (\varphi\wedge \mathbb{I}_N)(\proj{\alpha})
\end{gather}

Once made this construction, it is known that any SSR state $\ket{\eta}\in \mathcal{H}_S^N$ can be written as $\ket{\eta}=\sum_{j=1}^{2^{N-1}} \eta_j \ket{j_H}$ or $\ket{\eta}=\sum_{j=1+2^{N-1}}^{2^{N}} \eta_j \ket{j_H}$. To each SSR state, an analogue in $L$ is considered:
\begin{gather}
    \ket{\tilde{\eta}}=\sum_{j=1}^{2^{N-1}} \eta_j^* \ket{j_L} \quad \text{or}\quad \ket{\tilde{\eta}}=\sum_{j=1+2^{N-1}}^{2^N} \eta_j^* \ket{j_L}
\end{gather} 
which is also a SSR state on $L$ and with the same parity. For the properties of the wedge product and the definition of $\sigma$ it is found that:
\begin{gather}
\sigma=\frac{1}{2^{2N}}\sum_{i,j=1}^{2^N} \varphi(\ketbra{i_H}{j_H})\wedge \ketbra{i_L}{j_L}
\end{gather}
Under the SSR, it is found that either:
\begin{gather}
\bra{\tilde{\eta}}\sigma\ket{\tilde{\eta}}=\frac{1}{2^{2N}}\sum_{i,j=1}^{2^{N-1}} \varphi(\ketbra{i_H}{j_H})\eta_{j}^*\eta_i \quad \text{or} \quad \bra{\tilde{\eta}}\sigma\ket{\tilde{\eta}}=\frac{1}{2^{2N}}\sum_{i,j=2^{N-1}+1}^{2^{N}} \varphi(\ketbra{i_H}{j_H})\eta_{j}^*\eta_i
\end{gather}
Now applying the axiom 2 it is found $\bra{\tilde{\eta}}\sigma\ket{\tilde{\eta}}\cdot 2^{2N}=\varphi(\ketbra{\eta}{\eta})$. So from $\sigma$ it can be recovered $\varphi$.\\

Now, since $\varphi$ for the axiom 3 is CP, in particular, we find that $\sigma$ must be a positive operator. This fact implies that we can have a diagonal decomposition of the form:
\begin{gather}
\sigma=\sum_{i=1}^{2^{M+N}} a_i \proj{s_i}
\end{gather}
where $a_i\geq 0$. Now for the axiom 3, it implies that $\varphi\wedge \mathbb{I}_N$ is an operator that preserves the parity SSR for states in $\mathcal{H}_S^N\wedge L$. Since $\proj{\alpha}$ is an SSR state, $\sigma$ is also an SSR operator. This fact implies that we can choose $\ket{s_i}$ so that they are SSR states, and order them so that the first $2^{M+N-1}$ are even, and the rest odd.\\

Once this has been seen, we define the operators $E_k$ using the mentioned decomposition above. Given a SSR state of $\mathcal{H}_S^N$, the action of $E_k$ is defined as: $E_k(\ket{\eta})=2^N \sqrt{a_k}\bra{\tilde{\eta}}\ket{s_k}$.\\

Now, given a $\rho$ of $\mathcal{H}_S^N$ it can be written as $\rho=\sum_i p_i \proj{\eta_i}$. Then we have that:
\begin{gather}
\varphi(\rho)= \sum_i p_i \varphi(\proj{\eta_i})=\sum_i p_i 2^{2N}\bra{\tilde{\eta_i}}\sigma\ket{\tilde{\eta_i}}=\sum_i p_i 2^{2N} \bra{\tilde{\eta_i}}\left(\sum_{k=1}^{2^{M+N}} a_k\proj{s_k}\right)\ket{\tilde{\eta_i}}=\nonumber \\ =\sum_i p_i \sum_{k=1}^{2^{M+N}} E_k \ket{\eta_i}\bra{\eta_i}E_k^{\dagger}=\sum_k E_k\rho E_k^{\dagger}
\end{gather}

The property that $0\leq \sum_k E_k^{\dagger}E_k\leq \mathbb{I}_N$ follows directly from the axiom 1 and the last statement. So is just left to see that $E_k$ preserves the parity SSR structure. Since for any $\ket{\eta}\in\mathcal{H}_S^N$, $E_k$ acts on it as 
\begin{gather}
    E_k \ket{\eta}=2^N \sqrt{a_k}\bra{\tilde{\eta}}\ket{s_k}
\end{gather} 
Since we have seen that $\ket{s_k}$ is super selected on the global space, it can be decomposed as:
\begin{gather}
    \ket{s_k}=\sum_{j=1}^{2^{M-1}} \sum_{l=1}^{2^{N-1}} b_{k,j,l}\ket{j_H}\wedge\ket{l_L}+\sum_{j=1+2^{M-1}}^{2^{M}}\sum_{l=1+2^{N-1}}^{2^{N}} c_{k,j,l}\ket{j_H}\wedge\ket{l_L} \medspace \text{or} \\
    \ket{s_k}=\sum_{j=1}^{2^{M-1}} \sum_{l=1+2^{N-1}}^{2^{N}} d_{k,j,l}\ket{j_H}\wedge\ket{l_L}+\sum_{j=1+2^{M-1}}^{2^{M}}\sum_{l=1}^{2^{N-1}} e_{k,j,l}\ket{j_H}\wedge\ket{l_L}
\end{gather}

Then, if $\ket{\eta}$ is even,  $\ket{\tilde{\eta}}$ is even and it is found:
\begin{gather}
\bra{\tilde{\eta}}\ket{s_k}=\sum_{j=1}^{2^{M-1}} \sum_{l=1}^{2^{N-1}} b_{k,j,l}\ket{j_H}\bra{\tilde{\eta}}\ket{l_L}=\sum_{j=1}^{2^{M-1}} \tilde{b}_{k,j} \ket{j_H}\in \mathbb{C}\mathcal{H}_S^M \\ \text{or}\quad \bra{\tilde{\eta}}\ket{s_k}=\sum_{j=1+2^{M-1}}^{2^{M}}\sum_{l=1}^{2^{N-1}} c_{k,j,l}\ket{j_H}\bra{\tilde{\eta}}\ket{l_L}=\sum_{j=1+2^{M-1}}^{2^{M}} \tilde{e}_{k,j} \ket{j_H} \in \mathbb{C}\mathcal{H}_S^M
\end{gather}
And if $\ket{\eta}$ is odd, using the same arguments it is found:
\begin{gather}
\bra{\tilde{\eta}}\ket{s_k}=\sum_{j=1+2^{M-1}}^{2^{M}}\sum_{l=1+2^{N-1}}^{2^N} c_{k,j,l}\ket{j_H}\bra{\tilde{\eta}}\ket{l_L}= \sum_{j=1+2^{M-1}}^{2^{M}}\tilde{c}_{k,j} \ket{j_H}\in \mathbb{C}\mathcal{H}_S^M \\ \text{or}\quad \bra{\tilde{\eta}}\ket{s_k}= \sum_{j=1}^{2^{M-1}} \sum_{l=1+2^{N-1}}^{2^N} d_{k,j,l}\ket{j_H}\bra{\tilde{\eta}}\ket{l_L}=\sum_{j=1}^{2^{M-1}} \tilde{d}_{k,j} \ket{j_H} \in \mathbb{C}\mathcal{H}_S^M
\end{gather}
Therefore all the properties are checked, and the implication is proven.
\end{proof}

\begin{theorem*}
\textbf{\ref{thm:Sinespring-SupO-CPTP}. General quantum channels)} 
For a SSR fermionic quantum channel represented by a map $\varphi: \mathcal{R}_{S}^{N} \rightarrow \mathcal{R}_{S}^{N}$  the following statements are equivalent.

\begin{enumerate}
\item (Operator-sum representation.) There exists a set of SSR linear operators $E_k: \mathcal{H}_{S}^{N} \rightarrow \mathbb{C}\mathcal{H}_{S}^{N}$, where $\sum_k E_k^{\dagger}E_k = \mathbb{I}_{N}$, such that:
\begin{equation}
\varphi(\rho)=\sum_k E_k  \rho E_k^{\dagger}
\end{equation}

\item (Axiomatic formalism.) $\varphi$ fulfills the following properties:
\begin{itemize}
\item Is trace preserving, i.e. $\tr(\varphi(\rho))= 1$ for all $\tr(\rho)= 1$ and $\rho\in\mathcal{R}_{S}^{N}$.
\item Convex-linear, i.e. $\varphi\left(\sum_i p_i\rho_i \right)=\sum_i p_i \varphi(\rho_i)$ with $p_i$ probabilities.
\item $\varphi: \mathcal{R}_{S}^{N} \rightarrow \mathcal{R}_{S}^{N}$ is CP.  
\end{itemize}
\item (Stinespring dilation.) There exists a fermionic $K$-mode environment ($L$) with Hilbert space $L=\mathcal{H}_{S}^{K}$ and $K\geq N$, a SSR pure state $\omega=\proj{\psi} \in L$ and a parity SSR respecting unitary operator $\hat{U}$ that acts on $\mathcal{H}_{S}^{N}\wedge L$, such that:
\begin{equation}
\varphi(\rho)=\tr_{L}(\hat{U}(\rho\wedge \omega)\hat{U}^{\dagger}), \qquad \forall \rho\in \mathcal{R}_{S}^{N}.
\end{equation}
\end{enumerate}
\end{theorem*}

\begin{proof}
First, we will proof the implication: 3 implies 1.\\

We choose an orthonormal SSR basis on $E$ denoted by $(f_i)$ such that the first $2^{K-1}$ modes are even, and the last $2^{K-1}$ are odd. Then we have that, since the SSR is respected by all the operators the partial trace can be written as 
\begin{gather}
\varphi(\rho)=\tr_L (U(\rho\wedge\proj{\psi})U^{\dagger})=\sum_{i}  \bra{f_i}_{L} U(\rho \wedge\proj{\psi})U^{\dagger})\ket{f_i}_L
\end{gather}
where $\ket{f_i}_L$ is an element that only acts on the subspace $L$ of $\mathcal{H}_S^N\wedge L$; it can be seen as $\mathbb{I}_H\wedge \ket{f_i}$. This equality holds due to properties of the partial trace for SSR operators, exposed in Subsection \ref{subsec:pt} . \\

Using this terminology and the fact that since $\rho$ is a SSR state, then it can be seen that :
\begin{gather}
\rho \wedge \proj{\psi}=\ket{\psi}_L \rho \bra{\psi}_L
\end{gather}
 And therefore we have:
 
 \begin{gather}
 \varphi(\rho)= \sum_{i} \bra{f_i}_{L} U(\rho \wedge\proj{\psi})U^{\dagger})\ket{f_i}_L=\sum_{i} \bra{f_i}_{L} U(\ket{\psi}_L \rho \bra{\psi}_L)U^{\dagger})\ket{f_i}_L=\nonumber \\ = \sum_{i} (\bra{f_i}_{L} U\ket{\psi}_L) \rho (\bra{\psi}_L U^{\dagger}\ket{f_i}_L)
 \end{gather}
Now if we define $E_{i}=\bra{f_i}_{L} U\ket{\psi}_L$, it is indeed a linear operator of the space $\mathcal{H}_S^N$. And it is found that:
\begin{gather}
E_{i}^{\dagger}=\bra{\psi}_L U^{\dagger}\ket{f_i}_L
\end{gather}
Therefore, it is proved that $\varphi(\rho)=\sum_{i} E_{i} \rho E_{i}^{\dagger}$. Now the two other properties of the operators $E_j$ have to be seen. First, if we compute:
\begin{gather}
\sum_j E_j^{\dagger}E_j=\sum_{j} \bra{\psi}_L U^{\dagger}\ket{f_j}_L \bra{f_j}_{L} U\ket{\psi}_L=\nonumber \\ =\bra{\psi}_L U^{\dagger}\sum_j \ket{f_j}_L \bra{f_j}_{L} U\ket{\psi}_L =\nonumber \\=\bra{\psi}_L U^{\dagger}U\ket{\psi}_L=\bra{\psi}_L \mathbb{I}_N\wedge \mathbb{I}_K\ket{\psi}_L=\mathbb{I}_N
\end{gather}
the last property that has to be checked is that $\forall j$ if $\ket{\eta}$ is a SSR state then $E_j\ket{\eta}$ is also a SSR state. Therefore lets suppose that $\eta$ is a SSR state. Now since $U$ is a unitary that preserves SSR, it can only take the diagonal form, due to Theorem \ref{thm:Unitary}.\\

In order to make the notation lighter lets assume that the ordering is clear, and that if it is denoted a state by $\ket{E_i}$ it means that is an even state of the corresponding space, and if $\ket{O_j}$ then it is odd, and the sets where they belong conform an orthonormal basis. Then any SSR unitary acting in a space of $N+K$ modes can be decomposed as:
\begin{gather}
\hat{U}=\sum_{i,j,k,l} a_{i,j,k,l} \ket{E_i}\wedge \ket{E_j}\bra{E_k}\wedge \bra{E_l}+ b_{i,j,k,l} \ket{E_i}\wedge \ket{E_j}\bra{O_k}\wedge \bra{O_l}+\nonumber \\+c_{i,j,k,l}\ket{O_i}\wedge \ket{O_j}\bra{E_k}\wedge \bra{E_l}+d_{i,j,k,l}\ket{O_i}\wedge \ket{O_j}\bra{O_k}\wedge \bra{O_l}+\nonumber \\+A_{i,j,k,l} \ket{E_i}\wedge \ket{O_j}\bra{E_k}\wedge \bra{O_l}+ B_{i,j,k,l} \ket{E_i}\wedge \ket{O_j}\bra{O_k}\wedge \bra{E_l}+\nonumber \\+C_{i,j,k,l}\ket{O_i}\wedge \ket{E_j}\bra{E_k}\wedge \bra{O_l}+D_{i,j,k,l}\ket{O_i}\wedge \ket{E_j}\bra{O_k}\wedge \bra{E_l} 
\end{gather}

The initial SSR environment state $\ket{\psi}$ can either be even or odd. Moreover, for every $E_j$, the corresponding $\ket{f_j}$ can also be even or odd. Therefore there are four different cases to be taken into account. 

\begin{gather}
\ket{\psi} \in \text{even   :} \nonumber \\ 
U \ket{\psi}_L=\sum_{i,j,k,l} a_{i,j,k,l} \ket{E_i}\wedge \ket{E_j}\bra{E_k} \scp{E_l}{\psi}+ c_{i,j,k,l} \ket{O_i}\wedge \ket{O_j}\bra{E_k} \scp{E_l}{\psi}+ \nonumber \\ +B_{i,j,k,l} \ket{E_i}\wedge \ket{O_j}\bra{O_k} \scp{E_l}{\psi}+D_{i,j,k,l} \ket{O_i}\wedge \ket{E_j}\bra{O_k} \scp{E_l}{\psi}  \\
\ket{\psi} \in \text{odd  :} \nonumber \\ 
U \ket{\psi}_L=\sum_{i,j,k,l} b_{i,j,k,l} \ket{E_i}\wedge \ket{E_j}\bra{O_k} \scp{O_l}{\psi}+ d_{i,j,k,l} \ket{O_i}\wedge \ket{O_j}\bra{O_k} \scp{O_l}{\psi}+ \nonumber \\ + A_{i,j,k,l} \ket{E_i}\wedge \ket{O_j}\bra{E_k} \scp{O_l}{\psi}+C_{i,j,k,l} \ket{O_i}\wedge \ket{E_j}\bra{E_k} \scp{O_l}{\psi}
\end{gather}
Thus, the four combinations end up giving:
\begin{gather}
\ket{\psi} \in \text{even ,} \medspace \ket{f_{i'}} \in \text{even :} \quad \qquad \bra{f_{i'}}_L U \ket{\psi}_L=\sum_{i,j,k,l} a_{i,j,k,l} \scp{f_{i'}}{E_j} \scp{E_l}{\psi} \ketbra{E_i}{E_k} +D_{i,j,k,l} \scp{f_{i'}}{E_j} \scp{E_l}{\psi} \ketbra{O_i}{O_k} \\
\ket{\psi} \in \text{even ,} \medspace \ket{f_{i'}} \in \text{odd :} \quad \qquad \bra{f_{i'}}_L U \ket{\psi}_L=\sum_{i,j,k,l} c_{i,j,k,l} \scp{f_{i'}}{O_j} \scp{E_l}{\psi} \ketbra{O_i}{E_k} +B_{i,j,k,l} \scp{f_{i'}}{O_j} \scp{E_l}{\psi} \ketbra{E_i}{O_k}\\
\ket{\psi} \in \text{odd ,} \medspace \ket{f_{i'}} \in \text{even :} \quad \qquad \bra{f_{i'}}_L U \ket{\psi}_L=\sum_{i,j,k,l} b_{i,j,k,l} \scp{f_{i'}}{E_j} \scp{O_l}{\psi} \ketbra{E_i}{O_k} +C_{i,j,k,l} \scp{f_{i'}}{E_j} \scp{O_l}{\psi} \ketbra{O_i}{E_k}\\
\ket{\psi} \in \text{odd ,} \medspace \ket{f_{i'}} \in \text{odd :} \quad \qquad \bra{f_{i'}}_L U \ket{\psi}_L=\sum_{i,j,k,l} d_{i,j,k,l} \scp{f_{i'}}{O_j} \scp{O_l}{\psi} \ketbra{O_i}{O_k} +A_{i,j,k,l} \scp{f_{i'}}{O_j} \scp{O_l}{\psi} \ketbra{E_i}{E_k} 
\end{gather}

Thus we see that the $E_{i'}$ operators are linear SSR operators as established in Theorem \ref{thm:blockform}. Hence the implication is done. We observe that for the statement to hold is necessary to have the possibility of having anti-diagonal Kraus operators and that we can achieve the full generality of forms of the Kraus operators by choosing any $\ket{\psi}$.\\

Now lets proof the reverse, 1 implies 3, that follows from the immense amounts of degrees of freedom that one has to choose a unitary matrix.\\

Consider a map $\varphi$ from $\mathcal{H}_S^N$ to itself such that $\phi(\rho)= \sum_k E_k \rho E_k^\dagger$ for all SSR $\rho$, with $E_k$ a set of SSR linear operators such that $\sum_k E_k^\dagger E_k=\mathbb{I}_N$. We now consider a fermionic finite space generated by a number of modes $K$ equal to the number of Kraus operators $E_k$. We denote this new fermionic space by $L$, and by construction its dimension is $2^K$. We denote the elements of the canonical SSR basis $\mathcal{B}_L$ by $\{\ket{e_j}\}_{j=1}^{2^N}$. Once this is done we consider the following map:
\begin{gather}
V: \mathcal{H}_S^N \wedge \mathbb{C} \ket{e_1} \longrightarrow \mathcal{H}\wedge L \nonumber \\ \ket{\psi} \wedge \ket{e_1} \longmapsto \sum_{k} E_k \ket{\psi} \wedge \omega_k 
\end{gather}
where $\omega_k$'s choice is conditioned to how transforms the corresponding $E_k$ the parity. If $E_k$ is block diagonal then an even state is selected, and an odd state is selected if $E_k$ flips parities by being block anti-diagonal. Is important to point out that at each time a different element of the orthonormal basis $\mathcal{B}_L$ is chosen. This choice can always be made due to the fact that $K\leq 2^{K-1}$ always. This procedure ensures that $V$ preserves the parity sending even states to even states and odd states to odd states. This follows since $\ket{e_1}=\ket{\Omega}_L$, so it is an even state.\\

Since this map is an isometry, we can extend it to a unitary map from $\mathcal{H}_S^N\wedge L$ to itself. Since the restriction is weak, we have many degrees of freedom left. Given that the map preserves parity is not difficult to check that we can choose the extension to preserve parity by sending even states to even states and odd states to odd states. \\

So with this reasoning we obtain a unitary operator $U$ that acts on $\mathcal{H}_S^N\wedge L$ and that preserves parity. Now we are in conditions to claim that if we choose $\omega=\proj{e_1}=\proj{\Omega}$ the statement holds. It just has to be calculated: 
\begin{gather}
\tr_{L}\left(U\left(\rho\wedge\omega\right)U^\dagger\right)=\tr_{L}\left(U\left(\sum_i p_i \proj{\psi_i}\wedge\proj{e_1}\right)U^\dagger\right)=\sum_i p_i \tr_{E}\left(U\left(\ket{\psi_i}\wedge\ket{e_1}\right)\left(\bra{\psi_i}\wedge\bra{e_1}\right)U^\dagger\right)=\nonumber\\=\sum_i p_i \tr_{L}\left(V\left(\ket{\psi_i}\wedge\ket{e_1}\right)\left(\bra{\psi_i}\wedge\bra{e_1}\right)V^\dagger\right)=\sum_i p_i \tr_{L}\left(\left(\sum_k E_k\ket{\psi_i} \wedge \omega_k\right)\left(\sum_{k'} E_{k'}\ket{\psi_i}\wedge \omega_{k'}\right)^\dagger\right)=\nonumber\\=\sum_i\sum_k \sum_{k'}  p_i \tr_{L}\left(\left(E_k\proj{\psi_i}E_{k'}^\dagger \right)\wedge \left(\omega_k \omega_{k'}^\dagger \right)\right)=\sum_i\sum_k \sum_{k'}  p_i \left(E_k\proj{\psi_i}E_{k'}^\dagger \right) \left(\omega_{k'}^\dagger\omega_k  \right)= \nonumber \\=\sum_i\sum_k \sum_{k'}  p_i \left(E_k\proj{\psi_i}E_{k'}^\dagger \right) \delta_{k k'}=\sum_k E_k \left(\sum_i p_i \proj{\psi_i}\right) E_{k}^\dagger = \sum_k E_k \rho E_{k}^\dagger
\end{gather}
Just as desired. Thus, the implication holds.\\

Finally, the equivalence between statements 1 and 2 follows directly from Theorem \ref{thm:Kraus-CPTP}. Redoing the proof it can be seen easily that the trace-preserving property becomes equivalent to impose $\sum_k E_k^\dagger E_k = \mathbb{I}_N$. 
\end{proof}

\end{document}